\def\UrlSpecials{\do\~{\kern -.15em\lower .7ex\hbox{~}\kern .04em}} \catcode`~=13 
\newcommand{\nn}{\nonumber}
\newcommand{\calA}{\mathcal{A}}
\newcommand{\calC}{\mathcal{C}}
\newcommand{\calD}{\mathcal{D}}
\newcommand{\calE}{\mathcal{E}}
\newcommand{\calP}{\mathcal{P}}
\newcommand{\calT}{\mathcal{T}}
\newcommand{\calX}{\mathcal{X}}
\newcommand{\calY}{\mathcal{Y}}
\newcommand{\bx}{\mathbf{x}}
\newcommand{\by}{\mathbf{y}}
\newcommand{\rmd}{\mathrm{d}}
\newcommand{\rme}{\mathrm{e}}
\newcommand{\bbE}{\mathbb{E}}
\newcommand{\bbN}{\mathbb{N}}
\newcommand{\bbP}{\mathbb{P}}
\newcommand{\bbQ}{\mathbb{Q}}
\newcommand{\bbR}{\mathbb{R}}
\DeclareMathAlphabet{\mathbsf}{OT1}{cmss}{bx}{n}
\DeclareMathAlphabet{\mathssf}{OT1}{cmss}{m}{sl}
\newcommand{\rvd}{\mathsf{d}}
\newcommand{\rvf}{\mathsf{f}}
\DeclareSymbolFont{bsfletters}{OT1}{cmss}{bx}{n}  
\DeclareSymbolFont{ssfletters}{OT1}{cmss}{m}{n}
\DeclareMathSymbol{\bsfGamma}{0}{bsfletters}{'000}
\DeclareMathSymbol{\ssfGamma}{0}{ssfletters}{'000}
\DeclareMathSymbol{\bsfDelta}{0}{bsfletters}{'001}
\DeclareMathSymbol{\ssfDelta}{0}{ssfletters}{'001}
\DeclareMathSymbol{\bsfTheta}{0}{bsfletters}{'002}
\DeclareMathSymbol{\ssfTheta}{0}{ssfletters}{'002}
\DeclareMathSymbol{\bsfLambda}{0}{bsfletters}{'003}
\DeclareMathSymbol{\ssfLambda}{0}{ssfletters}{'003}
\DeclareMathSymbol{\bsfXi}{0}{bsfletters}{'004}
\DeclareMathSymbol{\ssfXi}{0}{ssfletters}{'004}
\DeclareMathSymbol{\bsfPi}{0}{bsfletters}{'005}
\DeclareMathSymbol{\ssfPi}{0}{ssfletters}{'005}
\DeclareMathSymbol{\bsfSigma}{0}{bsfletters}{'006}
\DeclareMathSymbol{\ssfSigma}{0}{ssfletters}{'006}
\DeclareMathSymbol{\bsfUpsilon}{0}{bsfletters}{'007}
\DeclareMathSymbol{\ssfUpsilon}{0}{ssfletters}{'007}
\DeclareMathSymbol{\bsfPhi}{0}{bsfletters}{'010}
\DeclareMathSymbol{\ssfPhi}{0}{ssfletters}{'010}
\DeclareMathSymbol{\bsfPsi}{0}{bsfletters}{'011}
\DeclareMathSymbol{\ssfPsi}{0}{ssfletters}{'011}
\DeclareMathSymbol{\bsfOmega}{0}{bsfletters}{'012}
\DeclareMathSymbol{\ssfOmega}{0}{ssfletters}{'012}
\newcommand{\tilp}{\tilde{p}}
\newcommand{\tilX}{\tilde{X}}
\newcommand{\eps}{\varepsilon}
\DeclareMathOperator*{\argmin}{arg\,min}
\newcommand{\bone}{\mathbf{1}}
\newtheorem{theorem}{Theorem} 
\newtheorem{lemma}[theorem]{Lemma}
\newtheorem{remark}{Remark}
\newcommand{\qednew}{\nobreak \ifvmode \relax \else
      \ifdim\lastskip<1.5em \hskip-\lastskip
      \hskip1.5em plus0em minus0.5em \fi \nobreak
      \vrule height0.75em width0.5em depth0.25em\fi}
\begin{document}
\flushbottom

\title{Asymmetric Evaluations of Erasure and Undetected Error Probabilities } 
\author{Masahito Hayashi$^\dagger$, {\em Senior Member, IEEE}$\,\,\, $   $\,\,$ Vincent Y.~F.\ Tan$^\ddagger$, {\em Senior Member, IEEE} 
\thanks{$^\dagger$ M.~Hayashi is with the  Graduate School of Mathematics, Nagoya University, and the Center for Quantum Technologies (CQT),  National University of Singapore   (Email: {masahito@math.nagoya-u.ac.jp}).   }  \thanks{$\ddagger$ V.~.Y.~F. Tan is with the Department of Electrical and Computer Engineering and the Department of Mathematics, National University of Singapore (Email:  {vtan@nus.edu.sg}).   } \thanks{This paper was presented in part at the 2015 International Symposium on Information Theory in Hong Kong. } \thanks{
MH is partially supported by a MEXT Grant-in-Aid for Scientific Research (A) No.\ 23246071. 
MH is also partially supported by the National Institute of Information and Communication Technology (NICT), Japan.
The Centre for Quantum Technologies is funded by the Singapore Ministry of Education and the National Research Foundation as part of the Research Centres of Excellence programme.
VT's  research is supported by NUS   grants   R-263-000-A98-750/133.} \thanks{ Copyright (c) 2014 IEEE. Personal use of this material is permitted.  However, permission to use this material for any other purposes must be obtained from the IEEE by sending a request to pubs-permissions@ieee.org.}}


\maketitle

\begin{abstract}
The problem of channel coding with the erasure option is revisited for discrete memoryless channels. The interplay between the code rate, the undetected and total error probabilities is characterized. Using the information spectrum method, a sequence of codes of increasing blocklengths $n$ is designed to illustrate this tradeoff. Furthermore, for additive discrete memoryless channels with uniform input distribution, we establish that our analysis   is tight with respect to the ensemble average. This is done by analysing the ensemble performance in terms of a tradeoff between the code rate, the undetected and the  total errors. This tradeoff is parametrized by the threshold in a generalized likelihood ratio test. Two asymptotic regimes are studied. First, the code rate   tends to the capacity of the channel at a rate slower than $n^{-1/2}$ corresponding to the moderate deviations regime. In this case, both error probabilities decay subexponentially and asymmetrically. The precise decay rates are characterized. Second, the code rate  tends  to capacity at a rate of $n^{-1/2}$. In this case, the total error probability is asymptotically a positive constant while the undetected error probability decays as $\exp(- bn^{ 1/2})$ for some $b>0$. The proof techniques  involve applications of   a modified (or ``shifted'') version  of the G\"artner-Ellis theorem  and the type class enumerator method to characterize the asymptotic behavior of a sequence of cumulant generating functions. 
\end{abstract}

\begin{IEEEkeywords}
Channel coding, Erasure decoding, Moderate deviations, Second-order coding rates, Large deviations, G\"artner-Ellis theorem
\end{IEEEkeywords}

\section{Introduction}  \label{sec:intro}
\subsection{Background}
In channel coding, we are interested in designing a code that can reliably decode a message  sent through a noisy channel. However, when the effect of the noise is so large such that the  decoding system is not sufficiently confident of which message was sent, it is preferable to declare that an {\em erasure} event has occurred. In this way, the system avoids declaring that an incorrect message was sent,  a costly mistake, and may use an {\em automatic repeat request} (ARQ) protocol or {\em decision feedback} system to resend the intended message. This paper revisits the information-theoretic limits of channel coding with the erasure option. 

It has long been known since Forney's seminal paper on decoding with the erasure option and list decoding \cite{Forney68}  that the optimum decoder for a given codebook has the following structure: It  outputs the message   for which the likelihood of that message given the channel output exceeds a multiple $\exp(nT)$ (where $n$ is the blocklength of the code) of the sum of all the other likelihoods.  This is a generalization of the  likelihood ratio test which underlies  the Neyman-Pearson lemma for binary hypothesis testing.  For erasure decoding, the threshold $T$ is   set to a positive number so that the decoding regions are disjoint and furthermore, the erasure region is non-empty. Among our other contributions in this paper, we examine other possibly suboptimal decoding regions. 

If the threshold $T$ in Forney's decoding regions   is a fixed positive number not tending to zero, then it is known from his analysis  \cite{Forney68} and many follow-up works \cite{sgb, tel94, Blinovsky, Moulin09, merhav08,merhav13,merhav14, somekh11, merhav_FnT} that both the undetected error probability and the erasure probability decay exponentially fast in $n$ for an appropriately chosen codebook. Typically, and following in the spirit of Shannon's seminal work~\cite{Shannon48}, the codebook is randomly chosen. The constant $T$   serves to tradeoff between the two error probabilities.  This exponential decay in both error probabilities   corresponds to {\em large deviations} analysis. However, there is substantial motivation to study other asymptotic regimes to gain greater insights about the fundamental limits of  channel codes with the erasure option. This corresponds to setting the threshold $T$ to be a positive sequence that tends to zero as the  blocklength $n$ grows. 

Strassen~\cite{Strassen} pioneered the {\em fixed error probability} or {\em second-order asymptotic} analysis for  discrete memoryless channels (DMCs) without the erasure option.  There have been prominent works recently in this area by Hayashi~\cite{Hayashi09} and Polyanskiy, Poor and Verd\'u \cite{PPV10}. See \cite{TanBook} for a review.  Altu\u{g}  and Wagner~\cite{altug14b}  pioneered  the {\em moderate deviations} analysis for DMCs and Tan~\cite{Tan12} considered the rate-distortion counterpart for discrete and Gaussian sources. Second-order and moderate deviations analyses respectively correspond to operating at coding rates that have a deviation of  $\Theta(n^{-1/2})$ and  $\omega(n^{-1/2})$ from  the first-order fundamental limit, i.e., the capacity or the rate-distortion function.  Tan and Moulin~\cite{TanMoulin14} recently studied the information-theoretic limits of channel coding with erasures where both the undetected and total error probabilities are fixed at positive constants.

\subsection{Main Contributions} \label{sec:main_con}
In this work, we study different regimes for the errors and erasure problems. In particular, we analyze the  {\em moderate deviations}~\cite{PV10a, altug14b} and {\em mixed} regimes. For moderate deviations, the code rate tends towards capacity but deviates from it by a sequence that grows slower than $n^{-1/2}$. For the mixed regime, the undetected error is designed to decay as $\exp(- b n^{1/2})$ for some $b>0$, but the total error is asymptotically a positive constant governed by the Gaussian distribution. Our main contributions are detailed as follows.

First, for the achievability  results, we draw on ideas from information spectrum analysis \cite{Han10} to  present a sequence of  block  codes with the erasure option that demonstrate the above-mentioned asymmetric tradeoff between the undetected  and total error probabilities. 

Second, and equally importantly, we show that  our so-constructed codes above are {\em tight with respect to the ensemble average}, or more succinctly,  {\em ensemble-tight} for additive DMCs with the uniform random coding distribution.  This means that   our  ensemble evaluation of the two error probabilities   (averaged over the random codebook)  is tight in some asymptotic sense to be made precise in the statements.  To prove these statements, we consider Forney's decoding regions \cite{Forney68} where the threshold parameter $T$ depends on $n$ and, in particular, is set to be a decaying sequence $\Theta(n^{-t})$ where $t \in (0,1/2 ]$. We show that  both the undetected and total error probabilities decay subexponentially (i.e., the  moderate deviations regime~\cite{PV10a, Tan12,altug14b, AWK13,TWH14}) and asymmetrically in the sense that their decay rates are different. These decay rates depend on $t$ and also the implied constant the  $\Theta(n^{-t})$ notation. In fact, we characterize the precise tradeoff between these error probabilities, the code rate as well as the threshold. Our technique, which is based on the type class enumerator method~\cite{merhav08, merhav_FnT, merhav13,merhav14, somekh11},  carries over to the  mixed  regime  in  which the total error probability is asymptotically a constant \cite{Strassen,PPV10, Hayashi09} while the undetected error decays as $\exp(- b n^{1/2})$. Just as for the pure moderate deviations setting, we characterize the precise tradeoffs between the different parameters in the system. The decay rates turn out to be the same as for the achievability results showing that the   decoder designed based on information spectrum analysis is, in fact, asymptotically optimal, i.e., Forney's decoding regions (together with our analyses)  trade off the Pareto-optimal curve between the two error probabilities.

Finally, an auxiliary contribution of the present work is a new mathematical tool. We develop  a modified (``shifted'')  version of the G\"artner-Ellis theorem~\cite[Theorem 2.3.6]{Dembo}  to prove our results concerning the asymptotics of the undetected and total error probabilities under both the moderate and mixed regimes. This generalization, presented in Theorem \ref{thm7}, appears to be distinct from other generalizations of the G\"artner-Ellis theorem in the literature  (e.g., \cite{chen00, joutard}). It turns out to be very useful for our application and may be of independent interest in other information-theoretic settings. A self-contained proof containing some novel proof techniques  is contained in Appendix \ref{app:ge}. 

\subsection{Paper Organization}
This paper is organized as follows: In Section~\ref{sec:notation}, we state our notation and the problem setup precisely. The main results are detailed in Section~\ref{sec:results} where the direct results are in Section~\ref{sec:direct} and the ensemble converse results in Section~\ref{sec:ens_results}. The proofs of the main results are deferred to Section~\ref{sec:prfs}. We conclude our discussion and suggest avenues for future work in Section \ref{sec:concl}. The appendices contain some auxiliary mathematical tools including the modification  of the G\"artner-Ellis theorem for general orders, which we use to estimate the both  errors. This is presented as Theorem~\ref{thm7} in   Appendix \ref{app:ge}.

\section{Notation and Problem Setting}  \label{sec:notation}

\subsection{Notation}
In this paper, we adopt standard notation in information theory, particularly in the book by Csisz\'ar and K\"orner~\cite{Csi97}. Random variables are denoted by upper case (e.g., $X$) and their realizations by lower case (e.g., $x$). All alphabets of the random variables are finite sets and are denoted by calligraphic font (e.g., $\calX$). A sequence of letters from the $n$-fold Cartesian product $\calX^n$ is denoted by boldface $\bx = (x_1,\ldots, x_n)$.  A sequence of random variables is denoted using a  superscript, i.e., $X^n = (X_1,\ldots, X_n)$. Information-theoretic quantities are denoted in the usual way, e.g.,  $H(P)$ is the entropy of the random variable $X$ with distribution $P$.   The set of all probability mass functions on a finite set $\calX$ is denoted by $\calP(\calX)$ while the subset of types (empirical distributions) with denominator $n$ is denoted as $\calP_n(\calX)$. The set of all sequences with type $P \in\calP_n(\calX)$, the {\em type class}, is denoted as $\calT_P = \{\bx=(x_1,\ldots, x_n)\in\calX^n:\sum_{i=1}^n\bone\{ x_i=a\}=nP(a),\forall\, a\in\calX\}$. The $\ell_1$ (twice the variational) distance between $P,Q\in\calP(\calX)$ is denoted as $\|P-Q\|_1=\sum_{x \in\calX} |P(x)-Q(x)|$.  All logs and exps are with respect to the natural base $\rme$.

\subsection{Discrete Memoryless Channels (DMCs)}
We consider a DMC  $W$ with input alphabet $\calX$ and output alphabet $\calY$. This is denoted as $W:\calX\to\calY$. By memoryless (and stationary), this means that given a sequence of input letters $\bx =(x_1,\ldots, x_n)\in\calX^n$ the probability of the  output letters $\by  = (y_1,\ldots, y_n)\in\calY^n$ is the product $\prod_{i=1}^n W(y_i|x_i)$.   The capacity of the DMC is denoted as 
\begin{equation}
C=C(W):=\max\{ I(P_X,W) : P_X\in\calP(\calX)\}.
\end{equation}
Let the set of capacity-achieving input distributions be 
\begin{equation}
\Pi=\Pi(W) := \{P_X\in\calP(\calX): I(P_X,W) = C(W)\}.
\end{equation}
This set is  compact. 
\subsection{Additive DMCs} \label{sec:additive}
A DMC is called  {\em additive} if  $\calX=\calY=\{0,1,\ldots, d-1\}$  for some $d\in\bbN$ and there exists a probability mass function $P \in\calP(\calX)$ with positive entries $P(x) >0,x\in\calX$ such that 
\begin{equation}
W(y|x) = P(y-x) \label{eqn:additive}
\end{equation}
where the $-$ in \eqref{eqn:additive} is understood to be modulo $d$, i.e., the subtraction operation in the {\em additive group} $(\{0,1,\ldots, d-1\},+)$. In other words, $Y=X+Z$ (mod $d$) where the noise $Z$ has distribution $P$. Consequently, $P$ is also called the  {\em noise distribution}. The   capacity of the additive channel $W$ is $C = \log d - H(P)$ and which is achieved (possibly non-uniquely) by  the uniform distribution on  $\{ 0,1,\ldots, d-1\}$~\cite[Theorem 7.2.1]{Cov06}. This class of channels, while somewhat restrictive, includes important DMCs such as the binary symmetric channel (BSC)  where $d = 2$ and $P(0)=q$ and $P(1) = 1-q$ and $q \in (0,1)$ is the crossover probability. Also, additive DMCs simplify analyses in other problems in Shannon theory such as in the error exponent analysis of the performance of linear codes~\cite{Csi82}.
\subsection{Channel Coding with the Erasure Option}
We consider a channel coding problem in which a message taking values in $\{1,\ldots, {M_n}\}$ uniformly at random is to be transmitted across a noisy channel $W^n$. An {\em encoder} $\rvf :\{1,\ldots, {M_n}\}\to\calX^n$ transforms the message to a codeword. The {\em codebook} $\calC_n=\{\bx_1,\ldots, \bx_{M_n}\}$ where $\bx_m=\rvf(m)$ is the set of all codewords.  The channel $W^n$ then applies a random transformation to the chosen codeword $\bx_m\in\calX^n$ resulting in $\by\in\calY^n$. A {\em decoder}  $\rvd : \calY^n\to\{0,1,\ldots, {M_n}\}$ either declares an estimate of the message or outputs an erasure symbol, denoted as $0$. The decoding operation can thus be regarded as partition of the output space $\calY^n$ into ${M_n}+1$ disjoint {\em decoding regions} $\calD_0,\calD_1,\ldots,\calD_{M_n} \subset\calY^n$, where $\calD_m := \rvd^{-1}(m)$. The set of all $\by\in\calD_0$ leads to an {\em erasure} event. 

\subsection{Total and Undetected Error Probabilities}
Given a codebook $\calC_n$, one can define two undesired error events for $n$ uses of the DMC. The first is the event in which the decoder does not make the correct decision, i.e., if message $m$ is sent, it declares either an erasure $0$ or outputs an incorrect message $m'\ne m$ (more precisely, $m\in\{1,\ldots, M_n\}\setminus \{m\}$). The probability of this event $\calE_1$ can be written as 
\begin{equation}
\Pr(\calE_1 | \calC_n) = \frac{1}{{M_n}}\sum_{m=1}^{M_n} \sum_{\by\in\calD_m^c} W^n(\by|\bx_m). \label{eqn:p1}
\end{equation}
This is the {\em total error probability}.

The other error event is $\calE_2$, which is defined as the event of declaring an incorrect message, i.e., if $m$ is sent, the decoder declares that $m'  \ne m$ is sent instead. This {\em undetected error  probability} can be written as 
\begin{equation}
\Pr(\calE_2 | \calC_n) = 
\frac{1}{{M_n}}\sum_{m=1}^{M_n} \sum_{\by\in\calD_m } \sum_{m'  \ne m  } W^n(\by|\bx_{m'}). \label{eqn:p2}
\end{equation}
One usually designs the codebook $\calC_n$ and the decoder $\rvd$ such that $\Pr(\calE_2 | \calC_n)$ is much smaller than $\Pr(\calE_1 | \calC_n)$, because undetected errors are usually more undesirable than erasures. 

\section{Main results} \label{sec:results}
\subsection{Direct Results} \label{sec:direct} 
We now state our main result in this paper concerning the asymmetric evaluation of   $\Pr(\calE_1 | \calC_n)$  and $\Pr(\calE_2 | \calC_n)$ which correspond to the total error probability and the undetected error probability respectively. Define the conditional information variance of an input distribution $P_X$ and the channel $W$ as 
\begin{equation}
V(P_X,W) :=\sum_{x\in\calX}P_X(x) \sum_{y\in\calY} W(y|x) \left[ \log\frac{W(y|x)}{P_XW(y)}- D(W(\cdot|x) \| P_XW) \right]^2,
\end{equation}
where $P_X W(y) = \sum_x P_X(x) W(y|x)$ is the output distribution  induced by $P_X$ and $W$. This quantity is finite whenever $W(\cdot|x)\ll P_XW $ for all $x$.    We further define the minimum and maximum conditional information variances as 
\begin{align}
V_{\max}(W)& :=\max_{P_X \in\Pi } V(P_X,W) 
\label{6-6-1} \quad\mbox{and} \\
V_{\min}(W)&:=\min_{P_X\in\Pi   } V(P_X,W) . 
\label{6-6-2}
\end{align}
Since $\Pi$ is compact and $P_X\mapsto V(P_X,W)$ is continuous, there exists capacity-achieving input distributions $P_X \in\Pi$ that achieve  both  $V_{\min}(W)$ and $V_{\max}(W)$ and so they are finite. The $P_X$ that achieves $V_{\min}(W)$ may not be the same as that achieving $V_{\max}(W)$.
Note that for all $P_X \in\Pi$, we have  $V(P_X,W) = U(P_X, W)$ \cite[Lem.~62]{PPV10}, where the unconditional information variance $U(P_X, W)$ is defined as 
\begin{equation}
U(P_X,W) :=\sum_{x\in\calX}P_X(x) \sum_{y\in\calY} W(y|x) \left[ \log\frac{W(y|x)}{P_XW(y)}- C \right]^2. \label{eqn:def_U}
\end{equation}
We assume that the channel $W$ satisfies $V_{\min}(W)>0$ throughout. This holds for all interesting DMCs (except some degenerate cases) and we make this assumption which is standard in moderate deviations analysis \cite{PV10a, altug14b}. If $V_{\min}(W)=0$, the conclusion from the moderate deviations theorem~\cite[Theorem 3.7.1]{Dembo} fails to hold. 
\begin{theorem}[Moderate Deviations Regime Direct] \label{thm1}
Let $0< t < 1/2$ and $a>b>0$. Set the number of codewords\footnote{We ignore integer constraints on the number of codewords $M_n$. We simply set $M_n$ to the nearest integer to the number satisfying~\eqref{eqn:sizeM}. } $M_n$ to satisfy 
\begin{equation}
\log M_n=  nC - a n^{1-t}. \label{eqn:sizeM}
\end{equation}
There exists a sequence of codebooks $\calC_n$ with $M_n$ codewords 
such that
the two error probabilities satisfy 
\begin{align}
\lim_{n\to\infty}\,\, -\frac{1}{n^{1- 2t}} \log
\Pr(\calE_1 | \calC_n) & = \frac{(a-b)^2}{2 V_{\min}(W) }
,\quad\mbox{and}\label{eqn:e1_md_direct} \\
\liminf_{n\to\infty}\,\, -\frac{1}{n^{1- t}} \log
\Pr(\calE_2 | \calC_n) & \ge b .
\label{eqn:e2_md_direct} 
\end{align}
\end{theorem}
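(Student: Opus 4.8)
The plan is to run an information-spectrum, Feinstein-type threshold decoder \cite{Han10} and to extract both error probabilities from a single moderate-deviations tail estimate. First I would fix a capacity-achieving input distribution $P_X^\star\in\Pi$ that attains $V_{\min}(W)$ (one exists since $\Pi$ is compact and $V(\cdot,W)$ is continuous), put $i(x;y):=\log\frac{W(y|x)}{P_X^\star W(y)}$ and $i(\bx;\by):=\sum_{j=1}^n i(x_j;y_j)$, and generate the codebook by drawing the $M_n$ codewords independently and uniformly from a type class $\calT_{P_n}$ with $P_n\in\calP_n(\calX)$, $P_n\to P_X^\star$ and $\supp(P_n)\subseteq\supp(P_X^\star)$. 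I would set the threshold
\[
\gamma_n:=\log M_n+b\,n^{1-t}=nC-(a-b)\,n^{1-t},
\]
which is positive for large $n$ because $a>b>0$. The decoder outputs $m$ iff $m$ is the \emph{unique} index with $i(\bx_m;\by)\ge\gamma_n$ and erases otherwise; the structural consequence I will use repeatedly is $\calD_m\subseteq\{\by:i(\bx_m;\by)\ge\gamma_n\}$ for \emph{every} codebook.

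Next I would read off both bounds from this one decoder. Declaring $m'\ne m$ forces $i(\bx_{m'};\by)\ge\gamma_n$, so $\Pr(\calE_2|\calC_n)\le\frac{1}{M_n}\sum_m\sum_{m'\ne m}\Pr_{Y^n\sim W^n(\cdot|\bx_m)}[i(\bx_{m'};Y^n)\ge\gamma_n]$; averaging over the ensemble and using independence of $\bx_{m'}$ from $(\bx_m,Y^n)$, a change-of-measure/Markov estimate gives $\Pr_{\bX'\sim\mathrm{Unif}(\calT_{P_n})}[i(\bX';\by)\ge\gamma_n]\le\poly(n)\,e^{-\gamma_n}$ uniformly in $\by$, hence $\bbE_{\calC_n}\Pr(\calE_2|\calC_n)\le\poly(n)\,M_ne^{-\gamma_n}=\poly(n)\,e^{-b n^{1-t}}$. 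For the total error, $\{Y^n\notin\calD_m\}\subseteq\{i(\bx_m;Y^n)<\gamma_n\}\cup\bigcup_{m'\ne m}\{i(\bx_{m'};Y^n)\ge\gamma_n\}$: the union over $m'$ is the collision term just bounded, while the first event contributes — because the codeword composition is fixed at $P_n$ — exactly the deterministic quantity
\[
p_n:=\Pr_{Y^n\sim W^n(\cdot|\bX_0)}\big[\,i(\bX_0;Y^n)<nC-(a-b)n^{1-t}\,\big],\qquad \bX_0\in\calT_{P_n}\ \text{arbitrary}.
\]
The inclusion $\calD_m\subseteq\{i(\bx_m;\cdot)\ge\gamma_n\}$ shows that this same $p_n$ is also a lower bound on $\Pr(\calE_1|\calC_n)$ for every codebook. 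A Markov/union-bound expurgation then picks one codebook $\calC_n$ on which the collision term (hence also $\Pr(\calE_2|\calC_n)$) stays within a constant factor of its ensemble average, so that $p_n\le\Pr(\calE_1|\calC_n)\le p_n+\poly(n)\,e^{-bn^{1-t}}$ and $\Pr(\calE_2|\calC_n)\le\poly(n)\,e^{-bn^{1-t}}$.

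It then remains to estimate $p_n$ by moderate deviations. Under $Y^n\sim W^n(\cdot|\bX_0)$ with $\bX_0\in\calT_{P_n}$, the sum $i(\bX_0;Y^n)=\sum_j i(x_{0,j};Y_j)$ is a sum of independent, uniformly bounded random variables whose total mean is $nC+O(1)$ and whose total variance is $nV_{\min}(W)(1+o(1))$: here I use $V(P_X^\star,W)=U(P_X^\star,W)$ for $P_X^\star\in\Pi$ \cite[Lem.~62]{PPV10} together with the fact that, with $P_X^\star W$ held fixed, both the mean and the variance of the information density are linear in the input composition, so replacing $P_X^\star$ by the nearby type $P_n$ costs only $O(1)$. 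Since $0<t<1/2$, the deviation $(a-b)n^{1-t}$ is $\omega(\sqrt n)$ and $o(n)$, so grouping the summands by input letter (leaving them i.i.d.\ within each group) and applying the moderate deviations principle \cite[Thm.~3.7.1]{Dembo} yields
\[
\lim_{n\to\infty}-\frac{1}{n^{1-2t}}\log p_n=\frac{(a-b)^2}{2V_{\min}(W)}.
\]
Because $n^{1-t}\gg n^{1-2t}$, the $\poly(n)\,e^{-bn^{1-t}}$ collision correction is invisible at scale $n^{1-2t}$, so this display combined with the bracketing of $\Pr(\calE_1|\calC_n)$ gives \eqref{eqn:e1_md_direct}, and the $\calE_2$ bound gives \eqref{eqn:e2_md_direct}.

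The main obstacle is the moderate-deviations step: one needs the \emph{precise} exponent in both directions (not merely the Chernoff upper bound) for an information-density sum whose summands are independent but not identically distributed, and one must keep careful track of the $O(1)$ mean shift and the $o(1)$ relative variance error produced by using a type $P_n\ne P_X^\star$, so that neither perturbs the constant $\tfrac{(a-b)^2}{2V_{\min}(W)}$ on the finer scale $n^{1-2t}$. Grouping the summands by input symbol reduces the non-i.i.d.\ difficulty to group-wise invocations of \cite[Thm.~3.7.1]{Dembo}; the remaining delicate point is to confirm that the collision term genuinely lives at scale $n^{1-t}$, so that it cannot contaminate the total-error exponent read off at scale $n^{1-2t}$.
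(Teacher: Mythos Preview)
Your argument is correct and follows essentially the same route as the paper: the same information-spectrum threshold decoder $\tilde{\calD}_m=\{\by:i(\bx_m;\by)\ge\log M_n+bn^{1-t}\}$, the same change-of-measure bound $M_n e^{-\gamma_n}=e^{-bn^{1-t}}$ for the collision/undetected-error term, the same two-sided bracketing $p_n\le\Pr(\calE_1|\calC_n)\le p_n+\text{(collision)}$, the same moderate-deviations identification of $p_n$ with exponent $(a-b)^2/(2V_{\min}(W))$ at scale $n^{1-2t}$, and Markov derandomization.

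The one genuine difference is the random-coding ensemble: the paper draws codewords i.i.d.\ from $P_X^n$, whereas you draw them uniformly from a type class $\calT_{P_n}$ with $P_n\to P_X^\star$. The paper's choice makes the moderate-deviations step immediate, since under $(X^n,Y^n)\sim P_X^{\star n}\times W^n$ the information density is a sum of i.i.d.\ terms and \cite[Thm.~3.7.1]{Dembo} applies verbatim with no grouping and no mean/variance perturbation analysis. Your constant-composition choice costs you that simplicity but buys something else: because every codeword has the same type, $p_n$ is a \emph{deterministic} lower bound on $\Pr(\calE_1|\calC_n)$ for \emph{every} codebook in the ensemble, so the equality in \eqref{eqn:e1_md_direct} survives derandomization without any further argument. (In the i.i.d.\ ensemble, Markov derandomization directly controls only the upper bounds on both errors; the matching lower bound on $\Pr(\calE_1|\calC_n)$ for the selected code requires an additional remark.) Two small technical notes: first, since $D(W(\cdot|x)\|P_X^\star W)=C$ for every $x\in\supp(P_X^\star)$, the total mean of your information-density sum is \emph{exactly} $nC$, not $nC+O(1)$, so the mean-shift worry disappears; second, \cite[Thm.~3.7.1]{Dembo} is stated for i.i.d.\ arrays, so to combine your per-letter groups you should either invoke a triangular-array MDP or, more simply, Taylor-expand the (additive) cumulant generating function at scale $s=un^{-t}$ and read off the quadratic $-\frac{u^2}{2}\sum_x P_n(x)\var_{W(\cdot|x)}[i(x;Y)]\to -\frac{u^2}{2}V_{\min}(W)$, which is precisely the G\"artner--Ellis route.
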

The proof of this result can be found in Section \ref{sec:prf1}. We assume that $a>b$ because if we demand that the undetected error probability decays as in  \eqref{eqn:e2_md_direct}, we must have that the rate $\frac{1}{n}\log M_n$ backs off further from capacity per \eqref{eqn:sizeM}.  Furthermore, $b<0$ corresponds to the list region which we do not discuss in detail in this paper.

 Interestingly, we do not analyze the optimal decoding regions prescribed by Forney~\cite{Forney68} and described in \eqref{eqn:decision_reg} in the sequel. We consider the following   regions $\{\tilde{\calD}_m\}_{m=1}^{M_n}$ motivated by information spectrum analysis~\cite{Han10}: 
\begin{equation}
\tilde{\calD}_m :=
\left\{ \by: \log
 \frac{W^n(\by|\bx_m)}{(P_XW)^n(\by) }
\ge 
 \log M_n   +  b n^{1-t}  
\right\} ,  \label{eqn:decision_reg1}
\end{equation}
where $P_X$ is a capacity-achieving input distribution. We choose $P_X$ to achieve either $V_{\min}(W)$ or $V_{\max}(W)$ in the proofs. Now we define the  set of all $\by\in\calY^n$ that leads to an erasure event in terms of $\{\tilde{\calD}_m\}_{m=1}^{M_n}$ as 
\begin{equation}
\hat{\calD}_0 := \bigg( \bigcap_{m=1}^{M_n} \tilde{\calD}_m^c \bigg) \cup \bigg( \bigcup_{m\ne m'} (\tilde{\calD}_m\cap \tilde{\calD}_{m'}) \bigg). \label{eqn:D0}
\end{equation}
Then, the {\em decoding region for message $m=1,\ldots, M$} is defined to be 
\begin{equation}
\hat{\calD}_m := \tilde{\calD}_m \setminus \hat{\calD}_0 . 
 \label{eqn:Dm}
\end{equation}
The erasure region is $\hat{\calD}_0$ described in \eqref{eqn:D0}. A moment's of thought reveals that $\hat{\calD}_0,\hat{\calD}_1,\ldots, \hat{\calD}_{M_n}$ are mutually disjoint and furthermore $\cup_{m=0}^{M_n}\hat{\calD}_m =\calY^n$.  The intuition behind the decoding regions in \eqref{eqn:D0}--\eqref{eqn:Dm} is as follows: The erasure region in \eqref{eqn:D0} is the union of all the complements of nominal regions  $\tilde{\calD}_m^c$ and the sets of pairwise intersections which potentially cause confusion in decoding, namely,  $\tilde{\calD}_m\cap \tilde{\calD}_{m'}$. After defining  the erasure region $\hat{\calD}_0 $, we remove this from the nominal regions  $\tilde{\calD}_m$  to form the actual decoding region for each message $\hat{\calD}_m$.    Note that in the ensemble tightness results to be presented in Section~\ref{sec:ens_results} we do not analyze the information spectrum decoding regions in  \eqref{eqn:decision_reg1}--\eqref{eqn:Dm}. Rather we analyze the {\em optimal  decoder} suggested by Forney~\cite{Forney68}. Hence, the decoding regions in  \eqref{eqn:decision_reg1}--\eqref{eqn:Dm}, in general,  may not be asymptotically optimal, unlike Forney's  decoding regions. However, we do show that these decoders are asymptotically optimal for additive DMCs.

Theorem \ref{thm1} corresponds to the so-called moderate deviations regime  in channel coding considered by Altu\u{g} and Wagner~\cite{altug14b} and Polyanskiy and Verd\'u~\cite{PV10a}. 
Thus, the appearance of the   term $V_{\min}(W)$ in the results is natural.  However, notice that the error probabilities $\Pr(\calE_1 | \calC_n)$ and $\Pr(\calE_2 | \calC_n)$ decay asymmetrically. By that, we mean that the rates of decay are different---$\Pr(\calE_1 | \calC_n)$ decays as $\exp(-\Theta(n^{1-2t}))$ while $\Pr(\calE_2 | \calC_n)$ decays as $\exp(-\Omega(n^{1-t}))$.
 
When $t=1/2$, 
we observe different asymptotic scaling from that in Theorem~\ref{thm:md}. 
Define 
\begin{equation}
 \varphi(w) := \frac{1}{\sqrt{2\pi}}\exp\left(-\frac{w^2}{2}\right) \label{eqn:pdf}
  \end{equation} 
  to be the probability density function of a standard Gaussian annd 
\begin{equation}
\Phi(\alpha) := \int_{-\infty}^\alpha \varphi(w)\,  \rmd w
\end{equation}
 to be the cumulative distribution function of a standard Gaussian.

\begin{theorem}[Mixed Regime Direct] \label{thm2}
Let $b>0$, $a \in \bbR$,
and $M_n$ chosen as in \eqref{eqn:sizeM} with $t=1/2$. 
There exists a sequence of codebooks $\calC_n$ with $M_n$ codewords 
such that
$\Pr(\calE_2 | \calC_n)$ satisfies 
\begin{align}
\lim_{n\to\infty}\,\,  \Pr(\calE_1 | \calC_n) 
&=
\left\{
\begin{array}{ll}
   \Phi\Big( \frac{b-a}{\sqrt{V_{\max}(W)}}\Big) & \hbox{ if } a \le 0 \\
   \Phi\Big( \frac{b-a}{\sqrt{V_{\min}(W)}}\Big) & \hbox{ if } a > 0.
\end{array}
\right.
,\quad\mbox{and}
\label{eqn:e1_clt_direct} \\
\liminf_{n\to\infty}\,\, -\frac{1}{\sqrt{n}} \log
\Pr(\calE_2 | \calC_n) & \ge b .
\label{eqn:e2_clt_direct} 
\end{align}
\end{theorem}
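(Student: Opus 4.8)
The plan is to follow the proof of Theorem~\ref{thm1} essentially verbatim, specialized to $t=1/2$; the only structural change is that the moderate‑deviations estimate for $\Pr(\calE_1|\calC_n)$ is replaced by an ordinary central‑limit estimate, which is exactly what produces a constant (a value of $\Phi$) rather than a subexponential decay. Concretely, I would generate the codebook $\calC_n$ by drawing its $M_n$ codewords i.i.d.\ from $(P_X)^n$, where $P_X\in\Pi$ is taken to achieve $V_{\max}(W)$ in \eqref{6-6-1} when $a\le 0$ and $V_{\min}(W)$ in \eqref{6-6-2} when $a>0$ (the choice flagged just before the theorem: this $P_X$ is precisely the one that realizes the stated constant, and the ensemble converse of Section~\ref{sec:ens_results} will show it cannot be beaten for additive channels). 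The decoder is the information‑spectrum decoder \eqref{eqn:decision_reg1}--\eqref{eqn:Dm} with $t=1/2$, so that $\tilde\calD_m=\{\by:\log\frac{W^n(\by|\bx_m)}{(P_XW)^n(\by)}\ge\log M_n+b\sqrt n\}$ and, by \eqref{eqn:sizeM}, the threshold on the right equals $nC+(b-a)\sqrt n$.

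For the undetected error I would bound $\hat\calD_m\subseteq\tilde\calD_m$ in \eqref{eqn:p2}, take the ensemble expectation, and use that the ``wrong'' codeword $\bX_{m'}$ is independent of $\bX_m$, which replaces $W^n(\by|\bX_{m'})$ by $(P_XW)^n(\by)$. This gives
\[
\bbE_{\calC_n}\!\left[\Pr(\calE_2|\calC_n)\right]\ \le\ (M_n-1)\,\bbE_{\bX\sim(P_X)^n}\!\Big[\textstyle\sum_{\by\in\tilde\calD(\bX)}(P_XW)^n(\by)\Big]\ \le\ (M_n-1)\,M_n^{-1}\,e^{-b\sqrt n}\ \le\ e^{-b\sqrt n},
\]
where $\tilde\calD(\bX)$ is the set in the previous display with $\bx_m$ replaced by $\bX$, and the middle step is the elementary change‑of‑measure bound $(P_XW)^n(\by)\le M_n^{-1}e^{-b\sqrt n}\,W^n(\by|\bX)$ valid on $\tilde\calD(\bX)$, followed by $\sum_{\by}W^n(\by|\bX)=1$. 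This is the computation already done in the proof of Theorem~\ref{thm1}, and it yields \eqref{eqn:e2_clt_direct} for a typical codebook.

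For the total error I would use that $\bY\notin\hat\calD_m$ iff either $\bY\notin\tilde\calD_m$ or $\bY\in\tilde\calD_{m'}$ for some $m'\ne m$, so that from \eqref{eqn:p1},
\[
\Pr_{\bY\sim W^n(\cdot|\bx_m)}(\bY\notin\tilde\calD_m)\ \le\ \Pr(\calE_1|\calC_n)\ \le\ \Pr_{\bY\sim W^n(\cdot|\bx_m)}(\bY\notin\tilde\calD_m)+\Pr\big(\exists m'\ne m:\bY\in\tilde\calD_{m'}\big).
\]
The confusion term has ensemble mean $\le e^{-b\sqrt n}\to0$ by an entirely analogous change‑of‑measure estimate, hence is $o(1)$ and negligible against the positive constant $\Phi(\cdot)$. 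For the main term set $\imath(x;y):=\log\frac{W(y|x)}{P_XW(y)}$; its ensemble average is $\Pr\big(\sum_{i=1}^n\imath(X_i;Y_i)<nC+(b-a)\sqrt n\big)$, and since $P_X\in\Pi$ we have $\bbE[\imath(X;Y)]=C$ and, by \cite[Lem.~62]{PPV10} together with \eqref{eqn:def_U}, $\var[\imath(X;Y)]=U(P_X,W)=V(P_X,W)\in(0,\infty)$, with finite third moment because the alphabets are finite. Hence the central limit theorem gives
\[
\Pr\Big(\tfrac{1}{\sqrt{nV(P_X,W)}}\textstyle\sum_{i=1}^n(\imath(X_i;Y_i)-C)<\tfrac{b-a}{\sqrt{V(P_X,W)}}\Big)\ \longrightarrow\ \Phi\Big(\tfrac{b-a}{\sqrt{V(P_X,W)}}\Big),
\]
which is the claimed value for our choice of $P_X$; note the normalized threshold $\tfrac{(b-a)\sqrt n}{\sqrt{nV}}$ is a fixed constant, which is exactly why the $t=1/2$ case yields a $\Phi$‑value rather than the exponent of Theorem~\ref{thm1}.

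Finally I would pass from ensemble averages to a single good sequence $\{\calC_n\}$. The dominant part $M_n^{-1}\sum_m\Pr_{\bY\sim W^n(\cdot|\bX_m)}(\bY\notin\tilde\calD_m)$ is an average of $M_n\to\infty$ i.i.d.\ bounded terms, hence concentrates about $\Phi(\tfrac{b-a}{\sqrt{V(P_X,W)}})$; the confusion term and $\Pr(\calE_2|\calC_n)$ are $o(1)$ and $O(e^{-b\sqrt n})$ in probability over the ensemble by Markov's inequality. A union bound over these ``bad'' events produces, for every large $n$, a codebook with $|\Pr(\calE_1|\calC_n)-\Phi(\tfrac{b-a}{\sqrt{V(P_X,W)}})|\to0$ and $\Pr(\calE_2|\calC_n)\le 4e^{-b\sqrt n}$, which gives \eqref{eqn:e1_clt_direct} and \eqref{eqn:e2_clt_direct}. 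I expect the one genuine subtlety — everything else being a transcription of the proof of Theorem~\ref{thm1} with the central limit theorem in place of the moderate‑deviations principle — to be this extraction step, i.e.\ arranging that a single codebook simultaneously carries a \emph{two‑sided}, $o(1)$‑sharp estimate for $\Pr(\calE_1|\calC_n)$ and a \emph{one‑sided} $e^{-b\sqrt n}$ estimate for $\Pr(\calE_2|\calC_n)$; this is what forces one to exploit that $\Pr(\calE_1|\calC_n)$ is, up to the negligible confusion term, a sample mean over the $M_n$ codewords and hence self‑averaging.
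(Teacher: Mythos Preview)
Your proposal is correct and follows essentially the same approach as the paper: random coding with $P_X$ chosen to achieve $V_{\max}$ or $V_{\min}$ according to the sign of $a$, the information-spectrum decoder \eqref{eqn:decision_reg1}--\eqref{eqn:Dm}, the change-of-measure bound yielding $\bbE_{\calC_n}[\Pr(\calE_2|\calC_n)]\le e^{-b\sqrt n}$, and a central-limit computation for the dominant part of $\Pr(\calE_1|\calC_n)$. The paper's proof is terser, invoking Berry--Esseen in place of the CLT and derandomizing via Markov's inequality with $\theta_n=1-1/n$.

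One point worth noting: the paper's Markov derandomization, as written, only yields the one-sided bound $\limsup_n\Pr(\calE_1|\calC_n)\le\Phi(\cdot)$, whereas the theorem asserts the full limit. Your identification of this as the ``one genuine subtlety'' is apt, and your proposed fix---observing that $\frac{1}{M_n}\sum_m W^n(\tilde\calD_m^c|\bX_m)$ is a sample mean of $M_n\to\infty$ i.i.d.\ $[0,1]$-valued random variables and hence concentrates around its mean $\Phi(\cdot)+o(1)$ by the weak law of large numbers---is correct and in fact makes the extraction step cleaner than the paper's own argument. The deterministic inclusion $\hat\calD_m^c\supset\tilde\calD_m^c$ then gives the matching lower bound $\Pr(\calE_1|\calC_n)\ge\frac{1}{M_n}\sum_m W^n(\tilde\calD_m^c|\bx_m)$ for every realization, so the two-sided limit follows.
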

The proof of this result can be found in Section \ref{sec:prf2}. Observe that the first error  probability is in the central limit regime \cite{Strassen, PPV10, Hayashi09} while the second scales as $\exp(- \sqrt{n}  \, b)$, which is in the moderate deviations regime \cite{altug14b,PV10a}. Thus, we call this the {\em mixed regime}.

\subsection{Tightness With Respect to the Ensemble Average}  \label{sec:ens_results}
It is, at this point, not clear that the codes we proposed in Section \ref{sec:direct} are asymptotically optimal. In this section, we demonstrate the tightness of the decoder for {\em additive}  DMCs with uniform input distribution (which is a capacity-achieving input distribution for additive DMCs).  We consider an ensemble evaluation of the two error probabilities. That is, we evaluate the probabilities of total and undetected errors averaged over the random code and show that this evaluation is tight in some asymptotic sense to be made precise in the statements. For brevity, we also call this evaluation {\em ensemble tightness} or {\em ensemble converse}.
Similarly to \eqref{eqn:sizeM},  the sizes  of the codes    we consider $\{M_n\}_{n\in\bbN}$  take the form 
\begin{equation}
\log {M_n} = n C - a n^{1-t}  \label{eqn:sizeM2}
\end{equation}
where $C = \log d - H(P)$ is the capacity of the additive channel and  $0<t\le 1/2$. When $t<1/2$ (resp.\ $t=1/2$), the code size is in the moderate deviations (resp.\ central limit or mixed) regime.

We now state our main results in this paper concerning the asymmetric evaluation of   $\Pr(\calE_1 | \calC_n)$  and $\Pr(\calE_2 | \calC_n)$  corresponding to the total error probability and the undetected error probability respectively. We define the {\em varentropy}~\cite{verdu14} or {\em source dispersion}~\cite{kost12} of the additive noise $P$ as 
\begin{equation}
V(P):= \sum_{z=0}^{d-1} P(z) \left[ \log\frac{1}{P(z)}-H(P)\right]^2.
\end{equation}
This is simply the variance of the self-information random variable $-\log P(Z)$ where $Z$ is distributed as $P$.  We assume that $V(P)>0$ throughout. It is easy to see that because of the additivity of the channel, the $\epsilon$-dispersion~\cite{PPV10}  of $W$  is $V(P)$ for every $\epsilon\in (0,1)$, i.e., $V_{\min}(W)=V_{\max}(W)=V(P)$. 

In the following, we emphasize that the uniform distribution will be chosen as the input distribution of the code. This is equivalent to choosing the $M_n$ codewords where  each codeword is drawn uniformly at random from $\{0,1,\ldots, d-1\}^n$.

\begin{theorem}[Moderate Deviations Regime   Converse] \label{thm:md}
Let $0<t<1/2$ and $a> b > 0$. Consider a sequence of random codebooks $\calC_n$  with $M_n$ codewords where each codeword is drawn uniformly at random from $\{0,1,\ldots, d-1\}^n$ and $M_n$ satisfies \eqref{eqn:sizeM2}.  Let $W$ be an additive DMC. 
When the expectation of the total error satisfies 
\begin{align}
  \liminf_{n\to\infty}\,\, -\frac{1}{n^{1-2t}} \log\bbE_{\calC_n}\big[\Pr(\calE_1 | \calC_n) \big]& \ge
    \frac{(a-b)^2}{2 V(P) } ,\label{H1}
\end{align}
then the expectation of the undetected error satisfies 
\begin{align}
 \limsup_{n\to\infty}\,\, -\frac{1}{n^{1- t}} \log\bbE_{\calC_n}\big[\Pr(\calE_2 | \calC_n) \big] & \le   b.\label{H2}
\end{align}
Conversely, when the expectation of the undetected error satisfies 
\begin{align}
 \liminf_{n\to\infty}\,\, -\frac{1}{n^{1- t}} \log\bbE_{\calC_n}\big[\Pr(\calE_2 | \calC_n) \big] & \ge   b,\label{H3}
\end{align}
then the expectation of the total error satisfies 
\begin{align}
  \limsup_{n\to\infty}\,\, -\frac{1}{n^{1-2t}} \log\bbE_{\calC_n}\big[\Pr(\calE_1 | \calC_n) \big]& \le
    \frac{(a-b)^2}{2 V(P) } .\label{H4}
\end{align}
\end{theorem}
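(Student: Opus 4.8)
\emph{Plan.} I would analyze the ensemble average of Forney's decoder \eqref{eqn:decision_reg} with an $n$-dependent threshold $T_n$ and read off the tradeoff between the two error exponents as $T_n$ varies; the two implications then follow because the pair of exponents always lies on the resulting curve, and since Forney's decoder is, codebook by codebook, Pareto-optimal in the $(\Pr(\calE_1\mid\calC_n),\Pr(\calE_2\mid\calC_n))$ plane, the conclusion passes to arbitrary decoders of the same codebook size (one argues that the codebook-dependent optimal threshold may be replaced by a deterministic one without affecting the exponents). Fixing $T_n$ and conditioning on message $1$, additivity and the uniform random coding distribution give $\bY=\bx_1+\bZ$ with $\bZ\sim P^n$, while for $m'\neq1$ the pseudo-noise $\bY-\bx_{m'}$ is uniform on $\{0,\dots,d-1\}^n$, i.i.d.\ across $m'$ and independent of $\bZ$. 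Every likelihood then depends only on a type, $W^n(\bY\mid\bx_m)=\exp(-n\lrangle{Q_m}{-\log P})$ with $Q_m$ the type of $\bY-\bx_m$ and $\lrangle{Q}{-\log P}=H(Q)+D(Q\|P)$, and the key object is the type-class enumerator $N_n(Q):=|\{m:\bY-\bx_m\in\calT_Q\}|$, for which $\sum_m W^n(\bY\mid\bx_m)=\sum_Q N_n(Q)\exp(-n\lrangle{Q}{-\log P})$; for $Q$ away from the noise type $N_n(Q)$ is binomial with mean $\asymp\exp(n(H(Q)-H(P))-a\,n^{1-t})$ since $\log M_n=n(\log d-H(P))-a\,n^{1-t}$, and a routine type-counting argument shows the decoy sum and its largest term are both $\exp(-nH(P)-a\,n^{1-t}+o(n^{1-t}))$.

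Write $\tau_n:=n^{t}T_n$, so $nT_n=\tau_n n^{1-t}$, and suppose $\tau_n\to\tau\in(0,a)$. The total-error event $P^n(\bZ)<e^{nT_n}\sum_{m\neq1}W^n(\bY\mid\bx_m)$ becomes $-\tfrac1n\log P^n(\bZ)>H(P)+(a-\tau_n)n^{-t}+o(n^{-t})$; as $-\log P^n(\bZ)$ is a sum of i.i.d.\ terms with mean $H(P)$ and variance $V(P)$, the moderate deviations theorem (equivalently Theorem \ref{thm7}) yields ensemble total-error exponent $\tfrac{(a-\tau)^2}{2V(P)}$ at scale $n^{1-2t}$. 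For the undetected error, a decoy $m'$ can lie in its Forney region only if $N_n(Q_{m'})=1$ --- two codewords with the same joint type with $\bY$ tie, and neither can exceed $e^{nT_n}\geq1$ times the full sum --- and $W^n(\bY\mid\bx_{m'})\ge e^{nT_n}\bigl[\sum_m W^n(\bY\mid\bx_m)-W^n(\bY\mid\bx_{m'})\bigr]$. Optimizing the resulting probability over the winning decoy's type $Q$ and over the noise deviation $\xi:=-\tfrac1n\log P^n(\bZ)-H(P)$, the dominant configuration turns out to have $\xi\gtrsim a\,n^{-t}$ (the true codeword then being negligible against the decoy pool of aggregate likelihood $\exp(-nH(P)-a n^{1-t}+o(n^{1-t}))$), so that a decoy wins iff its type obeys $\lrangle{Q}{-\log P}\le H(P)+(a-\tau_n)n^{-t}+o(n^{-t})$, whence $\bbE[N_n(Q)]\lesssim\exp(-\tau_n n^{1-t})$; the noise deviation costs only $\exp(-\Theta(n^{1-2t}))=\exp(-o(n^{1-t}))$, so the undetected-error exponent equals $\tau$ at scale $n^{1-t}$ (the competing ``typical-noise'' configuration needs $H(Q)\le H(P)-\tau_n n^{-t}$ and yields only $\exp(-(a+\tau)n^{1-t})$, which is subdominant because $a>0$). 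Hence the pair of exponents is $\bigl(\tfrac{(a-\tau)^2}{2V(P)},\,\tau\bigr)$; since $\tau\le b\iff\tfrac{(a-\tau)^2}{2V(P)}\ge\tfrac{(a-b)^2}{2V(P)}$ on $(0,a)$, this gives at once \eqref{H1}$\Rightarrow$\eqref{H2} and \eqref{H3}$\Rightarrow$\eqref{H4}, and the same inequalities with $\limsup\tau_n$ in place of $\tau$ cover the case when $\tau_n$ does not converge.

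I expect the main obstacle to be the undetected-error upper bound underlying \eqref{H2}--\eqref{H3}: a union bound over the $M_n$ decoys is exponentially too loose, because when many decoys are mutually confusable \emph{none} of them satisfies Forney's ``dominate the sum by $e^{nT_n}$'' requirement, so one must carry out the joint ``exactly one decoy of the winning type and it dominates the sum'' analysis and correctly single out the atypical-noise route as the dominant one. A secondary point is that this forces one to control a sequence of cumulant generating functions --- that of $-\log P(Z)$ under $P$ and those governing the binomial moments of $N_n(Q)$ at type deviations of order $n^{-t}$ --- simultaneously at the two scales $n^{1-t}$ and $n^{1-2t}$, which is precisely the regime handled by the shifted G\"artner--Ellis theorem of Theorem \ref{thm7}.
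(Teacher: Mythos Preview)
Your high-level strategy matches the paper's: parametrize Forney's decoder by a vanishing threshold $T_n=\tau n^{-t}$, compute the ensemble exponents $\bigl(\tfrac{(a-\tau)^2}{2V(P)},\,\tau\bigr)$ at scales $n^{1-2t}$ and $n^{1-t}$, and then read off both implications from the monotonicity of this curve together with Pareto-optimality of Forney's rule. The paper packages this as Lemma~\ref{lem:md} (exact exponents at fixed $\tau$) plus the short argument in Section~\ref{sec:optimality_prf}.

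Where you differ is in how you extract the two exponents. The paper does \emph{not} separate the noise term and the decoy sum, nor does it optimize over a ``winning type'' configuration. Instead it treats the single log-likelihood-ratio variable $F_n=\log\sum_{m'\ne m}W^n(Y^n|X_{m'}^n)-\log W^n(Y^n|X_m^n)$ and computes its cumulant generating function $\phi_n(s)$ at $s=u n^{-t}$ via the type-enumerator calculation (Lemma~\ref{lem:6}); the total error is then a direct application of the shifted G\"artner--Ellis theorem (Theorem~\ref{thm7}, Case~(ii)), while the undetected error is handled by the change-of-measure identity $\lambda_n(s)=\phi_n(1+s)-\log(M_n-1)$ and a second application of Theorem~\ref{thm7} (Case~(i)). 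This is cleaner than your route because it never needs to argue that the decoy sum concentrates to $\exp(-nH(P)-an^{1-t})$ at a resolution fine enough not to disturb the $n^{1-2t}$ moderate-deviations scale, and it sidesteps your ``exactly one decoy of the winning type'' case analysis entirely: the $N_n(Q)=1$ constraint, the atypical-noise versus typical-noise dichotomy, and the subdominant $\exp(-(a+\tau)n^{1-t})$ branch all get absorbed automatically into the single CGF computation. Your direct configuration-optimization argument is correct in outline and arguably more illuminating about \emph{which} events dominate, but making it rigorous at both scales simultaneously is exactly the delicate two-scale control you flag at the end --- and the paper's answer to that difficulty is to avoid it by working with $\phi_n$ throughout.

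One point you raise that the paper glosses over: the passage from codebook-dependent Forney thresholds to deterministic ones when invoking Pareto-optimality for arbitrary decoders. The paper's proof in Section~\ref{sec:optimality_prf} simply argues with a deterministic $T_n$ and appeals to the tightness of Lemma~\ref{lem:md}; your parenthetical is a fair acknowledgment that this step deserves a line of justification.
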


\begin{theorem}[Mixed Regime    Converse] \label{thm:clt}
Let $b > 0$, $a\in\bbR$ and $M_n$ chosen according to \eqref{eqn:sizeM2} with $t=1/2$.    Consider a sequence of random codebooks $\calC_n$  with $M_n$ codewords where each codeword is drawn uniformly at random from $\{0,1,\ldots, d-1\}^n$, the decoding regions are chosen according to \eqref{eqn:decision_reg} with thresholds
\eqref{eqn:thres_md}.  Let $W$ be an additive DMC. 
When the expectation of the total error satisfies 
\begin{align}
\limsup_{n\to\infty}\,\,  \bbE_{\calC_n} \big[\Pr(\calE_1 | \calC_n)\big] & \le
\Phi\bigg( \frac{b-a}{\sqrt{V(P)}}\bigg)  
\label{H5}
\end{align}
then the expectation of the undetected error satisfies 
\begin{align}
 \limsup_{n\to\infty}\,\, -\frac{1}{\sqrt{n}} \log\bbE_{\calC_n}\big[\Pr(\calE_2 | \calC_n) \big] & \le   b.
\label{H6}
\end{align}
Conversely, when the expectation of the undetected error satisfies 
\begin{align}
 \liminf_{n\to\infty}\,\, -\frac{1}{\sqrt{n}} \log\bbE_{\calC_n}\big[\Pr(\calE_2 | \calC_n) \big] & \ge   b,
\label{H7}
\end{align}
then the expectation of the total error satisfies 
\begin{align}
\liminf_{n\to\infty}\,\,  \bbE_{\calC_n} \big[\Pr(\calE_1 | \calC_n)\big] & 
\ge\Phi\bigg( \frac{b-a}{\sqrt{V(P)}}\bigg)  .
\label{H8}
\end{align}
\end{theorem}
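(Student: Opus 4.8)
\emph{Proof proposal for Theorem~\ref{thm:clt}.}
The plan is to reduce the theorem to two sharp asymptotics for the ensemble-averaged error probabilities under the decoder of~\eqref{eqn:decision_reg} with thresholds~\eqref{eqn:thres_md}, and then to read off all four implications from the strict monotonicity of $\Phi$. In the mixed regime $t=1/2$ the threshold of~\eqref{eqn:thres_md} can be written so that $\sqrt n\,T_n\to\tau$ for a constant $\tau>0$ (with $\tau=b$ if~\eqref{eqn:thres_md} is parametrized by $b$). I would establish: \textbf{(A)} $\lim_{n\to\infty}\bbE_{\calC_n}[\Pr(\calE_1|\calC_n)]=\Phi\big((\tau-a)/\sqrt{V(P)}\big)$, and \textbf{(B)} $\lim_{n\to\infty}-\tfrac1{\sqrt n}\log\bbE_{\calC_n}[\Pr(\calE_2|\calC_n)]=\tau$. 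Granting these: if~\eqref{H5} holds then $\Phi((\tau-a)/\sqrt{V(P)})\le\Phi((b-a)/\sqrt{V(P)})$, hence $\tau\le b$, and then~\eqref{H6} is immediate from (B); symmetrically~\eqref{H7} forces $\tau\ge b$ via (B), whence~\eqref{H8} follows from (A) by monotonicity of $\Phi$. (If~\eqref{eqn:thres_md} already pins $\tau=b$, then~\eqref{H5} and~\eqref{H7} hold automatically and (A)--(B) give the conclusions directly; combined with the achievability result Theorem~\ref{thm2}, this exhibits $\big(\Phi((b-a)/\sqrt{V(P)}),\,b\big)$ as a Pareto point.)

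For (A) I would use additivity. With a uniform random codebook and message $1$ sent, $\bY=\bX_1+\bZ$ with $\bZ\sim P^n$, so the true-codeword score $\log W^n(\bY|\bX_1)=\sum_{i=1}^n\log P(Z_i)$ is a sum of i.i.d.\ terms of mean $-nH(P)$ and variance $nV(P)$; by the CLT, $G_n:=(\log W^n(\bY|\bX_1)+nH(P))/\sqrt{nV(P)}$ converges to a standard Gaussian. By symmetry of the uniform random code, $\bbE_{\calC_n}[\Pr(\calE_1|\calC_n)]=\Pr(W^n(\bY|\bX_1)<e^{nT_n}S_n)$ with $S_n:=\sum_{m'\ne1}W^n(\bY|\bX_{m'})$, so it suffices to show $\log S_n=-nH(P)-a\sqrt n+o_p(\sqrt n)$. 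The upper bound is Markov's inequality since $\log\bbE[S_n]=\log((M_n-1)d^{-n})=-nH(P)-a\sqrt n+o(1)$. For the matching lower bound I would invoke the type-class-enumerator method: because $V(P)>0$, $P$ is not uniform and hence not a local maximizer of entropy, so one can choose types $Q_n$ with $H(Q_n)>H(P)$, $\|Q_n-P\|_1\to0$ and $D(Q_n\|P)=o(n^{-1/2})$, for which the number $N_{Q_n}$ of wrong codewords whose apparent noise $\bY-\bX_{m'}$ has type $Q_n$ has mean whose logarithm, $n(H(Q_n)-H(P))-a\sqrt n+O(\log n)$, tends to $+\infty$; $N_{Q_n}$ concentrates about this mean and contributes $e^{-nH(P)-a\sqrt n-o(\sqrt n)}$ to $S_n$. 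Feeding this concentration and $nT_n=\tau\sqrt n+o(\sqrt n)$ into the event and using a Slutsky-type argument, the total error probability converges to $\Phi((\tau-a)/\sqrt{V(P)})$, which is (A).

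For (B), an undetected error (message $1$ sent) requires some wrong codeword $\bX_{m'}$ with $W^n(\bY|\bX_{m'})\ge e^{nT_n}\sum_{m''\ne m'}W^n(\bY|\bX_{m''})$; the Forney denominator is at least $W^n(\bY|\bX_1)$ and at least the sum over the remaining wrong codewords, so its typical logarithm is $-nH(P)+\max\{\sqrt{nV(P)}\,G_n,\,-a\sqrt n\}+o_p(\sqrt n)$ by the estimate just used for $S_n$. Grouping the $M_n-1$ wrong codewords by the type $Q$ of their apparent noise (uniform on $\calX^n$ given $\bX_1,\bZ$), a codeword of type $Q$ has score $e^{-n(H(Q)+D(Q\|P))}$ and $\log\bbE[N_Q]=n(H(Q)-H(P))-a\sqrt n+O(\log n)$; Taylor-expanding about $Q=P$ at the $n^{-1/2}$ scale --- where the quadratic term of the entropy exactly cancels the leading ($\chi^2$) term of $D(Q\|P)$ --- both $n(H(Q)+D(Q\|P)-H(P))$ and $n(H(Q)-H(P))$ are governed by the single linear functional $h:=-\sqrt n\sum_z(Q(z)-P(z))\log P(z)$. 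Matching ``a wrong codeword of type $Q$ exists'' (probability of order $\min\{1,e^{\sqrt n(h-a)}\}$) against ``it beats the Forney denominator by $e^{nT_n}$'' (which reduces to $h\le-\tau+\min\{a,-\sqrt{V(P)}\,G_n\}$) and optimizing over $h$ gives $\Pr(\calE_2|G_n)$ of order $e^{-\tau\sqrt n}$ on $\{G_n\le -a/\sqrt{V(P)}\}$ and strictly smaller elsewhere; integrating against the asymptotically Gaussian law of $G_n$ yields $\bbE_{\calC_n}[\Pr(\calE_2|\calC_n)]=e^{-\tau\sqrt n}\,\Phi(-a/\sqrt{V(P)})\,e^{o(\sqrt n)}$, which is (B). To turn this heuristic into matching upper and lower bounds on $-\tfrac1{\sqrt n}\log\bbE_{\calC_n}[\Pr(\calE_2|\calC_n)]$ --- the union bound alone yields only the upper bound --- I would assemble the per-codeword log-likelihood-ratios into a cumulant generating function at speed $\sqrt n$ and invoke the shifted G\"artner-Ellis theorem (Theorem~\ref{thm7}); its ``shift'' is exactly what absorbs the deterministic $-a\sqrt n$ offset from $\log M_n$ together with the $\sqrt n$-scale Gaussian fluctuation of $\log W^n(\bY|\bX_1)$, neither of which a standard large-deviation principle handles.

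The step I expect to be the main obstacle is the lower bound in (B): the undetected error is a maximum over exponentially many weakly dependent likelihoods whose typical magnitude lies within $e^{\Theta(\sqrt n)}$ of both the true-codeword likelihood and the competing bulk, so the decisive fluctuations live simultaneously on the large-deviations scale (through the enumerator $N_Q$) and on the central-limit scale (through $\log W^n(\bY|\bX_1)$). Securing enough concentration of the $N_Q$'s, and controlling the correlations among the candidate ``winner'', the remaining wrong codewords and the true codeword, so as to obtain a matching ensemble \emph{lower} bound --- and then verifying the hypotheses of Theorem~\ref{thm7} for the resulting sequence of cumulant generating functions at the right normalization --- is where the substantive work lies. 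Step (A), once the concentration of $S_n$ is in hand, is comparatively routine.
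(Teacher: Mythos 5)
Your proposal is correct in substance and its skeleton coincides with the paper's: your (A)--(B) with a threshold constant $\tau$, followed by monotonicity of $\Phi$, is exactly the combination of Lemma~\ref{lem:clt} (which is (A)--(B) with $\tau=b$, i.e.\ \eqref{eqn:e1_clt}--\eqref{eqn:e2_clt}) with the threshold-pinning argument of Section~\ref{sec:optimality_prf}, and for (B) you end up invoking the same tool the paper uses, namely the shifted G\"artner--Ellis theorem (Theorem~\ref{thm7}, Case (ib) with $\beta_n=\gamma_n^{-1}$ and quadratic $\nu_2$) applied to a cumulant generating function whose asymptotics are those of Lemma~\ref{lem:6}; to make that step work you would also need the paper's normalization step \eqref{eqn:norm_pm}, since the measure weighting the undetected-error event is the unnormalized $\bbQ$, and your remark that the ``shift'' absorbs $\log M_n$ is precisely the identification $\alpha_n=-\log(M_n-1)$, $\theta_0=-1$. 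Where you genuinely differ is part (A): the paper proves \eqref{eqn:e1_clt} by showing pointwise convergence of $\phi_n(u/\sqrt n)$ for $u>0$ only and then appealing to the Curtiss/Mukherjea-type weak-convergence result (Lemma~\ref{l625}), whereas you decompose $F_n=\log S_n-\log W^n(\bY|\bX_1)$, prove two-sided concentration of $\log S_n$ at $-nH(P)-a\sqrt n$ (Markov plus a type-class-enumerator/Chernoff lower bound, essentially Lemma~\ref{l6-21} with a slightly-higher-entropy type, valid since $V(P)>0$ and $P$ has full support), and finish with the CLT and Slutsky. This buys a more elementary argument that avoids Lemma~\ref{l625}, at the cost of redoing the enumerator concentration outside the CGF framework; both routes rest on the same combinatorial ingredient. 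Two minor points: your parenthetical that ``the union bound alone yields only the upper bound'' is stated backwards (Markov/union bounds the error probability from above, i.e.\ $-\log$ from below; the hard direction, which you do correctly flag afterwards, is the ensemble \emph{lower} bound on $\bbE_{\calC_n}[\Pr(\calE_2|\calC_n)]$, exactly what \eqref{ge1_upper} in Case (ib) supplies), and if the decoder's threshold is treated as free one should run your dichotomy along subsequences where $\sqrt n\,T_n$ converges in $[0,\infty]$, a technicality the paper's own argument also glosses over. Your constant-order heuristic $\Phi(-a/\sqrt{V(P)})$ in (B) is immaterial at the $\sqrt n$ scale and should not be relied upon beyond that.
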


These theorems imply that 
if we generate our encoder according to the uniform distribution
even if we improve our decoder,
we cannot improve both errors.
That is, these theorems show the asymptotic optimality of our codes for the additive
channel. The proofs of these theorems follow immediately from Lemmas \ref{lem:md} and \ref{lem:clt} to follow.

To prove these theorems we need to develop Lemmas \ref{lem:md} and \ref{lem:clt} in the following. We recall Forney's  result in~\cite{Forney68}
 that for a given codebook $\calC_n:=\{\bx_1,\ldots, \bx_{M_n}\}$, the Pareto-optimal decoding region  for each  message $m  \in \{1,\ldots, {M_n}\}$ is given by 
\begin{equation}
\calD_m :=\left\{ \by:\frac{W^n(\by|\bx_m)}{\sum_{m'\ne m}W^n(\by|\bx_{m'})}\ge \exp(n{T_n})\right\},  \label{eqn:decision_reg}
\end{equation}
where ${T_n} > 0$  is a threshold parameter that serves to trade off between the two error probabilities $\Pr(\calE_1 | \calC_n)$  and $\Pr(\calE_2 | \calC_n)$. This is a  generalization of the Neyman-Pearson lemma. Because ${T_n} > 0$, the regions are disjoint.  We let $\calD_0$ denote the set of all $\by$ that leads to an erasure, i.e., 
\begin{equation}
\calD_0 := \calY^n\setminus \bigsqcup_{m=1}^{M_n}\calD_m.
\end{equation}

In the literature on decoding with an erasure option (e.g.,~\cite{Forney68,sgb, tel94, Blinovsky, Moulin09, merhav08,merhav13,merhav14,  somekh11}), ${T_n}$ is usually kept at a constant (not depending on $n$), leading to results concerning tradeoffs between the {\em exponential} decay rates of $\Pr(\calE_1 | \calC_n)$  and $\Pr(\calE_2 | \calC_n)$, i.e., the {\em error exponents} of the total and undetected error probabilities. Our treatment is different. We let ${T_n}$ in the definitions of the decision regions $\calD_m$ in \eqref{eqn:decision_reg} depend on $n$ and show that the error probabilities $\Pr(\calE_1 | \calC_n)$  and $\Pr(\calE_2 | \calC_n)$ decay {\em subexponentially} and in an {\em asymmetric} manner, i.e.,  at different speeds.

\begin{lemma}[Moderate Deviations Regime Ensemble] \label{lem:md}
Let $0<  t<1/2$ and $a > b > 0$. Consider a sequence of random codebooks $\calC_n$  with $M_n$ codewords where each codeword is drawn uniformly at random from $\{0,1,\ldots, d-1\}^n$ and $M_n$ satisfies \eqref{eqn:sizeM2}. 
Let  the decoding regions be chosen as in \eqref{eqn:decision_reg} with  thresholds
\begin{equation}
{T_n} :=\frac{b}{n^{ t}}, \label{eqn:thres_md}
\end{equation}
 Let $W$ be an additive DMC.  Then, the expectation of the two error probabilities satisfy 
\begin{align}
  \lim_{n\to\infty}- \frac{1}{n^{1-2t}} \log\bbE_{\calC_n}\big[\Pr(\calE_1 | \calC_n) \big]
&  =\frac{(a-b)^2}{2 V(P) } 
  ,\quad\mbox{and}\label{eqn:e1_md} \\
  b n^{1- t}  +\frac{(a-b)^2}{2V(P)} n^{1-2t}  + o(n^{1-2t})& \le -\log\bbE_{\calC_n}\big[\Pr(\calE_2 | \calC_n) \big]  \le  b n^{1- t}  + o(n^{1- t})
\label{eqn:e2_md} 
\end{align}
\end{lemma}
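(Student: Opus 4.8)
The plan is to exploit the symmetry of the uniform random code over the additive group, to reduce both ensemble error probabilities to moderate-deviation statements about the self-information of the noise, to control the random-coding ``partition function'' by the type-class enumerator method, and finally to feed the resulting cumulant generating function into the shifted G\"artner--Ellis theorem (Theorem~\ref{thm7}). First I would symmetrize: since $W(y|x)=P(y-x)$ and the codewords are i.i.d.\ uniform, I may condition on message $1$ being sent and, translating the whole codebook by $-\bX_1$ (Forney's regions~\eqref{eqn:decision_reg} are translation-equivariant), assume $\bx_1=\bzero$, so that the channel output is $\bz\sim P^n$ while $\bX_2,\ldots,\bX_{M_n}$ remain i.i.d.\ uniform. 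Setting $S_n:=-\log P^n(\bz)=\sum_{i=1}^n(-\log P(z_i))$, an i.i.d.\ sum with mean $nH(P)$ and variance $nV(P)$, and $N:=\sum_{m=2}^{M_n}P^n(\bz-\bX_m)$ (each $\bz-\bX_m$ is uniform, so $N$ is independent of $\bz$ with $\bbE N=(M_n-1)d^{-n}$), Forney's region, exchangeability of the codewords, and one more use of additivity give
\[
\bbE_{\calC_n}[\Pr(\calE_1|\calC_n)]=\Pr(S_n>-\log N-nT_n)
\]
and
\[
\bbE_{\calC_n}[\Pr(\calE_2|\calC_n)]=(M_n-1)\,\Pr(P^n(\bU)\ge e^{nT_n}(P^n(\bz)+N'')),
\]
where in the second identity $\bU$ is uniform and independent of $\bz$ and $N'':=\sum_{m=3}^{M_n}P^n(\bU-\bX_m)$. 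Everything is thereby reduced to tail estimates for $S_n$, for $R:=-\log P^n(\bU)$ (an i.i.d.\ sum under the uniform law with mean $n\tilde\mu>nH(P)$), and for the random-coding sums $N,N''$.

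Then I would analyse $N$ by the type-class enumerator method. Writing $N=\sum_{Q\in\calP_n(\calX)}N_Q\,e^{-n(H(Q)+D(Q\|P))}$ with $N_Q:=|\{m\ge2:\bz-\bX_m\in\calT_Q\}|$, one has $\bbE N_Q\doteq M_n\,e^{n(H(Q)-\log d)}=e^{n(H(Q)-H(P))-an^{1-t}}$. For types whose entropy gap $H(Q)-H(P)$ exceeds $an^{-t}$ by a genuine margin we have $\bbE N_Q\to\infty$, and a second-moment computation gives concentration of $N_Q$; for types below this threshold $N_Q$ is essentially Bernoulli and a first-moment (union) bound suffices. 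Combining this with the quadratic expansion $\min\{D(Q\|P):H(Q)=H(P)+\delta\}=\tfrac{\delta^2}{2V(P)}(1+o(1))$ as $\delta\to0$ (a local expansion of $D(\cdot\|P)$ about $P$, in which $V(P)$ appears as the relevant second-order coefficient) shows that with high probability $-\log N=nH(P)+an^{1-t}+O(n^{1-2t})$, the one-sided refinement $-\log N\le nH(P)+an^{1-t}+\tfrac{a^2}{2V(P)}n^{1-2t}(1+o(1))$ being what feeds the lower bounds; the same statements hold for $-\log N''$.

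Next I would combine this with moderate deviations. For the upper bound in~\eqref{eqn:e1_md}, Markov's inequality applied to $N$ given $\bz$ yields $\bbE_{\calC_n}[\Pr(\calE_1|\calC_n)]\le\bbE_{\bz}[\min\{1,\,M_nd^{-n}e^{nT_n}/P^n(\bz)\}]$; splitting at the threshold $S_n=nH(P)+(a-b)n^{1-t}$ and using Theorem~\ref{thm7} for the tail of $S_n$, the change of measure $\bbE_P[e^{S_n}g(\bz)]=d^n\bbE_U[g(\bz)]$, and a second-order Taylor expansion of the rate function $\tilde I$ of $-\log P(U)$ under the uniform law at $H(P)$ (with $\tilde I(H(P))=C$, $\tilde I'(H(P))=-1$, $\tilde I''(H(P))=1/V(P)$) reduces the bound to $\exp(-\tfrac{(a-b)^2}{2V(P)}n^{1-2t}(1+o(1)))$; the matching lower bound follows by restricting $\Pr(\calE_1|\calC_n)$ to the event that $N$ exceeds its typical value from the previous paragraph and $S_n\ge nH(P)+(a-b)n^{1-t}+o(n^{1-t})$, then invoking the moderate-deviation lower bound. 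For~\eqref{eqn:e2_md} one writes $P^n(\bU)\ge e^{nT_n}(P^n(\bz)+N'')$ as the conjunction (up to a harmless factor $2$) of $S_n\ge R+nT_n$ and $N''\le e^{-R-nT_n}$, parametrizes $R=nH(P)+rn^{1-t}$, and balances the factor $M_n-1\doteq e^{nC-an^{1-t}}$ against the large-deviation cost $e^{-n\tilde I(H(P)+rn^{-t})}\doteq e^{-nC+rn^{1-t}-\frac{r^2}{2V(P)}n^{1-2t}}$ of producing such an $R$, against the moderate-deviation cost of $S_n\ge R+nT_n$, and against the constraint (inherited from the previous paragraph) $r+b\le a+\tfrac{a^2}{2V(P)}n^{-t}+o(n^{-t})$ forced by $N''\le e^{-R-nT_n}$; optimizing the resulting exponent over the admissible $r$ (the optimum sitting at $r\approx a-b$) yields $-\log\bbE_{\calC_n}[\Pr(\calE_2|\calC_n)]\ge bn^{1-t}+\tfrac{(a-b)^2}{2V(P)}n^{1-2t}+o(n^{1-2t})$, while retaining only the leading term (together with an explicit favourable configuration of $\bX_1$ and of the $S_n$-fluctuation) gives $\le bn^{1-t}+o(n^{1-t})$, which is~\eqref{eqn:e2_md}.

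The hard part will be that $N$ is \emph{not} concentrated---its variance dominates the square of its mean, because $\log\sum_z P(z)^2+H(P)>0$ whenever $P$ is non-uniform---so Chebyshev's inequality is useless and one is forced into the type-class split above, the delicate regime being the types $Q$ with $\bbE N_Q=\Theta(1)$, where $N_Q$ behaves like a Poisson variable of constant mean. The second difficulty is quantitative: extracting the \emph{exact} sub-leading coefficient $\tfrac{(a-b)^2}{2V(P)}$ requires interfacing the combinatorial (type) estimates with the analytic moderate-deviation estimates at the precise scale $n^{1-2t}$, and this bookkeeping is exactly what the shifted version of the G\"artner--Ellis theorem is designed to automate, since the relevant tilted cumulant generating function carries the $\Theta(n^{1-t})$ shifts arising from $\log M_n-nC=-an^{1-t}$ and from $nT_n=bn^{1-t}$.
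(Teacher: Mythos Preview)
Your approach is sound and genuinely different from the paper's. The paper does \emph{not} decompose into the independent triple $(S_n,R,N)$ and then optimize; instead it packages everything into the single random variable
\[
F_n \;=\; \log\!\Big(\sum_{m'\neq m} W^n(Y^n\mid X^n_{m'})\Big)\;-\;\log W^n(Y^n\mid X^n_m)\;=\;\log N + S_n,
\]
computes its cumulant generating function $\phi_n(s)=\log\bbE[e^{sF_n}]$ once (the independence you exploit appears only implicitly as the factorisation $\phi_n=\log A+\log B$), shows via the type-class enumerator that $\phi_n(u n^{-t})=(-au+u^2V(P)/2)\,n^{1-2t}+o(n^{1-2t})$, and then feeds this directly into the shifted G\"artner--Ellis theorem (Theorem~\ref{thm7}): Case~(ii) with $\theta_0=0$ for $\calE_1$, and Case~(ia) with the change of measure $\bbQ'(\cdot\mid\calC_n)=\bbQ(\cdot\mid\calC_n)/(M_n-1)$ and shift $\theta_0=-1$, $\alpha_n=-\log(M_n-1)$ for $\calE_2$. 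So the paper's route is: one CGF calculation, two invocations of Theorem~\ref{thm7}, no explicit optimisation over an auxiliary parameter $r$.

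What the two routes buy: the paper's is shorter and cleaner once Theorem~\ref{thm7} is in hand, but it \emph{requires} the shifted G\"artner--Ellis machinery precisely because the leading $bn^{1-t}$ term in $-\log\bbE[\Pr(\calE_2\mid\calC_n)]$ comes not from the ``rate function'' $y_0x-\nu_2(y_0)$ but from the shift $(\theta_0 x-\nu_1)\beta_n$. Your decomposition avoids the shifted theorem entirely (standard i.i.d.\ moderate/large deviations for $S_n$ and $R$ suffice) and makes the independence structure explicit, at the price of the bare-hands optimisation over $r$ and a more delicate splicing of the type-enumerator estimates with the tail bounds.

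One bookkeeping point to revisit: in your $\calE_2$ lower bound, if you genuinely impose \emph{both} necessary conditions $S_n\ge R+bn^{1-t}$ and $N''\le e^{-R-bn^{1-t}}$ and use that $(R,S_n,N'')$ are mutually independent, then at the optimiser $r\approx a-b$ the $R$-cost is $\frac{(a-b)^2}{2V(P)}n^{1-2t}$ while the $S_n$-cost is $\frac{a^2}{2V(P)}n^{1-2t}$, so the resulting lower bound on $-\log\bbE[\Pr(\calE_2\mid\calC_n)]$ has $n^{1-2t}$-coefficient $\frac{(a-b)^2+a^2}{2V(P)}$, which is \emph{strictly larger} than the $\frac{(a-b)^2}{2V(P)}$ you (and the paper) state. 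This is not a flaw---it would only strengthen the lemma---but it means either your stated coefficient is too conservative, or you intend to drop the $S_n$ constraint in the upper-bounding step (which also works and recovers exactly $\frac{(a-b)^2}{2V(P)}$). Either way the argument goes through, but the sentence describing the balance should be tightened.
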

The proof of this lemma is provided in Section~\ref{sec:prf-md}.  From this lemma, we can show Theorem \ref{thm:md} by a simple argument which we defer to Section~\ref{sec:optimality_prf}.   At this point, a few other comments concerning are in order.


This result again corresponds to the so-called moderate deviations regime  in channel coding considered by Altu\u{g} and Wagner~\cite{altug14b} and Polyanskiy and Verd\'u~\cite{PV10a}. Thus, the appearance of the varentropy term $V(P)$ in the results is natural.  
The total and undetected error probabilities in \eqref{eqn:e1_md} and \eqref{eqn:e2_md} can  be written as 
\begin{align}
\bbE_{\calC_n}\big[\Pr(\calE_1 | \calC_n ) \big] 
 & =  \exp\left(-  \frac{(a-b)^2}{2 V(P)}  \, n^{1-2t}     + o( n^{1-2t}) 
\right), 
\quad\mbox{and} \label{eqn:e1_md2} \\
\bbE_{\calC_n}\big[\Pr(\calE_2 | \calC_n ) \big] 
& =  \exp\big(-bn^{1-t}    + o(n^{1- t}) \big).
\label{eqn:e2_md2}
\end{align}
respectively. This  scaling is also different from those found in the literature which primarily focus on exponentially decaying probabilities~\cite{Forney68,sgb, tel94, Blinovsky, Moulin09, merhav08,merhav13,merhav14,  somekh11}  or non-vanishing error probabilities~\cite{TanMoulin14}. Both our total and undetected error  probabilities are designed to decay subexponentially fast in the blocklength $n$. Our proof technique involves estimating appropriately-defined cumulant generating functions and invoking a modified version of the G\"artner-Ellis theorem~\cite[Theorem 2.3.6]{Dembo}. The statement of this modified form of the G\"artner-Ellis theorem is presented as  Theorem~\ref{thm7} in Appendix~\ref{app:ge} and we provide a self-contained proof therein.  Similarly to the work by Somekh-Baruch and Merhav \cite{somekh11}, the two probabilities in \eqref{eqn:e1_md}--\eqref{eqn:e2_md} are asymptotic {\em equalities} (if we consider the normalizations $n^{1-2t}$ and $n^{1-t}$) rather than inequalities (cf.~\cite{Forney68,merhav08}). In fact for the lower  bound in  \eqref{eqn:e2_md}, we can even calculate  a higher-order asymptotic term scaling as $n^{1-2t}$ (but unfortunately, we do not yet have a matching upper bound for the higher-order term).

Next,   observe that  the undetected error decays  faster than the total error  because the former is  more undesirable than an erasure. If $a$  is increased for fixed $b$, the effective number of codewords is decreased so commensurately, the total error probability   $\Pr(\calE_1 | \calC_n )$ is also reduced.  Also,   if $b$  is increased (tending towards  $a$ from below), the probability of an erasure increases and so the probability of an undetected error decreases. This is evident in \eqref{eqn:e1_md2} where the coefficient $\frac{(a-b)^2 }{ 2V(P)}$ decreases and  in~\eqref{eqn:e2_md2} where the leading coefficient $b$     increases.  Thus, we observe a  delicate interplay between $a$  governing the code size and $b$, the parameter in the threshold.

Finally, if $T_n$ is negative (a case not allowed by Lemma~\ref{lem:md}). This corresponds to {\em list decoding} \cite{Forney68} where the decoder is allowed to output more than one message (i.e., a {\em list} of messages) and an error event occurs if and only if the transmitted message is not in the list. In this case,  $\Pr(\calE_2|\calC_n)$ no longer corresponds to the probability of undetected error. Rather, the expression for $\Pr(\calE_2|\calC_n)$ in \eqref{eqn:p2}  corresponds to the average number of incorrect codewords in the list corresponding to the  {\em overlapping} (non-disjoint) decision regions $\{\calD_m\}_{m=1}^{M_n}$.


\begin{lemma}[Mixed Regime Ensemble] \label{lem:clt}
Let   $b > 0$, $a\in\bbR$ and $M_n$ chosen according to \eqref{eqn:sizeM2} with $t=1/2$.  Consider a sequence of random codebooks $\calC_n$  with $M_n$ codewords where each codeword is drawn uniformly at random from $\{0,1,\ldots, d-1\}^n$, the decoding regions are chosen according to \eqref{eqn:decision_reg} with thresholds
\eqref{eqn:thres_md}.  Let $W$ be an additive DMC. 
Then, the  expectation of the two error probabilities satisfy 
\begin{align}
\lim_{n\to\infty}\,\,  \bbE_{\calC_n} \big[\Pr(\calE_1 | \calC_n)\big] 
& =\Phi\bigg( \frac{b-a}{\sqrt{V(P)}}\bigg)  
\quad\mbox{and} \label{eqn:e1_clt}  \\
   b \sqrt{n}+ \frac{(a-b)^2}{2V(P)} + o(1) & \le -   \log\bbE_{\calC_n}\big[\Pr(\calE_2 | \calC_n) \big] 
  \le  b \sqrt{n}  +  o(\sqrt{n})  . 
\label{eqn:e2_clt} 
\end{align}
\end{lemma}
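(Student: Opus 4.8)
The plan is to follow the template of the proof of Lemma~\ref{lem:md} (Section~\ref{sec:prf-md}) specialised to $t=1/2$; indeed \eqref{eqn:e2_clt} is exactly the $t=1/2$ instance of \eqref{eqn:e2_md}, so the one genuinely new ingredient is the \emph{central-limit} (rather than moderate-deviations) behaviour of the total error in \eqref{eqn:e1_clt}. The starting point is a reduction exploiting the uniform random code and the additive structure $Y=X+Z\pmod d$. Conditioning on message $1$ being sent, writing $\bZ\sim P^n$ for the noise, and using that the differences $\bX_1-\bX_m$ of distinct uniform codewords are again uniform and that, after the substitution $\bV_m:=\bX_1-\bX_m+\bZ$ ($m\ge 2$), the $\bV_m$ are i.i.d.\ uniform and independent of $\bZ$, one gets the \emph{exact} identities
\begin{align*}
\bbE_{\calC_n}\big[\Pr(\calE_1|\calC_n)\big] &= \Pr\!\big(P^n(\bZ) < e^{b\sqrt n}\mathsf{N}\big), \\
\bbE_{\calC_n}\big[\Pr(\calE_2|\calC_n)\big] &= (M_n-1)\,\Pr\!\big(P^n(\bV) \ge e^{b\sqrt n}(P^n(\bZ)+\mathsf{N}')\big),
\end{align*}
where $\mathsf{N}$, $\mathsf{N}'$ are sums of $M_n-1$, $M_n-2$ of the i.i.d.\ likelihoods $P^n(\bV_m)$, $\bV$ is a further independent uniform sequence, and all are independent of $\bZ$. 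Here $\mathsf{S}_Z:=-\log P^n(\bZ)=\sum_i(-\log P(Z_i))$ is an i.i.d.\ sum of mean $nH(P)$ and variance $nV(P)\in(0,\infty)$; $-\log P^n(\bV)$ is an i.i.d.\ sum of mean $nc$ with $c:=\bbE_{\mathrm{Unif}}[-\log P(V)]>H(P)$, and its Cram\'er rate function $J$ satisfies the crucial identity $J(H(P))=C$, with $J'(H(P))=-1$ and $J''(H(P))=1/V(P)$, so $J(\sigma)=C-(\sigma-H(P))+\tfrac{(\sigma-H(P))^2}{2V(P)}+O(|\sigma-H(P)|^3)$.

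For \eqref{eqn:e1_clt}: I would first show $\log\mathsf{N}=-nH(P)-a\sqrt n+o_{\bbP}(\sqrt n)$ — the upper tail from Markov applied to $\bbE\mathsf{N}=(M_n-1)d^{-n}=e^{-nH(P)-a\sqrt n}(1+o(1))$, the lower tail from $\mathsf{N}\ge\max_m P^n(\bV_m)$ and the method-of-types bound $\Pr(-\log P^n(\bV)\le n\sigma)\ge\poly(n)^{-1}e^{-nJ(\sigma)}$ at $\sigma=H(P)+(a+\eps)/\sqrt n$, since then $(M_n-1)\poly(n)^{-1}e^{-nJ(\sigma)}=e^{\eps\sqrt n+o(\sqrt n)}\to\infty$ by $J(H(P))=C$. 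Hence $-b\sqrt n-\log\mathsf{N}=nH(P)+(a-b)\sqrt n+o_{\bbP}(\sqrt n)$ and
\[
\Pr\!\big(P^n(\bZ)<e^{b\sqrt n}\mathsf{N}\big)=\Pr\!\left(\tfrac{\mathsf{S}_Z-nH(P)}{\sqrt{nV(P)}}>\tfrac{a-b}{\sqrt{V(P)}}+o_{\bbP}(1)\right)\longrightarrow\Phi\!\left(\tfrac{b-a}{\sqrt{V(P)}}\right),
\]
using the CLT for $\mathsf{S}_Z$, the independence of $\bZ$ and $\mathsf{N}$, and Slutsky's theorem (the limiting law being continuous); recall $V_{\min}(W)=V_{\max}(W)=V(P)$ for additive channels.

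For \eqref{eqn:e2_clt}: the statement coincides with \eqref{eqn:e2_md} at $t=1/2$, so I would re-run the type-class-enumerator argument of Section~\ref{sec:prf-md}. The point is that $\{P^n(\bV)\ge e^{b\sqrt n}(P^n(\bZ)+\mathsf{N}')\}$ is rare for $\bV$ and, when it occurs, drives $\bZ$ atypical ($P^n(\bZ)$ small), so $P^n(\bZ)+\mathsf{N}'$ is effectively $\mathsf{N}'$, for which $\log\mathsf{N}'=-nH(P)-a\sqrt n+o_{\bbP}(\sqrt n)$ as above (Markov for the upper tail, which decays at rate $\sqrt n$ in the log; the maximum bound for the essentially hard lower floor). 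Conditioning on $\mathsf{N}'$ and invoking the sharp large-deviation asymptotics $\Pr(-\log P^n(\bV)\le nH(P)+u\sqrt n)=e^{-nJ(H(P)+u/\sqrt n)}\poly(n)^{-O(1)}=e^{-nC+u\sqrt n-\frac{u^2}{2V(P)}+o(1)}\poly(n)^{-O(1)}$ with $u=a-b+o(1)$, then multiplying by $M_n-1=e^{nC-a\sqrt n+o(1)}$, yields $\bbE_{\calC_n}[\Pr(\calE_2|\calC_n)]=e^{-b\sqrt n+o(\sqrt n)}$ and the refined $-\log\bbE_{\calC_n}[\Pr(\calE_2|\calC_n)]\ge b\sqrt n+\frac{(a-b)^2}{2V(P)}+o(1)$; the matching lower bound on $\bbE_{\calC_n}[\Pr(\calE_2|\calC_n)]$ comes from restricting $-\log P^n(\bV)$, $\mathsf{S}_Z$, $\mathsf{N}'$ to a favourable product event of probability $e^{-nC+(a-b)\sqrt n-o(\sqrt n)}$. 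The device that packages these large/moderate-deviation estimates at the relevant scales ($\sqrt n$ for the leading term, $O(1)$ for the quadratic correction) is the shifted G\"artner--Ellis theorem, Theorem~\ref{thm7}: one forms the normalised cumulant generating function of the relevant log-likelihood-ratio statistic, checks its limit and the required differentiability/essential smoothness at the shifted evaluation point, and reads off the tail probability, the term $\frac{(a-b)^2}{2V(P)}$ being precisely the Gaussian correction from the local quadratic behaviour of that CGF (equivalently, of $J$ at $H(P)$).

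The main obstacle is that $\mathsf{N}$ (resp.\ $\mathsf{N}'$) is a sum of $e^{\Theta(n)}$ i.i.d.\ likelihoods that does \emph{not} concentrate about its mean: its exponential size is governed by the rare ``atypical'' codewords whose induced noise resembles the true law $P$. One must therefore show instead that $\log\mathsf{N}$ concentrates at $-nH(P)-a\sqrt n$ up to $o(\sqrt n)$, and — more delicately for the undetected error — couple the large-deviation fluctuation of a wrong codeword's self-information with the atypicality of $\bZ$ and the threshold $b\sqrt n$ to extract the exact constant $\frac{(a-b)^2}{2V(P)}$. Verifying the hypotheses of Theorem~\ref{thm7} (existence and smoothness of the limiting CGF at the shifted point, with its local quadratic expansion) to the $o(1)$ accuracy required for that constant is the technical crux.
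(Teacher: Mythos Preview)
Your proposal is correct in outline and, for the undetected error \eqref{eqn:e2_clt}, coincides with the paper's argument: both feed the cumulant generating function of the log-likelihood ratio (Lemma~\ref{lem:6}) into Case~(ib) of Theorem~\ref{thm7}, and you rightly flag that the $o(1)$ accuracy on the constant $\frac{(a-b)^2}{2V(P)}$ \emph{cannot} come from your direct method-of-types estimate (the $\poly(n)^{-O(1)}$ slack is $O(\log n)$, not $o(1)$), so you would end up computing essentially the same CGF as Lemma~\ref{lem:6} to verify the quadratic limit $\nu_2$ at the shifted point. Your probabilistic representation $(M_n-1)\Pr\big(P^n(\bV)\ge e^{b\sqrt n}(P^n(\bZ)+\mathsf{N}')\big)$ is correct (the $\bV,\bV_m,\bZ$ are indeed jointly independent under the uniform-codeword model), but it does not bypass that CGF computation.

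For the total error \eqref{eqn:e1_clt} your route is genuinely different. The paper does \emph{not} argue via concentration of $\log\mathsf N$ and Slutsky; instead it observes that Lemma~\ref{lem:6} gives $\phi_n(u/\sqrt n)\to -au+\tfrac12 V(P)u^2$ \emph{only for $u>0$}, and then invokes the Mukherjea--Rao--Suen result (Lemma~\ref{l625}, building on Curtiss) that pointwise convergence of the CGF on a one-sided interval already forces weak convergence of $F_n/\sqrt n$ to $\mathcal N(-a,V(P))$. Your approach---upper tail of $\mathsf N$ by Markov on $\bbE\mathsf N=(M_n-1)d^{-n}$, lower tail from $\mathsf N\ge\max_m P^n(\bV_m)$ together with the local expansion $J(H(P)+u/\sqrt n)=C-u/\sqrt n+O(1/n)$ (your identities $J(H(P))=C$, $J'(H(P))=-1$, $J''(H(P))=1/V(P)$ are correct)---is more hands-on and avoids Lemma~\ref{l625}, at the cost of proving the separate concentration statement $\log\mathsf N=-nH(P)-a\sqrt n+o_{\bbP}(\sqrt n)$. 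The paper's approach buys reuse of Lemma~\ref{lem:6} already proved for Lemma~\ref{lem:md}; yours buys an argument that stays within standard CLT/Slutsky tools and makes the mechanism (dominance of the rare codewords whose induced noise type is near $P$) more transparent.
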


The proof of this lemma is provided in Section~\ref{sec:prf-clt}. It is largely  similar to that for Lemma~\ref{lem:md} but for  the total error probability in~\eqref{eqn:e1_clt},  instead of invoking the G\"artner-Ellis theorem~\cite[Theorem 2.3.6]{Dembo}, we use the fact that if the cumulant generating function of a sequence of random variables $\{K_n\}_{n\in\bbN}$ converges to a quadratic function, $\{K_n\}_{n\in\bbN}$  converges in distribution  to a Gaussian random variable. However, this is not completely straightforward as we can only prove that the cumulant generating function converges pointwise for {\em positive} parameters (cf.~Lemma~\ref{lem:6}). We thus need to invoke a result by Mukherjea {\em et al.} \cite[Thm.~2]{muk06}  (building on initial work by Curtiss~\cite{curtiss}) to assert weak convergence.  (See Lemma \ref{l625} in Appendix \ref{app:conv}.) The asymptotic bounds in~\eqref{eqn:e2_clt}  are proved using  a modified version of the  G\"artner-Ellis theorem.


Here, ignoring the constant term in the lower bound, the undetected error probability  in \eqref{eqn:e2_clt} decays   as 
\begin{equation}
\bbE_{\calC_n}\big[\Pr(\calE_2 | \calC_n ) \big] =   \exp\left(- b\sqrt{n}  + o(\sqrt{n}  )\right). \label{eqn:prefact}
\end{equation}
The total (and hence, erasure)  error probability in \eqref{eqn:e1_clt} is asymptotically a constant depending on the varentropy of the noise distribution $P$, the threshold parametrized by $b$ and the code size parametrized by $a$.   Similarly to Lemma~\ref{lem:md}, if $b$ increases for fixed $a$, the likelihood of an erasure event occurring also increases but this decreases the undetected error probability as evidenced by \eqref{eqn:e2_clt}. The situation in which $b\downarrow  0$  for fixed $a$ recovers a special case of a recent result by Tan and Moulin~\cite[Thm.~1]{TanMoulin14} where the total error probability is kept constant at a positive constant and the undetected error probability vanishes. Note that for this result, we do not require that $a>b$ unlike what we assumed for the pure moderate deviations setting of Lemma~\ref{lem:md}. 

In the same way as we can show Theorem \ref{thm:md} from Lemma \ref{lem:md}, we can also use the exact same argument to show Theorem \ref{thm:clt} from Lemma \ref{lem:clt}. Thus, we omit the details here.

\section{Proofs of the Main Results}  \label{sec:prfs}
\subsection{Proof of Theorem~\ref{thm1}}\label{sec:prf1}
Choose any input distribution $P_X \in\Pi(W)$  achieving $V_{\min}(W)$ 
in (\ref{6-6-2}).
We consider choosing each codeword $\bx_m, m \in \{1,\ldots, M_n\}$ 
with the product distribution $P_X^n \in\calP( \calX^n)$. 
The expectation over this random choice of codebook is denoted as $\bbE_{\calC_n}[ \cdot]$. Now, we first consider $\Pr(\calE_1 |\calC_n)$.  Define the (capacity-achieving) output distribution $P_Y := P_X W$ and its $n$-fold memoryless extension $P_Y^n$. 
Next, we consider regions  
$\tilde{\cal D}_m$   defined in \eqref{eqn:decision_reg1}. The expectation over the code of the $W^n(\cdot|X_{m'}^n)$-probability of $\tilde{\cal D}_m$  can be evaluated as
\begin{align}
& \bbE_{\calC_n} \left[ 
\sum_{\by} W^n(\by|X^n_{m'})
\bone
\left\{ 
W^n(\by|X^n_m)
\ge 
M_n \exp( b n^{1-t})
P_Y^n (\by)
\right\}
\right]   \nn\\
&= 
\bbE_{X_m^n} 
\left[
\sum_{\by}
P_Y^n(\by)
\bone
\left\{ 
W^n(\by|X^n_m)
\ge 
M_n \exp( b n^{1-t})
P_Y^n (\by)
\right\}
\right]  \label{eqn:explain_PY} \\
&\le 
\bbE_{X_m^n} 
\left[
\sum_{\by}
M_n^{-1} \exp(- b n^{1-t})
W^n(\by|X^n_m)
\bone
\left\{ 
W^n(\by|X^n_m)
\ge 
M_n \exp( b n^{1-t})
P_Y^n (\by)
\right\}
\right]   \\
&\le  M_n^{-1} \exp(- b n^{1-t})
\end{align}
for $m'\neq m$, where \eqref{eqn:explain_PY} is because of independence of codeword generation and  $\bbE_{X^n_{m'}}[W^n(\by|X^n_{m'})]=P_{Y }^n(\by)$. 
Since $ \hat{\calD}_m \subset \tilde{\calD}_m$, by the definition of $\tilde{\calD}_m$ in~\eqref{eqn:decision_reg1},  
the expectation of 
the undetected error probability over the random codebook can be written as 
\begin{align}
 \bbE_{\calC_n} \big[ \Pr(\calE_2|\calC_n) \big]  
& \le
\bbE_{\calC_n} \left[ 
\frac{1}{{M_n}}\sum_{m=1}^{M_n} \sum_{\by  } \sum_{m'\ne m} W^n(\by|X^n_{m'})
\bone
\left\{ 
W^n(\by|X^n_m)
\ge 
M_n \exp( b n^{1-t})
P_Y^n (\by)
\right\}
\right]\\
& \le
\frac{1}{{M_n}}\sum_{m=1}^{M_n} \sum_{m'\ne m}
M_n^{-1} \exp(- b n^{1-t})\\
&=\frac{M_n-1}{M_n}\exp(- b n^{1-t}) \\
&\le \exp(- b n^{1-t}).\label{6-6-4}
\end{align}
Hence, this bound verifies \eqref{eqn:e2_md_direct}.  

By the definition of $\hat{\calD}_m$ for $m = 0,1 , \ldots, M_n$ in \eqref{eqn:decision_reg1} and~\eqref{eqn:D0}, we know that 
\begin{equation}
\hat{\calD}_m^c = \tilde{\calD}_m^c\cup\hat{\calD}_0=\tilde{\calD}_m^c \cup\bigcup_{m'\ne m}  \left(\tilde{\calD}_m\cap \tilde{\calD}_{m'} \right)\subset \tilde{\calD}_m^c \cup\bigcup_{m'\ne m}    \tilde{\calD}_{m'} . \label{eqn:sets}
\end{equation}
The expectation of the total error probability over the random codebook can be written as 
\begin{align}
&\bbE_{\calC_n} \big[ \Pr(\calE_1|\calC_n) \big]  \nn\\*
 &=\bbE_{\calC_n} \Bigg[ \frac{1}{M_n}\sum_{m=1}^{M_n}\sum_{\by\in\hat{\calD}_m^c} W^n(\by|X_m^n) \Bigg] \\
&\le\bbE_{X_m^n} \Bigg[ \frac{1}{M_n}\sum_{m=1}^{M_n}\sum_{\by} W^n(\by|X_m^n) \bone\{\by\in\tilde{\calD}_m^c\}\Bigg]  + \bbE_{\calC_n} \Bigg[ \frac{1}{M_n}\sum_{m=1}^{M_n}\sum_{\by}\sum_{m'\ne m} W^n(\by|X^n_m) \bone\{\by \in \tilde{\calD}_{m'} \}\Bigg] \label{eqn:split} \\
&\le\bbE_{X_m^n} \left[ \frac{1}{M_n}\sum_{m=1}^{M_n}\sum_{\by} W^n(\by|X_m^n) \bone\{\by\in\tilde{\calD}_m^c\}\right] + \exp(-bn^{1-t}), \label{eqn:intro_clt0}\\
&= \sum_{ \bx,\by}
P_X^n(\bx) W^n(\by|\bx) \bone
\left\{ 
 \log \frac{W^n(\by|\bx)}{P_Y^n (\by)} -nC 
< -(a-b) n^{1-t} \right\} + \exp(-bn^{1-t}), \label{eqn:intro_clt}
\end{align}
where \eqref{eqn:split} follows from \eqref{eqn:sets}, \eqref{eqn:intro_clt0} follows from similar calculations that led  to~\eqref{6-6-4}, and \eqref{eqn:intro_clt} follows from the definition of $\tilde{\calD}_m$ and the choice of $M_n$ in \eqref{eqn:sizeM}.
In fact, by using the bound $\hat{\calD}_m^c \supset\tilde{\calD}_m^c$ from the first equality in~\eqref{eqn:sets}, we see that the upper bound on  $\bbE_{\calC_n}  [ \Pr(\calE_1|\calC_n)  ]$  in \eqref{eqn:intro_clt} is tight in the sense that it can also be lower bounded as 
\begin{equation}
\bbE_{\calC_n} \big[ \Pr(\calE_1|\calC_n) \big]\ge\sum_{ \bx,\by}
P_X^n(\bx) W^n(\by|\bx) \bone
\left\{ 
 \log \frac{W^n(\by|\bx)}{P_Y^n (\by)} -nC 
<-(a-b)n^{1-t} \right\}. \label{eqn:lower_bd} 
\end{equation}
Recall that $a>b$. By  the moderate deviations theorem~\cite[Thm.~3.7.1]{Dembo}, 
the sums on the right-hand-sides of \eqref{eqn:intro_clt}  and \eqref{eqn:lower_bd} behave as
\begin{equation}
\exp\bigg(-  n^{1-2t} \, \frac{(a-b)^2}{2 U(P_X,W) }   + o(n^{1-2t}) \bigg),\label{6-6-5}
\end{equation}
which is much larger than  (i.e., dominates) the second term in \eqref{eqn:intro_clt}, namely  $\exp(- b n^{1-t})$.
Since $U(P_X,W) = V_{\min}(W)$  \cite[Lem.~62]{PPV10},  we have the asymptotic equality in \eqref{eqn:e1_md_direct}.

To derandomize the code, fix $\theta\in (0,1)$. By 
employing Markov's inequality  to \eqref{6-6-4} 
and \eqref{6-6-5} (cf.\ the proof of \cite[Thm.~1]{TanMoulin14}), we obtain
\begin{align}
\Pr \left(  \Pr(\calE_1|\calC_n)> \frac{1}{\theta}\bbE_{\calC_n} \big[ \Pr(\calE_1|\calC_n) \big] \right) & <\theta ,\quad\mbox{and} \label{eqn:derandom1}\\
\Pr \left(  \Pr(\calE_2|\calC_n)> \frac{1}{1-\theta}\bbE_{\calC_n} \big[ \Pr(\calE_2|\calC_n) \big] \right) & < 1-\theta.
\end{align}
Thus, 
\begin{equation}
\Pr \left(  \Pr(\calE_1|\calC_n)> \frac{1}{\theta}\bbE_{\calC_n} \big[ \Pr(\calE_1|\calC_n) \big]\mbox{ or }   \Pr(\calE_2|\calC_n)> \frac{1}{1-\theta}\bbE_{\calC_n} \big[ \Pr(\calE_2|\calC_n) \big]  \right)< 1.
\end{equation}
Thus,  by taking $\theta=\frac{1}{2},$ there exists a deterministic code satisfying
%
\begin{align}
\Pr(\calE_1|\calC_n)  &\le 2 \exp\bigg(-  n^{1-2t} \, \frac{(a-b)^2}{2 U(P_X,W) }   + o(n^{1-2t}) \bigg),\quad\mbox{and}\\
\Pr(\calE_2|\calC_n) & \le 2 \exp(- b n^{1-t}). \label{eqn:derandom-end}
\end{align}
 This completes the proof.

\subsection{Proof of Theorem~\ref{thm2}}\label{sec:prf2}
In this case, $t=1/2$. We first consider the case where $a \le 0$. Choose $P_X$ that achieves $V_{\max}(W)$. 
In this case by the Berry-Esseen theorem~\cite[Sec.~XVI.7]{feller}, the right-hand-sides of \eqref{eqn:intro_clt} and \eqref{eqn:lower_bd} behave as
\begin{equation}
\Phi\bigg( \frac{b-a}{\sqrt{V_{\max}(W)}}\bigg)+ O\bigg( \frac{1}{\sqrt{ n} }\bigg).
\end{equation}
Thus, by the same   Markov inequality argument to derandomize the code as above, for any sequence $\{\theta_n \}_{n\in\bbN} \subset (0,1)$, there exists a sequence of deterministic codes $\calC_n$ satisfying 
\begin{align}
\Pr(\calE_1|\calC_n)  &\le \frac{1}{\theta_n} \left[\Phi\bigg( \frac{b-a}{\sqrt{V_{\max}(W)}}\bigg) + O\bigg( \frac{1}{\sqrt{ n} }  \bigg)\right]\quad\mbox{and} \label{eqn:derandom_clt}\\
\Pr(\calE_2|\calC_n) &\le \frac{1}{1-\theta_n}\exp(-\sqrt{n}b).\label{eqn:derandom_clt2}
\end{align}
Choose $\theta_n :=1-1/n$ to complete the proof of the theorem for $a \le 0$. For $a >0$, choose the input distribution $P_X$ to achieve $V_{\min}(W)$ and proceed in exactly the same way.

\subsection{Proof of Lemma \ref{lem:md}}\label{sec:prf-md}
\begin{proof}
We consider choosing each codeword $\bx_m, m \in \{1,\ldots, {M_n}\}$ uniformly at random from $\{0,1,\ldots, d-1\}^n$. Indeed, a capacity-achieving input distribution of the additive channel is the uniform distribution on $\{0,1,\ldots, d-1\}$. As above, the expectation over this random choice of codebook is denoted as $\bbE_{\calC_n}[ \cdot]$. Now, we first consider $\Pr(\calE_1 |\calC_n)$. From the definition in~\eqref{eqn:p1}, the expectation of the error probability over the random codebook can be written as 
\begin{align}
\bbE_{\calC_n} \big[ \Pr(\calE_1|\calC_n) \big] & = 
\bbE_{\calC} \left[ \frac{1}{{M_n}}\sum_{m=1}^{M_n} \sum_{\by} W^n(\by|X^n_m) \bone\left\{ \frac{ \sum_{m'\ne m} W^n(\by|X^n_{m'}) }{W^n(\by|X^n_m)  }\ge 
\exp(-n{T_n})
\right\}\right]\\
& = \bbE_{\calC_n} \left[ \frac{1}{{M_n}}\sum_{m=1}^{M_n}  
\Pr\Bigg( \log   \bigg( \sum_{m'\ne m} W^n(Y^n|X^n_{m'})\bigg)  - \log W^n(Y^n|X^n_m) \ge 
-n{T_n}
 \,  \bigg|\, \calC_n \Bigg)\right]  \label{eqn:inner1}\\ 
& =  \frac{1}{{M_n}}\sum_{m=1}^{M_n}  \Pr\Bigg(\log  \bigg(  \sum_{m'\ne m} W^n(Y^n|X^n_{m'}) \bigg) - \log W^n(Y^n|X^n_m)\ge -b n^{1-t}
  \Bigg)\label{eqn:outer}     
\end{align}
In \eqref{eqn:inner1}, the inner probability is over $Y^n\sim W^n(\cdot|\bx_m)$ for a fixed code $\calC_n$ and in \eqref{eqn:outer}, the probability is over both the random codebook $\calC_n$ and the channel output $Y^n$ given message $m$ was sent. By symmetry of the codebook generation, it is sufficient to study the behavior of the random variable
\begin{equation}
F_n:= \log   \bigg(\sum_{m'\ne m} W^n(Y^n|X^n_{m'}) \bigg) 
- \log W^n(Y^n|X^n_m) \label{eqn:defZ}
\end{equation}
for any $m\in\{1,\ldots, {M_n}\}$, say $m=1$.   
In particular, to estimate the probability $\Pr(F_n\ge -bn^{1-t})$ in \eqref{eqn:outer}, it suffices to estimate the cumulant generating function of $F_n$.  We denote the cumulant generating function  as
\begin{align}
\phi_n(s) &:= \log\bbE \big[ \exp(sF_n) \big]  \label{eqn:phi1} \\
&= \log \bbE_{\calC_n} \bigg[ 
\sum_{\by} W^n(\by | X_m^n)^{1-s} \bigg(\sum_{m'\ne m}  W^n(\by | X_{m'}^n) \bigg)^s \bigg] \label{eqn:phi2} \\
&=  \log \sum_\by  \bbE_{\calC_n} \left[  W^n(\by | X_m^n)^{1-s}\right] \cdot    \bbE_{\calC_n} \bigg[\bigg(\sum_{m'\ne m}  W^n(\by | X_{m'}^n) \bigg)^s \bigg]  . \label{eqn:phi3} 
\end{align}
The final equality follows from the independence in the  codeword generation procedure. We have the following important lemma which is proved in Section~\ref{sec:prf6}.
\begin{lemma}[Asymptotics of Cumulant Generating Functions]\label{lem:6}
Fix $t\in (0,1/2]$.
Given the condition on the code size in \eqref{eqn:sizeM2}, 
the cumulant generating function satisfies 
\begin{equation}
\phi_n\bigg( \frac{u}{n^t}\bigg)=
\bigg(-au    + u^2   \frac{V(P)}{2} \bigg)n^{1-2t}
+O(n^{1-3t})  + o(1) 
\label{eqn:final_cgf}
\end{equation}
for  any constant $u > 0$.
\end{lemma}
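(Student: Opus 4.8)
The plan is to exploit the additive structure of $W$ and the uniform codeword distribution to collapse the factorization \eqref{eqn:phi3} into two scalar quantities, which are then estimated separately. Fix $\by\in\calY^n$. Writing $W^n(\by|\bx)=\prod_{i=1}^n P(y_i-x_i)$ and using that $y_i-X_{m,i}$ is uniform on $\{0,\dots,d-1\}$ whenever $X_{m,i}$ is, the first factor in \eqref{eqn:phi3} becomes $\bbE_{\calC_n}[W^n(\by|X_m^n)^{1-s}]=\big(\tfrac1d\sum_z P(z)^{1-s}\big)^n$, independent of $\by$. Similarly, for fixed $\by$ the vector $(y_i-X_{m',i})_{i=1}^n$ is uniform on $\{0,\dots,d-1\}^n$, so $\sum_{m'\ne m}W^n(\by|X_{m'}^n)$ has the same law as $T:=\sum_{j=1}^{M_n-1}N_j$ with $N_j:=\prod_{i=1}^n P(V_i^{(j)})$ and $V^{(1)},\dots,V^{(M_n-1)}$ i.i.d.\ uniform on $\{0,\dots,d-1\}^n$; hence the second factor equals $\bbE[T^s]$, also independent of $\by$. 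Summing over the $d^n$ outputs $\by$ then gives
\begin{equation*}
\phi_n(s)=n\log\Big(\sum_z P(z)^{1-s}\Big)+\log\bbE[T^s].
\end{equation*}

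For the first term, $g(s):=\log\sum_z P(z)^{1-s}=\log\bbE_P[e^{-s\log P(Z)}]$ is the cumulant generating function of $-\log P(Z)$ with $Z\sim P$, so $g(0)=0$, $g'(0)=H(P)$, $g''(0)=V(P)$, and $g'''$ is bounded near $0$ (finite alphabet, $P>0$); Taylor's theorem with $s=un^{-t}$ yields $n\,g(un^{-t})=H(P)u\,n^{1-t}+\tfrac{V(P)}{2}u^2 n^{1-2t}+O(n^{1-3t})$. For the second term, an upper bound is immediate: since $0<s<1$ for large $n$ and $x\mapsto x^s$ is concave, Jensen gives $\bbE[T^s]\le(\bbE T)^s$, and because $\bbE N_j=\prod_i \tfrac1d\sum_z P(z)=d^{-n}$ we have $\bbE T=(M_n-1)d^{-n}$, so with $C=\log d-H(P)$ and \eqref{eqn:sizeM2},
\begin{equation*}
\log\bbE[T^s]\le s\log\!\big((M_n-1)d^{-n}\big)=-H(P)u\,n^{1-t}-au\,n^{1-2t}+o(1).
\end{equation*}

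The crux is a matching lower bound for $\log\bbE[T^s]$, which I would obtain by the type class enumerator method. Grouping the $V^{(j)}$ by their type $Q\in\calP_n(\calX)$ gives $T=\sum_Q A_n(Q)e^{-n\alpha(Q)}$, where $\alpha(Q):=-\sum_z Q(z)\log P(z)=H(Q)+D(Q\|P)$ and $A_n(Q)$ is $\mathrm{Bin}(M_n-1,\pi_n(Q))$ with $\pi_n(Q)=\binom{n}{nQ}d^{-n}$. Retaining a single summand, $\bbE[T^s]\ge e^{-ns\alpha(Q_0)}\bbE[A_n(Q_0)^s]$ for any type $Q_0$. I would take $Q_0$ to be the nearest type to a distribution at $\ell_1$-distance $\Theta(n^{-t})$ from $P$ along a direction of increasing entropy (such a direction exists precisely because $P$ is not uniform, equivalently $V(P)>0$), chosen so that $H(Q_0)-H(P)\ge(a+\epsilon)n^{-t}$ for some fixed $\epsilon>0$ while $D(Q_0\|P)=O(n^{-2t})$. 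Then $\lambda_n:=\bbE A_n(Q_0)=(M_n-1)\pi_n(Q_0)$ satisfies $\log\lambda_n=n(H(Q_0)-H(P))-an^{1-t}+O(\log n)\to\infty$, so a Chebyshev bound with a slowly shrinking window gives $\bbE[A_n(Q_0)^s]\ge\lambda_n^s(1-o(1))$. Using $\log\pi_n(Q_0)=nH(Q_0)-n\log d+O(\log n)$ the $nH(Q_0)$ terms cancel, and one is left with
\begin{equation*}
\log\bbE[T^s]\ge s\big(-nH(P)-an^{1-t}-nD(Q_0\|P)+O(\log n)\big)+o(1)=-H(P)u\,n^{1-t}-au\,n^{1-2t}+O(n^{1-3t})+o(1),
\end{equation*}
since $s\cdot nD(Q_0\|P)=O(n^{1-3t})$ and $s\cdot O(\log n)=o(1)$.

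Combining the two bounds, $\log\bbE[T^s]=-H(P)u\,n^{1-t}-au\,n^{1-2t}+O(n^{1-3t})+o(1)$; adding the Taylor expansion of the first term cancels the $H(P)u\,n^{1-t}$ contribution and produces exactly \eqref{eqn:final_cgf}. The main obstacle is the lower bound: one cannot simply argue that $T$ concentrates around its mean, because $\bbE[T^2]/(\bbE T)^2$ grows exponentially in $n$ (the second moment of $T$ being dominated by rare, atypically large $N_j$), so a direct second-moment estimate on $T$ is useless; restricting to one carefully tuned type $Q_0$ circumvents this, and the technical point is to verify simultaneously that $Q_0$ is a genuine type with denominator $n$, that $\lambda_n\to\infty$, and that $D(Q_0\|P)$ can be made as small as $O(n^{-2t})$ — the last fact being exactly what produces the $O(n^{1-3t})$ error term in \eqref{eqn:final_cgf}.
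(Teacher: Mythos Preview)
Your proposal is correct and follows essentially the same route as the paper: the same factorization into an $A$-term (Taylor-expanded via the cumulant generating function of $-\log P(Z)$) and a $B$-term (upper-bounded by Jensen, lower-bounded by the type class enumerator restricted to a single type near $P$ with entropy boosted by $\Theta(n^{-t})$). The only cosmetic differences are that the paper takes the entropy gap to be $2an^{-t}$ rather than your $(a+\epsilon)n^{-t}$, and uses a Chernoff bound (its Lemma~\ref{l6-21}) rather than Chebyshev to show $\bbE[A_n(Q_0)^s]\ge\lambda_n^s(1-o(1))$; both choices are immaterial since $s=un^{-t}\to 0$ absorbs the constant-factor slack.
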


Now, we apply the G\"artner-Ellis theorem with the general order, i.e., Case
(ii) of Theorem \ref{thm7} in Appendix~\ref{app:ge}, to \eqref{eqn:outer}    with 
the identifications 
\begin{align}
\alpha_n &\equiv 0\\
\beta_n &\equiv n^{1-t},\quad\mbox{and}\\
\gamma_n&\equiv n^{-t}
\end{align}
Now,  we can also make the additional identifications   
\begin{align}
\nu_1&\equiv 0 \\
X_n &\equiv -F_n\\
p_n(\cdot) & \equiv \frac{1}{{M_n}}\sum_{m=1}^{M_n}  \Pr(\cdot),\\
\mu_n(\theta )&\equiv\phi_n(-\theta),\\
\theta_0 &\equiv 0,\quad\mbox{and}\\
x &\equiv b.
\end{align}
Now, applying   \eqref{eqn:final_cgf} of Lemma \ref{lem:6} to the case with $u\equiv -y$ (with $y<0$), we have   
\begin{equation}
\nu_{2}(y)=\lim_{n\to\infty}\nu_{2,n}(y)=    y a+ y^2 \frac{ V(P)}{2}.
\end{equation}
 Then, defining $y_0$ to the (unique) real number satisfying $\nu_2'(y_0)=x$, we have 
\begin{equation}
a+ y_0 V(P)=b \quad\Longleftrightarrow\quad  y_0= \frac{b-a}{V(P)}  \label{eqn:y0}.
 \end{equation} 
 which implies by simple algebra that 
\begin{equation}
y_0 x-\nu_2(y_0)= \left( \frac{b-a}{V(P)} \right)b - \left[ \left(\frac{b-a}{V(P)} \right)a+  \left(\frac{b-a}{V(P)} \right)^2 \frac{V(P)}{2}\right]=\frac{(b-a)^2}{2V(P)}.
\end{equation}
Because $a>b$, $y_0$ is negative. We can also verify that all the conditions of Case (ii) in Theorem \ref{thm7} ($\theta_0=0,y_0<0, \alpha_n=\nu_1=0,\beta_n\gamma_n\to\infty$)  are satisfied so we can readily apply it here. 
Thus, 
\begin{align}
-\log\bbE_{\calC_n} \big[ \Pr(\calE_1|\calC_n) \big]  
= -\log\Pr\big( F_n > -b n^{1- t} \big)  
= \frac{(a-b)^2}{2V(P)} n^{1-2t}
+o(n^{1-2t}),
\label{eqn:exptC_1} 
\end{align}
which implies \eqref{eqn:e1_md}.


Now we estimate $\bbE_{\calC_n}[\Pr(\calE_2|\calC_n)]$.  Using the same calculations that led to \eqref{eqn:outer}, one finds that 
\begin{align}
\bbE_{\calC_n} \big[ \Pr(\calE_2|\calC_n) \big]&  
= \bbE_{\calC_n} \left[\frac{1}{{M_n}}\sum_{m=1}^{M_n} \sum_{\by} \sum_{m'\ne m} W^n(\by|X_{m'}^n)  \bone\left\{ \frac{ \sum_{m'\ne m} W^n(\by|X^n_{m' }) }{W^n(\by|X^n_m)  }
 <  \exp(-n{T_n})
\right\} \right] \label{eqn:pe2_0}\\
&=\bbE_{\calC_n} \left[  \frac{1}{{M_n}}\sum_{m=1}^{M_n}   \bbQ\bigg( \bigg\{ \by:  \log   \sum_{m'\ne m} W^n(\by|X^n_{m'})  - \log W^n(\by|X^n_m)  < 
-b n^{1-t}
 \bigg\}\,  \bigg|\, \calC_n  \bigg) \right] \label{eqn:pe2}  
\end{align}
where in \eqref{eqn:pe2}, we defined the (unnormalized) conditional measure $\bbQ (\calA|\calC_n = \{\bx_m\}_{m=1}^{M_n}) := \sum_{m'\ne m} W^n(\calA|\bx_{m'})$ where $\calA\subset\calY^n$.  
Given $\bbQ$, we can define a  {\em normalized probability} measure 
\begin{equation}
\bbQ'(\calA |\calC_n):= \frac{\bbQ(\calA|\calC_n)}{ M_n-1}.\label{eqn:norm_pm}
\end{equation}
Since the form of \eqref{eqn:pe2} is similar to the starting point for the calculation of $\bbE_{\calC_n}[\Pr(\calE_1|\calC_n)]$  in \eqref{eqn:outer}, we may estimate $\bbE_{\calC_n}[\Pr(\calE_2|\calC_n)]$ using similar steps to the above.  Define another probability measure $\bbP(\calA|\calC_n = \{\bx_m\}_{m=1}^{M_n}) :=W^n(\calA|\bx_m)$. Note by the definition of $F_n$ in \eqref{eqn:defZ}, and the measures above that for all $\calA\subset\calY^n$, 
\begin{equation}
\exp(F_n) = 
\frac{\bbQ'( \{Y^n\} |\calC_n)}{\bbP( \{Y^n\}|\calC_n)}\cdot (M_n-1). \label{eqn:com}
\end{equation}
Observe that the random variable involved in \eqref{eqn:pe2}, namely $\log   \sum_{m'\ne m} W^n(Y^n|X^n_{m'})  - \log W^n(Y^n|X^n_m)$, is exactly $F_n$ defined in \eqref{eqn:defZ} where $Y^n$ now has conditional law $\bbQ(\cdot|\calC_n)$  instead of $\bbP(\cdot|\calC_n)$. 
The cumulant generating function of $F_n$ under the probability measure $\bbQ'$ is 
\begin{align}
\lambda_n(s) & :=\log \bbE_{\calC_n,\bbQ'} \big[ \exp(sF_n)\big] \\
&=\log \bigg( \frac{\bbE_{\calC_n,\bbP} \big[\exp((1+s)F_n)\big]
}{M_n-1}\bigg)\label{eqn:use_com} \\
&= \phi_n(1+s )  - \log (M_n -1) \label{eqn:til_phi}
\end{align}
where \eqref{eqn:use_com} follows from \eqref{eqn:com} and \eqref{eqn:til_phi} from the definition of $\phi_n(s)$ in \eqref{eqn:phi1}. 
Now, we apply Case (ia)  of Theorem \ref{thm7} in Appendix \ref{app:ge} with the identifications 
\begin{align}
\alpha_n &\equiv -\log (M_n-1), \label{eqn:alpha_set}\\
\beta_n &\equiv n^{1-t} ,\quad\mbox{and}\\
\gamma_n &\equiv n^{-t}.
\end{align}
Furthermore, from \eqref{eqn:pe2}   and \eqref{eqn:til_phi}, one can also make the additional identifications   
\begin{align}
X_n  &\equiv F_n ,\\ 
p_n(\cdot)  &\equiv \bbE_{\calC_n}[\bbQ'(\cdot|\calC_n)],\\
 \mu_n(\theta)&\equiv\lambda_n(\theta), \\
 \theta_0&\equiv -1,\\
 \nu_1& \equiv 0,  \quad\mbox{and}\\
  x&\equiv -b.
\end{align}
Then we have 
\begin{equation}
\nu_{2,n}(y)\equiv \lambda_n(-1+ y n^{-t})+ \log (M_n-1)=  n^{2t-1}\phi_n \left( \frac{y}{n^{ t}}\right)
\end{equation} 
Thus, using \eqref{eqn:final_cgf},
\begin{equation}
\nu_{2}(y)=\lim_{n\to\infty}\nu_{2,n}(y)=  - y a+ y^2 \frac{V(P)}{2} . \label{eqn:limit_nu2}
\end{equation}   So, we have $\theta_0  x-\nu_1 =b$
and 
\begin{equation}
y_0= \frac{a-b}{V(P)},
\end{equation}
  which is positive. This implies that
\begin{equation}
y_0 x -\nu_2(y_0)=\frac{(a-b)^2}{2V(P)}.
\end{equation}
 Thus,  by the relation between $\bbQ$ to $\bbQ'$ in \eqref{eqn:norm_pm} and the bound in   \eqref{ge1_upper} (in Case (ia) of Theorem \ref{thm7}), we obtain 
\begin{align}
 -\log\bbE_{\calC_n} \big[ \Pr(\calE_2| \calC_n) \big]  
&=-\log\bbE_{\calC_n}\left[\bbQ\bigg( F_n <  -b n^{1-t}\,\bigg|\, \calC_n\bigg)\right] \label{eqn:exptC_20} \\
&= 
-\log\bbE_{\calC_n}\left[\bbQ'\bigg( F_n <  -b n^{1-t}\,\bigg|\, \calC_n\bigg)\right] -\log (M_n-1) \\
&\le b n^{1-t}+
 o(n^{1- t}),
\label{eqn:exptC_2}
\end{align}
which implies the upper bound in~\eqref{eqn:e2_md}. The lower bound in \eqref{eqn:e2_md} follows by invoking \eqref{ge1_lower}, from which we obtain 
\begin{equation}
 -\log\bbE_{\calC_n} \big[ \Pr(\calE_2| \calC_n) \big]  \ge b n^{1-t}+ \frac{(a-b)^2}{2V(P)} n^{1-2t} +
 o(n^{1- 2t}). \label{eqn:lower_higher}
\end{equation}
This completes the proof  of Lemma~\ref{lem:md}.
\end{proof}

\begin{remark}
Observe that to evaluate the probabilities in  \eqref{eqn:exptC_1} and \eqref{eqn:exptC_20}, we employed Theorem~\ref{thm7} in Appendix \ref{app:ge}, which is a modified (``shifted'') version of the usual G\"artner-Ellis theorem~\cite[Theorem 2.3.6]{Dembo}. Theorem \ref{thm7} assumes a sequence of random variables $X_n$ has cumulant  generating functions $\mu_n(\theta)$ that additionally satisfy the expansion $\mu_n(\theta_0 + \gamma_n y) = \alpha_n + \beta_n\nu_1 + \beta_n\gamma_n\nu_{2,n}(y)$ for some vanishing sequence  $\gamma_n$.   This generalization and the application to the erasure problem  appears to the authors to be  novel. In particular, since $\bbQ$  in \eqref{eqn:pe2} above is not a (normalized) probability measure, the usual G\"artner-Ellis theorem does not apply readily and we have to define the new probability measure $\bbQ'$ as in \eqref{eqn:norm_pm}. This, however, is not the crux of the contributions of which there are three. 
\begin{enumerate}
\item First, our Theorem \ref{thm7} also has to take into account the   offsets $\theta_0=-1$ and $\alpha_n=-\log (M_n-1)$ in our application of the G\"artner-Ellis theorem. 
\item Second,    an interesting feature of our result is that   the ``exponent'' $b$ is not governed by the first-order term $\alpha_n$ (which is the offset) but instead the second-order term $-(\theta_0  x-\nu_1)\beta_n=-bn^{1-t}$  leading to \eqref{eqn:exptC_2}--\eqref{eqn:lower_higher}.  
  \item Finally, Theorem~\ref{thm7} also allows us to obtain an additional term scaling as   $n^{1-2t}$ in \eqref{eqn:lower_higher}, but we cannot obtain the coefficient of the higher-order term scaling as $n^{1-2t}$ in \eqref{eqn:exptC_2}.
\end{enumerate} 
\end{remark}
\subsection{Proof of Theorem \ref{thm:md}} \label{sec:optimality_prf}

\begin{proof} 
First, observe that for~\eqref{H3} to be satisfied, i.e., that the undetected error probability decays as 
\begin{equation}
\bbE_{\calC_n}\big[\Pr(\calE_2 | \calC_n) \big] \le\exp(- b n^{1-t} + o(n^{1-t})), \label{eqn:E2_speed}
\end{equation}
we need the threshold to be of the form 
\begin{equation}
T_n\ge b n^{-t} + o(n^{-t} ). \label{eqn:Tn_greater}
\end{equation}
 To show this formally, suppose, to the contrary, 
\begin{equation}
 T_n = b'n^{-t} + o(n^{-t} ).
 \end{equation} 
 for some $0<b'<b$. So the constant in front of $n^{ -t}$ is strictly smaller than $b$. Then by \eqref{eqn:e2_md} in Lemma \ref{lem:md}, and since the decoder is asymptotically optimal, the undetected error probability decays as 
 \begin{equation}
\bbE_{\calC_n}\big[\Pr(\calE_2 | \calC_n) \big] = \exp(- b' n^{1-t} + o(n^{1-t})),
 \end{equation}
 an asymptotic equality. This is a contradiction. Hence,  \eqref{eqn:Tn_greater} must hold. Intuitively, the thresholds in Forney's test in \eqref{eqn:decision_reg} must be large enough so that the decoding regions corresponding to the messages are sufficiently small so that the undetected error probability decays at least as fast as  in \eqref{eqn:E2_speed}.   Consequently, by the asymptotically tight result in~\eqref{eqn:e1_md}, we know that \eqref{H3} implies \eqref{H4}.

Conversely, to satisfy the condition  \eqref{H1}, i.e., that the total error  probability decays as 
\begin{equation}
\bbE_{\calC_n}\big[\Pr(\calE_1 | \calC_n) \big] \le\exp \left(- \frac{(a-b)^2}{2V(P)} n^{1-2t} + o(n^{1-2t}) \right),
\end{equation}
 we need to choose 
\begin{equation}
 T_n\le {b}n^{-t} + o(n^{-t}) 
 \end{equation} 
by the same  argument as the above. This means that  the thresholds must be small enough so that the decoding regions corresponding to the messages are sufficiently large. Hence, due to the asymptotically tight result in Lemma~\ref{lem:md}, we have \eqref{H2}. 
\end{proof}

\subsection{Proof of Lemma~\ref{lem:clt}}\label{sec:prf-clt}
\begin{proof}
The exact same steps in the proof of Lemma~\ref{lem:md}  follow  even if $t=1/2$. In particular,  in this setting, Lemma \ref{lem:6} with $t=1/2$ yields
\begin{equation}
\lim_{n\to\infty}  \,\, \phi_n\bigg(\frac{u}{\sqrt{n}}\bigg)
= -u a    + u^2   \frac{V(P)}{2}  \label{eqn:det_cumulants}
\end{equation}
for any constant $u > 0$.   By appropriate translation, scaling,  and Lemma \ref{l625} in Appendix~\ref{app:conv},   the sequence of random variables $\{F_n n^{-1/2}\}_{n\in\bbN}$  converges in distribution to a Gaussian random variable with mean $-a$ and variance $V(P)$.  This implies  that the following asymptotic statement holds true 
\begin{align}
\lim_{n\to\infty}\,\, \bbE_{\calC_n}[\Pr(\calE_1 | \calC_n)] 
&=\lim_{n\to\infty}\,\, \Pr\bigg(\frac{F_n}{\sqrt{n}}>-b \bigg)\\
&=\int_{-b}^{\infty}\frac{1}{\sqrt{2\pi V(P)} }\exp \bigg(-\frac{ (w  +a )^2}{2 \, V(P ) } \bigg)\,\rmd w \\
&=  \Phi\bigg( \frac{b-a}{\sqrt{V(P)}}\bigg).
\end{align}
To calculate   $\bbE_{\calC_n}[\Pr(\calE_2 | \calC_n)]$, we can adopt the same change of measure and G\"artner-Ellis arguments and employ Case (ib) of  Theorem \ref{thm7}. We follow exactly  the steps leading from \eqref{eqn:pe2_0} to~\eqref{eqn:exptC_2} to assert that  \eqref{eqn:e2_clt}  is true. Note that in this situation, we take $\gamma_n \equiv n^{-1/2}$ and $\beta_n \equiv n^{1/2}$. To apply Case (ib) of Theorem \ref{thm7}, we verify that $\beta_n=\gamma_n^{-1}$ and $\nu_2$, derived in \eqref{eqn:limit_nu2}, is indeed a quadratic function. 
\end{proof}

\subsection{Proof of Lemma~\ref{lem:6}: Asymptotics of Cumulant Generating Functions}\label{sec:prf6}

\begin{proof}
To estimate  $\phi_n(s)$  in \eqref{eqn:phi3}, we define
\begin{align}
A &:=\bbE_{\calC_n} \left[  W^n(\by | X_m^n)^{1-s}\right] \quad\mbox{and} \label{eqn:defA}\\
B&:= \bbE_{\calC_n} \bigg[\bigg(\sum_{m'\ne m}  W^n(\by | X_{m'}^n) \bigg)^s \bigg]. \label{eqn:defB}
\end{align} 

The first term $A$ is easy to handle. Indeed, by the additivity of the channel, we have
\begin{align}
A& = \bbE_{X_m^n} \left[ P^n (\by- X_m^n)^{1-s} \right]  \\
&= \bbE_{X_m^n} \big[  P^n (\tilX_m^n)^{1-s} \big] 
\end{align}
where the {\em shifted codewords}\footnote{The shifted codewords need not be codewords  {\em per se}, so this is a slight abuse of terminology.} are defined as $\tilX^n_m :=\by - X^n_m$. By using the product structure of $P^n$, we see that regardless of $\by$, the term $A$  can be written as 
\begin{equation}
A = \frac{1}{d^n}\exp\big( -n \psi(s)\big)
\end{equation}
where 
\begin{equation}
\psi(s) := -\log \sum_{z} P(z)^{1-s}.
\end{equation}
This function is related to the R\'enyi entropy as follows: $s\psi(s)  = -H_{1-s}(P)$ where $H_{\alpha}(P)$ is the usual R\'enyi entropy of order $\alpha$ (e.g., \cite[Prob.~1.15]{Csi97}). Now, for a fixed $u  >  0$, we make the choice 
\begin{equation}
s=\frac{u}{n^t}, \label{eqn:defs}
\end{equation} 
where recall that $t$ is a fixed parameter in $(0,1/2]$. It is straightforward to check that $\psi(0)=0$, $\psi'(0) = -H(P)$ and $\psi''(0)=-V(P)$.  By  a second-order Taylor expansion of $\psi(s)$ around $s=0$, we have 
\begin{align}
A&=  \frac{1}{d^n} \exp\bigg( n   \Big( s\, H(P) + s^2\, \frac{V(P)}{2}   + O(s^3) \Big)\bigg) \\
&=\frac{1}{d^n} \exp\bigg( u n^{1-t}\,  H(P) +u^2 n^{1-2t} \,\frac{V(P)}{2}  + O( n^{1-3t}) \bigg) , \label{eqn:est_A}
\end{align}
where \eqref{eqn:est_A} follows from the definition of $s$ in \eqref{eqn:defs}.

Now we estimate $B$ in \eqref{eqn:defB}. Define  the random variable $N_{\calC_n}  (Q)$ which represents the number of shifted codewords excluding that indexed by $m$  with type $Q\in\calP_n(\calX)$, i.e., $N_{\calC_n}  (Q):=|\{ m'\ne m: \mathrm{type}(\tilX^n_{m'})=Q\}|$. This plays the role of the {\em type class enumerator} or {\em distance enumerator} in Merhav \cite{merhav08, merhav_FnT}.  Then, $B$  can be written as 
\begin{align}
B  &= \bbE_{\calC_n} \bigg[\bigg(\sum_{m'\ne m}   P^n(\by - X_{m'}^n) \bigg)^s \bigg] \label{eqn:use_add0}\\
 &= \bbE_{\calC_n} \bigg[\bigg(\sum_{m'\ne m}    P^n(\tilX_{m'}^n) \bigg)^s \bigg]  \label{eqn:use_add}\\
  &= \bbE_{\calC_n} \bigg[\bigg(\sum_{Q\in\calP_n(\calX)}   N_{\calC_n}  (Q) \exp\big(-n[D(Q\|P) + H(Q) ] \big) \bigg)^s \bigg] . \label{eqn:type_class_enum}
\end{align}
In \eqref{eqn:use_add0}, we again used the additivity of the channel and introduced the noise distribution $P$. In \eqref{eqn:use_add}, we   used the definition of the shifted codewords $\tilX^n_m$. In \eqref{eqn:type_class_enum}, we introduced the type class enumerators $N_{\calC_n}  (Q)$. We also recall from  \cite[Lem.~2.6]{Csi97}  that $\exp (-n[D(Q\|P) + H(Q) ] ) $ is the exact $P^n$-probability of a sequence of type $Q$. Note that the expression in \eqref{eqn:type_class_enum} is independent of $\by$, just as for the  calculation of $A$ in \eqref{eqn:est_A}. In the following, we find   bounds on $B$ that turn out to tight in the sense that the analysis yield the final result in Theorem \ref{thm:md}. We start with lower bounding $B$ by as follows:
\begin{align}
B &\ge  \bbE_{\calC_n} \bigg[\bigg(\max_{Q'\in\calP_n(\calX)}   N_{\calC_n}  (Q') \exp\big(-n[D(Q'\|P) + H(Q') ] \big) \bigg)^s \bigg] \\
&=  \bbE_{\calC_n} \bigg[ \max_{Q'\in\calP_n(\calX)}   N_{\calC_n}  (Q')^s \exp\big(-ns[D(Q'\|P) + H(Q') ] \big)  \bigg] \\
&\ge \max_{Q'\in\calP_n(\calX)} \bbE_{\calC_n} \big[ N_{\calC_n}  (Q')^s \big] \exp\big(-ns[D(Q'\|P) + H(Q')] \big)\\
&\ge \bbE_{\calC_n} \big[ N_{\calC_n}  (P_n)^s \big] \exp\big(-ns[D(P_n\|P) + H(P_n)] \big), \label{eqn:max-type}  
\end{align}
where $P_n\in \calP_n(\calX)$ is defined as
\begin{equation}
P_n \in\argmin_{Q \in \calP_n(\calX)} \{
\|Q -P\|_1: H(Q) \ge H(P)+ 2 a n^{-t} \}.
\label{eqn:choice_delta}
\end{equation}
Fannes inequality \cite[Lem.~2.7]{Csi97} (uniform continuity of Shannon entropy) says that 
\begin{equation}
|H(P) - H(Q)|\le \|P-Q\|_1\log \frac{\|P-Q\|_1}{|\calX|}
\end{equation}
if $\| P-Q\|_1\le\frac{1}{2}$. Since $P_n$ must be an $n$-type,  
\begin{equation}
H(P_n) = H(P)+ 2 a n^{-t} +O( n^{-1}\log n). \label{eqn:apply_fannes}
\end{equation}
Because $P(z) >0$ for all $z\in\calX$, one immediately finds that 
\begin{equation}
D(P_n\| P) = O(\|P_n-P\|_1^2)= O(n^{-2t} ),
\end{equation}
which is negligible. Combining the above estimates, we obtain
\begin{align}
-ns[D(P_n\|P) + H(P_n)] 
=- u n^{1-t} H(P) -2 a u n^{1-2t} + O(n^{1-3t} )
\label{eq6-21-2}
\end{align}
as $n$ grows. 

Next, apply Lemma \ref{l6-21} in Appendix \ref{app:conc}  to the expectation  \eqref{eqn:max-type}  to the case with
\begin{align}
L &=M_n-1, \label{eqn:L}\\
M_1 &=d^n, \label{eqn:M1}\\
M_2 &= |{\cal T}^{(n)}_{P_n}|,\\
\{X_1, \ldots, X_L\} &=\{X^n_{m'}\}_{m'\neq m},\\
\calA &= {\cal T}^{(n)}_{P_n},\quad\mbox{and},\\
s &={u}  {n^{-t}} \label{eqn:define_s}
\end{align} 
and a fixed positive constant $\epsilon>0$. We now perform a series of steps to bound the terms in  \eqref{eqn:conc_ineq}.
By a standard property of types~\cite{Csi97} and the estimate  in \eqref{eqn:apply_fannes},
\begin{align}
\log |{\cal T}^{(n)}_{P_n}| & \ge nH(P_n)-(d-1)\log (n+1) \\
&=  n H(P) 
+2 a  n^{1- t}
+O(\log n).
\end{align}
Thus, using the definition of $L$ in \eqref{eqn:L}, the definition $M_1$ in \eqref{eqn:M1}, and the number of codewords $M_n$ in \eqref{eqn:sizeM2}, we also have
\begin{equation}
\log L +\log M_2 - \log M_1 \ge a n^{1-t} +O(\log n ). \label{eqn:logL}
\end{equation}
Consequently, 
\begin{equation}
\log  \left[1-\exp \left(-   \frac{LM_2 }{2M_1}  \epsilon^2 \right)  \right]=o(1).
\end{equation}
Also, we have
\begin{equation}
s \log (1-\epsilon )= {u}{n^{-t}}\log (1-\epsilon )=o(1).
\end{equation}
and by \eqref{eqn:define_s},
\begin{equation}
s(\log L +\log M_2 - \log M_1) \ge a u n^{1-2t} +o(1).
\end{equation}
Therefore, Lemma~\ref{l6-21} says that
\begin{align}
\log \bbE_{\calC_n} \big[ N_{\calC_n}  (P_n)^s \big]
= \log \bbE [N^s ] \ge  a u n^{1-2t} +o(1).
\label{eq6-21-3}
\end{align}
Combining \eqref{eqn:max-type}, \eqref{eq6-21-2} and \eqref{eq6-21-3},
we find that
\begin{align}
\log B \ge 
-n^{1-t} u H(P) - a n^{1-2t} u  + O(n^{1-3t} ) +o(1).
\label{eqn:lowerB}
\end{align}

Now, we proceed to upper bound $B$ in \eqref{eqn:type_class_enum}.  Note that we consider the case when $0 < s < 1$ because we substitute $un^{-t}$ into $s$ per \eqref{eqn:define_s}. Consider,
\begin{align}
B  
&=\bbE_{\calC_n}\bigg[  \bigg(
\sum_{Q' \in\calP_n(\calX)}  N_{\calC_n}  (Q') \exp\big(-n[D(Q'\|P) + H(Q') ] \big) \bigg)^s \bigg] \label{eqn:jsens0}   \\
&\le   \bigg(\bbE_{\calC_n}\bigg[
\sum_{Q' \in\calP_n(\calX)}  N_{\calC_n}  (Q') \exp\big(-n[D(Q'\|P) + H(Q') ] \big) \bigg] \bigg)^s \label{eqn:jsens}   \\
&=   \bigg(
\sum_{Q' \in\calP_n(\calX)}  
\bbE_{\calC_n}\big[N_{\calC_n} (Q')  \big]
\exp\big(-n[D(Q'\|P) + H(Q') ] \big)  \bigg)^s   \\
&=   \bigg(
\sum_{Q' \in\calP_n(\calX)}  
\frac{M_n |{\cal T}_{Q'}^{(n)}|}{d^n}
 \exp\big(-n[D(Q'\|P) + H(Q') ] \big)  \bigg)^s    \label{eqn:expct_N} \\
&\le   \bigg(
(n+1)^{d-1}
\max_{Q' \in\calP_n(\calX)}  
\frac{M_n |{\cal T}_{Q'}^{(n)}|}{d^n}
 \exp\big(-n[D(Q'\|P) + H(Q') ] \big)  \bigg)^s   \\
&=  
(n+1)^{s(d-1)}
\max_{Q' \in\calP_n(\calX)}  
\bigg(\frac{M_n |{\cal T}_{Q'}^{(n)}|}{d^n}\bigg)^s
\exp\big(-ns[D(Q'\|P) + H(Q') ] \big) 
\\
&\le   
(n+1)^{s(d-1)}
\max_{Q' \in\calP_n(\calX)}  
\exp\big(
-s[nH(P)-an^{1-t} -nH(Q')]
-ns[D(Q'\|P) + H(Q') ] \big)     \label{eqn:expct_N2}
\\
&=   
(n+1)^{s(d-1)}
\max_{Q' \in\calP_n(\calX)}  
\exp\big(
-s nH(P)-asn^{1-t} 
-ns D(Q'\|P)  \big)  
\\
&=
(n+1)^{s(d-1)}
\exp\bigg(
-s nH(P)-asn^{1-t}
+ \max_{Q' \in\calP (\calX)} -ns D(Q'\|P)  \bigg)  
\\
& \le
(n+1)^{s(d-1)}
\exp\big(
-s nH(P)-asn^{1-t}  \big)  \\
&=
(n+1)^{\frac{u(d-1)}{n^t}}
\exp\big(
-u n^{1-t} H(P)-a u n^{1-2t}  \big)  
\end{align}
where \eqref{eqn:jsens} follows from Jensen's inequality applied to the concave function  $x \mapsto x^s$, \eqref{eqn:expct_N} follows from the fact that 
\begin{equation}
\bbE_{\calC_n}\big[N_{\calC_n} (Q')\big] =M_n  \Pr\left(\by-X_1^n\in\calT_{Q'}^{(n)}\right) = M_n \Pr \left(\tilX_1^n\in\calT_{Q'}^{(n)} \right) =  \frac{M_n|\calT_{Q'}^{(n)}|}{d^n} \label{eqn:calc_N}
\end{equation}
and \eqref{eqn:expct_N2} follows from the choice of $\log M_n= n (\log d - H(P)) - a n^{1-t}$ and the fact that $|\calT_{Q'}^{(n)}|\le \exp(nH(Q'))$. 
Thus,
we find that
\begin{align}
\log B \le 
-n^{1-t} u H(P) - a n^{1-2t} u  +o(1).
\label{eqn:optimizing}
\end{align}
Combining the evaluations of $A$ and $B$ together in \eqref{eqn:phi3}, we see that  the sum over $\by$ cancels the $1/d^n$ term in~\eqref{eqn:est_A} and the first-order entropy terms also cancel. 
The final expression for the cumulant generating function of $F_n$ satisfies \eqref{eqn:final_cgf} as desired.
\end{proof}

\section{Conclusion and Future Work}\label{sec:concl}
In this paper, we analyzed channel coding with the erasure option where we designed both the undetected and total errors to decay subexponentially and asymmetrically. We analyzed two regimes, namely, the pure moderate deviations and mixed regimes. We proposed an information spectrum-type decoding rule \cite{Han10} and showed using an ensemble tightness argument that this simple decoding rule is, in fact, asymptotically optimal for additive DMCs with uniform input distribution. To do so, we estimated appropriate cumulant generating functions of the total and undetected errors. We also  developed a modified version of the G\"artner-Ellis theorem that is particularly useful for our problem. In contrast to previous works on erasure (and list) decoding \cite{Forney68,sgb, tel94, Blinovsky, Moulin09, merhav08,merhav13,merhav14,  somekh11}, we do not evaluate the rate of exponential decay of the two error probabilities. In our work, the two error probabilities decay subexponentially (and asymmetrically) for the pure moderate deviations setting. For the mixed regime, the total (and hence erasure) error is non-vanishing while the undetected error decays as $\exp(-bn^{1/2})$ for some $b>0$.  

Possible extensions of this work include:
\begin{enumerate}
\item  Removing the assumption that the DMC is additive for the ensemble tightness results in Section~\ref{sec:ens_results}. However, it appears  that this is not straightforward and it is likely that we have to make an assumption like that for Theorem 1 of Merhav's work~\cite{merhav08}. This assumption seems necessary using our techniques to establish the asymptotics of the cumulant generating function in Lemma \ref{lem:6}. It heavily relies on the fact that input distribution is uniform so that, by symmetry, the statistics of the codewords $\{X_m^n : m = 1,\ldots, M_n\}$ are the same as that of the shifted codewords $\{\by-X_m^n : m = 1,\ldots, M_n\}$ for any $\by$. 
\item Extending the analysis to the list decoding case where $T_n= {b} n^{-t}$ where $b<0$ and $t\in (0,1/2]$. Our  information spectrum-style decoding regions and subsequent analysis of their probabilities only works for the case $b>0$. See the argument after \eqref{6-6-5}. Hence it would be useful to develop alternative threshold decoders and more refined tools to analyze the list decoding setting.
\item Tightening the  higher-order asymptotics for the  expansions of the log-probabilities in \eqref{eqn:e2_md} and \eqref{eqn:e2_clt}. This would  be interesting from a mathematical standpoint. However, this appears to require some independence assumptions which are not available  in the G\"artner-Ellis theorem (so a new concentration bound may be required). In addition, this seems  to require tedious calculus to evaluate  the higher-order asymptotic terms of the cumulant generating function in Lemma~\ref{lem:6}. A refinement of the type class enumerator method~\cite{merhav08, merhav_FnT, merhav13,merhav14, somekh11} seems to be necessary for this purpose. 

\end{enumerate}
\appendices

\section{Modified G\"artner-Ellis theorem}\label{app:ge}
Here we present and prove a modified form of the G\"artner-Ellis theorem with a shift and a general order (normalization).

 Some of the ideas (for example  the proof of \eqref{ge1_upper} for the case $\beta_n\gamma_n\to\infty$ and \eqref{ge2}) are contained in \cite[Theorem 2.3.6]{Dembo} but other elements of the proof are novel. To keep the exposition self-contained, we provide all the details of the proof for the event of interest  $\{ X_n\le x\beta_n\}$. The  standard G\"artner-Ellis theorem \cite[Theorem 2.3.6]{Dembo}  applies in full generality to open and closed sets but here we are only interested in events of the form $\{ X_n\le x\beta_n\}$.
\begin{theorem}[Modified G\"artner-Ellis theorem]\label{thm7}
We consider three sequences $\alpha_n,\beta_n,\gamma_n$. The sequence $\alpha_n$ is arbitrary and $\beta_n$ and $\gamma_n$ are positive sequences that additionally  satisfy
$\beta_n \to \infty$   and $\gamma_n \to 0$.
Let $p_n$ be a sequence of distributions,
and $X_n$ be the sequence of random variables with distribution $p_n$.
Define the cumulant generating function 
\begin{equation}
\mu_n(\theta):=\log\bbE_{p_n}[ \exp (\theta X_n)].\label{eqn:def_mu_n}
\end{equation}
Let $\theta_0\le 0$ and $\nu_1$ be constants. Assume that 
\begin{equation}
\mu_n(\theta_0+ \gamma_n y)
= \alpha_n+ \beta_n \nu_1+\beta_n \gamma_n \nu_{2,n}(y) 
\label{eqn:mu_n}
\end{equation}
for some sequence of functions $ \nu_{2,n}$. 
Assume that 
\begin{equation}
\nu_{2,n}(y)\to \nu_2(y)\quad\mbox{pointwise,} \label{eqn:ptwise}
\end{equation}
and the limiting function $\nu_2(y)$ satisfies 
\begin{enumerate}
\item $\nu_2(0)=0$;
\item $\nu_2(y)$ is strictly convex;
\item $\nu_2(y)$ is $C^2$-continuous 
on an open subset $G\subset \mathbb{R}$
\end{enumerate}
We also fix $x$ and the real number $y_0 \in G$ satisfying\footnote{Such a real number $y_0$ is guaranteed to exist because $\nu_2$ is strictly convex so $\nu_2'$ is strictly increasing.}
\begin{equation}
\nu_2'(y_0)=x.  \label{eqn:nu2_pr}
\end{equation}

\begin{itemize}
\item[(i)] 
When $\theta_0 <0$, we consider two subcases:
\begin{itemize}
\item[(a)] $\beta_n \gamma_n\to\infty$;
\item[(b)] $\beta_n=\gamma_n^{-1}$ and $\nu_2$ is  a quadratic function.\footnote{Since $\nu_2$ is strictly convex and quadratic in this case, it must be of the form $\nu_2(y) = \varrho_0 + \varrho_1 y+\varrho_2y^2$ for some constants $\varrho_2>0$ and $\varrho_0,\varrho_1\in\bbR$. }
\end{itemize}
In both cases, we have the lower bound
\begin{align}
- \log p_n \bigg\{ \frac{X_n}{\beta_n} \le x\bigg\}\ge  -\alpha_n+ (\theta_0 x- \nu_1)\beta_n+(y_0 x - \nu_2(y_0) ) \beta_n \gamma_n +   o(\beta_n\gamma_n) \label{ge1_lower}
\end{align}
and the upper bound
\begin{equation}
- \log p_n \bigg\{ \frac{X_n}{\beta_n} \le x\bigg\} \le   -\alpha_n+ (\theta_0 x- \nu_1)\beta_n+o(\beta_n). \label{ge1_upper}
\end{equation}
\item[(ii)]
When $\theta_0 =0$, $y_0 <0$, $\alpha_n=\nu_1=0$,
and
$\beta_n \gamma_n \to \infty$,  
\begin{align}
- \log p_n \bigg\{ \frac{X_n}{\beta_n} \le x\bigg\}
=
(y_0 x - \nu_2(y_0) ) \beta_n \gamma_n
+  o(\beta_n \gamma_n).
\label{ge2}
\end{align}
\end{itemize}
Note that in Case (i), $y_0$ may take any real value.
However, 
in   Case (ii), $y_0$ always takes a negative value.
\end{theorem}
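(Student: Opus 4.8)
\emph{Overview.} The plan is to run the two halves of the classical G\"artner--Ellis argument separately, keeping careful track of the three scales $\alpha_n$, $\beta_n$ and $\beta_n\gamma_n$, and using \emph{only} the pointwise convergence \eqref{eqn:ptwise} together with the strict convexity and $C^2$-smoothness of the \emph{limiting} function $\nu_2$ (never any convergence of the derivatives $\nu_{2,n}'$ or $\nu_{2,n}''$, which is not assumed). The event $\{X_n/\beta_n\le x\}$ is handled with a lower tilt throughout, which is legitimate precisely because $\theta_0\le 0$ in Case (i) ($\theta_0<0$) and in Case (ii) ($\theta_0=0$, $y_0<0$): in both cases $\theta_0+\gamma_n y<0$ eventually, for $y$ in a small neighbourhood of $y_0$.

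\emph{Lower bound on $-\log p_n$ (Cases (i) and (ii)).} First I would apply Markov's inequality at the tilt $\theta_n:=\theta_0+\gamma_n y_0$, giving $p_n\{X_n\le x\beta_n\}\le \exp\big(\mu_n(\theta_n)-\theta_n x\beta_n\big)$. Substituting the hypothesised expansion \eqref{eqn:mu_n} at $y=y_0$, rearranging, and replacing $\nu_{2,n}(y_0)$ by its limit $\nu_2(y_0)$ at the cost of an $o(\beta_n\gamma_n)$ error, one gets exactly
\[
-\log p_n\{X_n\le x\beta_n\}\ \ge\ -\alpha_n+(\theta_0 x-\nu_1)\beta_n+\big(y_0 x-\nu_2(y_0)\big)\beta_n\gamma_n+o(\beta_n\gamma_n),
\]
which is \eqref{ge1_lower}; specialising $\alpha_n=\nu_1=\theta_0=0$ yields the ``$\ge$'' half of \eqref{ge2}. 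This direction uses nothing beyond \eqref{eqn:mu_n} and \eqref{eqn:ptwise}.

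\emph{Upper bound on $-\log p_n$ via change of measure.} For the reverse inequality I would fix a small $\eta>0$, let $y_\eta\in G$ be the unique point with $\nu_2'(y_\eta)=x-\eta$ (strict convexity forces $y_\eta<y_0$, and for $\eta$ small, $y_\eta\in G$ and $\theta_n^{(\eta)}:=\theta_0+\gamma_n y_\eta<0$), introduce the tilted law $d\tilde p_n/dp_n=\exp(\theta_n^{(\eta)}X_n-\mu_n(\theta_n^{(\eta)}))$, and restrict to the window $A_n:=\{(x-2\eta)\beta_n\le X_n\le x\beta_n\}$. On $A_n$ we have $\exp(-\theta_n^{(\eta)}X_n)\ge\exp(-\theta_n^{(\eta)}(x-2\eta)\beta_n)$ since $-\theta_n^{(\eta)}>0$, hence
\[
p_n\{X_n\le x\beta_n\}\ \ge\ \exp\big(\mu_n(\theta_n^{(\eta)})-\theta_n^{(\eta)}(x-2\eta)\beta_n\big)\,\tilde p_n(A_n).
\]
Expanding $\mu_n(\theta_n^{(\eta)})$ via \eqref{eqn:mu_n}, taking $-\log$, then sending $n\to\infty$ and afterwards $\eta\downarrow 0$ (using that $\eta\mapsto y_\eta$ is continuous, being the inverse of the strictly increasing continuous map $\nu_2'$), produces \eqref{ge1_upper} in Case (i)---where only $o(\beta_n)$ precision is required, so the crude bound $-\log\tilde p_n(A_n)=o(\beta_n)$ is enough---and, in Case (ii), the ``$\le$'' half of \eqref{ge2} with the sharper $o(\beta_n\gamma_n)$ remainder; combined with the previous paragraph this gives the exact asymptotics \eqref{ge2}.

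\emph{The crux: concentration of $\tilde p_n$ on $A_n$, and the obstacle.} What remains---and what I expect to be the genuinely delicate step---is $\liminf_n\tilde p_n(A_n)>0$. I would prove this with two-sided exponential Markov bounds taken \emph{at a further shift of order $\gamma_n$}: for $\xi>0$,
\[
\tilde p_n\{X_n>x\beta_n\}\ \le\ \exp\big(\beta_n\gamma_n[\nu_{2,n}(y_\eta+\xi)-\nu_{2,n}(y_\eta)-\xi x]\big),
\]
and symmetrically $\tilde p_n\{X_n<(x-2\eta)\beta_n\}\le\exp(\beta_n\gamma_n[\nu_{2,n}(y_\eta-\xi)-\nu_{2,n}(y_\eta)+\xi(x-2\eta)])$. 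Passing to the limit with \eqref{eqn:ptwise} and using only that $\nu_2'$ is continuous with $\nu_2'(y_\eta)=x-\eta\in(x-2\eta,x)$, one checks that for all sufficiently small $\xi>0$ the bracketed limits are strictly negative, hence bounded above by a fixed negative constant for large $n$. In Case (i)(a) and Case (ii), $\beta_n\gamma_n\to\infty$, so both tails are $e^{-\Omega(\beta_n\gamma_n)}$ and $\tilde p_n(A_n)\to 1$. In Case (i)(b), $\beta_n=\gamma_n^{-1}$ and $\nu_2$ is quadratic, so I would instead observe that the cumulant generating function of $\gamma_n X_n$ under $\tilde p_n$ equals $\nu_{2,n}(y_\eta+w)-\nu_{2,n}(y_\eta)\to\nu_2(y_\eta+w)-\nu_2(y_\eta)$, a quadratic in $w$; hence $\gamma_n X_n$ converges in distribution to a Gaussian with positive variance $\nu_2''$ and mean $\nu_2'(y_\eta)=x-\eta$, so $\tilde p_n(A_n)=\tilde p_n\{x-2\eta\le\gamma_n X_n\le x\}$ converges to a strictly positive Gaussian-window probability. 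The main obstacle is carrying out this concentration analysis without derivative convergence of $\nu_{2,n}$: the resolution is to differentiate only the limit $\nu_2$ inside the shifted Markov bounds. Once this is in hand, the rest is bookkeeping of the three scales and the $\eta\downarrow 0$ limit.
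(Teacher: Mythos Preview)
Your proposal is correct and follows essentially the same route as the paper: Markov's inequality at the tilt $\theta_0+\gamma_n y_0$ for the lower bounds, and a change of measure at $\theta_0+\gamma_n y_\eta$ (the paper writes $y'$ with $\nu_2'(y')=x-\delta$) restricted to a window $[x-2\eta,x]$ for the upper bounds, with the tilted-measure concentration on the window established via further $\gamma_n$-scale Markov bounds whose limiting exponents are read off from the strict convexity of $\nu_2$ alone.

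The one place where your argument genuinely diverges from the paper is Case~(i)(b). The paper tilts \emph{exactly} at $y_0$, obtains the identity
\[
p_n\{X_n\le x\beta_n\}\,e^{-\mu_n(\theta_0+\gamma_n y_0)+(\theta_0+\gamma_n y_0)x\beta_n}
=\int_{-\infty}^{0}e^{a_n z}\,\mathrm d\mathbb P_n(z)=:\Psi_n,
\]
where $a_n\asymp\beta_n\to\infty$ and $\mathbb P_n$ is the law of a suitably normalised version of $X_n-x\beta_n$ under the tilt; it then proves $\mathbb P_n\Rightarrow\mathcal N(0,1)$ and squeezes $\Psi_n$ between $O(1)$ and $\Omega(e^{-\varepsilon a_n})$, sending $\varepsilon\downarrow 0$ at the end. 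Your approach instead keeps the \emph{same} shifted tilt $y_\eta$ used in Case~(i)(a), so that the limiting Gaussian of $\gamma_n X_n$ under $\tilde p_n$ is centred at $x-\eta$, making the window probability $\tilde p_n\{x-2\eta\le \gamma_n X_n\le x\}$ converge to a strictly positive Gaussian mass by the Portmanteau theorem, and you then absorb the $-2\eta\theta_0\beta_n$ slack by taking $\eta\downarrow 0$. This gives a more uniform treatment of the three subcases and avoids the $\Psi_n$ analysis; the paper's approach, by contrast, works directly at $y_0$ and yields an exact identity rather than an inequality before the final limiting step. Both reach \eqref{ge1_upper} with the same $o(\beta_n)$ precision.
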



The proofs of these statements are split into four parts. In Appendix \ref{sec:prf_lower}, we prove the lower bounds to all statements. In Appendix \ref{sec:upper_inf}, we prove the upper bound for Case (ia) with $\beta_n\gamma_n\to\infty$. In Appendix \ref{sec:upper_consta}, we prove the upper bound for  Case (ib) with $\beta_n=\gamma_n^{-1}$ and $\nu_2$ is a quadratic. In Appendix \ref{sec:prf_upper_caseii}, we prove the upper bound for Case (ii). 

\subsection{Proof of Lower Bounds of Both Cases in \eqref{ge1_lower} and \eqref{ge2}} \label{sec:prf_lower}
\begin{proof} The proofs of the lower bounds all cases are common. 

First note that for Case (i),  $\theta_0<0$  and $\gamma_n\to 0$, so for sufficiently large $n$, we have $\theta_0 + \gamma_n y_0<0$. For Case (ii), even though $\theta_0=0$, $y_0<0$ so similarly, we have $\theta_0 + \gamma_n y_0<0$. 
Thus, using  Markov's inequality,
\begin{align}
p_n \bigg\{ \frac{X_n}{\beta_n} \le x\bigg\}  &= p_n \bigg\{ \exp\left[ \Big(\frac{X_n}{\beta_n} -x \Big)\beta_n (\theta_0 + \gamma_n y_0)  \right]\ge 1 \bigg\}   \\
&\le \bbE_{p_n}\bigg\{ \exp\left[ \Big(\frac{X_n}{\beta_n} -x \Big)\beta_n (\theta_0 + \gamma_n y_0)  \right]\bigg\} .
\end{align}
In other words, 
\begin{align}
- \log p_n \bigg\{ \frac{X_n}{\beta_n} \le x\bigg\} 
&\ge 
x \beta_n \theta_0 +x\gamma_n\beta_n y_0-\mu_n(\theta_0 +\gamma_n y_0) \\
&=-\alpha_n + (\theta_0 x - \nu_1) \beta_n + (y_0 x - \nu_{2,n}(y_0) ) \beta_n\gamma_n \label{eqn:use_expansion}
\end{align}
where \eqref{eqn:use_expansion} follows from the expansion  of $\mu_n(\theta_0+\gamma_n y)$ in \eqref{eqn:mu_n}. 
So from the assumption that $\nu_{2,n}(y_0)\to  \nu_{2}(y_0)$ (cf.\ \eqref{eqn:ptwise}), we obtain
\begin{align}
- \log p_n \bigg\{ \frac{X_n}{\beta_n} \le x\bigg\} 
&\ge  -\alpha_n + (\theta_0 x - \nu_1) \beta_n + (y_0 x - \nu_{2 }(y_0) ) \beta_n\gamma_n +o (\beta_n\gamma_n)\label{eqn:liminf_bd}
\end{align}
as desired.  This completes the proofs of the lower bounds in \eqref{ge1_lower} and \eqref{ge2}. \end{proof}

\subsection{Proof of Upper Bound of Case (ia)  in \eqref{ge1_upper}, i.e.,  $\beta_n\gamma_n\to\infty$ } \label{sec:upper_inf}
\begin{proof}
The proof of this upper bound proceeds in three distinct steps. 

{\em Step 1 (Measure Tilting)}: First fix a constant $\delta>0$. For the sake of brevity, define the     set
\begin{equation}
\calD_{x,\delta}:=\left\{ \omega: x-2\delta\le\frac{X_n(\omega)}{\beta_n} \le x \right\}.
\end{equation}
It suffices to lower bound the $p_{n}$-probability of the set $\calD_{x,\delta}$ because $\{\omega: X_n(\omega)/\beta_n\le x\}\supset \calD_{x,\delta}$. 
Next given the constant $\delta>0$, we can define the point 
\begin{equation}
x ':=x-\delta.  \label{eqn:def_xp}
\end{equation}
Correspondingly, also define the point $y'$ such that 
\begin{equation}
\nu_2'(y') =x'.  \label{eqn:prop_nu2}
\end{equation}
Let $\theta$ be defined as 
\begin{equation}
\theta:=\theta_0+\gamma_n y', \label{eqn:def_the}
\end{equation}
where $\theta_0$ and $\gamma_n$ are fixed in the theorem statement. 
 Define  the tilted probability measure
\begin{equation}
\tilp_{n,\theta}(\omega):=p_{n}(\omega) \exp(\theta X_n(\omega)- \mu_n(\theta))  .\label{eqn:tilted}
\end{equation}
Now, for all $n$ large enough $\theta<0$ since $\theta_0<0$ and $\gamma_n\to 0$. Thus, from the definition of the tilted measure $\tilp_{n,\theta}(\omega)$, for those $\omega\in\calD_{x,\delta}$, we have 
\begin{align}
\tilp_{n,\theta } (\omega) &\le p_n(\omega) \exp \left( \theta \beta_n (x-2\delta) -\mu_n(\theta )\right) .
\end{align}
Integrating over all $\omega \in\calD_{x,\delta}$, taking the logarithm and normalizing by $\beta_n $ we obtain
\begin{equation}
  \frac{1}{\beta_n }\log  \tilp_{n,\theta}\big\{ \calD_{x,\delta}\big\}  \le\frac{1}{\beta_n} \log p_n\big\{ \calD_{x,\delta} \big\}  +  \theta ( x-2\delta) - \frac{\mu_n(\theta)}{\beta_n }.
 \end{equation} 
 Substituting the definition of $\theta$ in \eqref{eqn:def_the} into the above, we obtain
 \begin{equation}
  \frac{1}{\beta_n}\log  \tilp_{n,\theta_0+\gamma_n y'}\big\{ \calD_{x,\delta}\big\}  \le\frac{1}{\beta_n} \log p_n\big\{ \calD_{x,\delta} \big\}  +  (\theta_0+\gamma_n y') (x-2\delta)- \frac{\mu_n(\theta_0+\gamma_n y')}{\beta_n}.
 \end{equation}
Using the expansion of $\mu_n(\cdot)$ in \eqref{eqn:mu_n} in the above, we obtain
\begin{equation}
\frac{1}{\beta_n}\log  \tilp_{n,\theta_0+\gamma_n y'}\big\{ \calD_{x,\delta}\big\}  \le\frac{1}{\beta_n} \log p_n\big\{ \calD_{x,\delta} \big\}  + (\theta_0+\gamma_n y') (x-2\delta)- \frac{\alpha_n +\beta_n \nu_1 +\beta_n\gamma_n\nu_{2,n}(y')}{\beta_n}.
\end{equation}
Rearranging, we obtain
 \begin{align}
& \frac{1}{\beta_n}\left[ -\log p_n\big\{ \calD_{x,\delta} \big\} + \alpha_n-\beta_n(\theta_0 x-\nu_1)\right]\nn\\*
 &\qquad\le  -2\delta (\theta_0 +\gamma_n y')+\gamma_n\big( xy'-\nu_{2,n}(y')\big) - \frac{1}{\beta_n }\log \tilp_{n, \theta_0+\gamma_n y'} \big\{ \calD_{x,\delta}\big\} . \label{eqn:shift_alpha} 
 \end{align}
 
 {\em Step 2 (Bounding the Probability in \eqref{eqn:shift_alpha})}: 
 Now, our aim is to lower bound the probability in the final term   in \eqref{eqn:shift_alpha} namely $\tilp_{n, \theta_0+\gamma_n y'} \big\{ \calD_{x,\delta}\big\}$. To this end, define the cumulant generating function with respect to the tilted measure $\tilp_{n,\theta_0+\gamma_n y'}$ as follows:
 \begin{equation}
 \tilde{\mu}_n(\lambda):=\log\bbE_{\tilp_{n,\theta_0+\gamma_n y'}}[\exp(\lambda X_n)].
 \end{equation}
 Now observe that for any $s\in\bbR$, 
 \begin{align}
 \tilde{\mu}_n(\gamma_n s) & = \log \int_\bbR \exp\big(\gamma_n s X(\omega) \big) \, \tilp_{n,\theta_0+\gamma_n y'}(\rmd\omega) \\
&=\log \int_\bbR \exp\big(\gamma_n s X(\omega)\big)\exp\big( (\theta_0 + \gamma_n y') X(\omega) \big) \exp\big(-\mu_n ( \theta_0 + \gamma_n y')\big)  \, p_{n}(\rmd\omega) \label{eqn:relate_tilt}\\
&= \left[\log \int_\bbR \exp\big(\gamma_n s X(\omega) \big) \exp\big( (\theta_0 + \gamma_n y') X(\omega) \big)  \, p_{n}(\rmd\omega) \right] - \mu_n ( \theta_0 + \gamma_n y')\label{eqn:relate_tilt2}\\
&=\mu_n ( \theta_0 + \gamma_n (s+y'))-\mu_n ( \theta_0 + \gamma_n y')\\
 &=\beta_n \gamma_n \big[ \nu_{2,n}(s+y')-\nu_{2,n}(y') \big], \label{eqn:use_mu_n}
 \end{align}
 where \eqref{eqn:relate_tilt} is due to \eqref{eqn:tilted}, and \eqref{eqn:use_mu_n} is due to \eqref{eqn:mu_n}. Thus, by normalizing by $\beta_n\gamma_n\to\infty$, and noting that $\nu_{2,n}$ converges to $\nu_2$ pointwise (cf.\ \eqref{eqn:ptwise}), we have 
 \begin{equation}
 \lim_{n\to\infty} \frac{1}{\beta_n\gamma_n} \tilde{\mu}_n(\gamma_n s)=\nu_{2}(s+y')-\nu_{2}(y').\label{eqn:xi_def0}
 \end{equation}
Thus, from this calculation, we can conclude that 
\begin{equation}
\xi (s;y'):=\nu_{2}(s+y')-\nu_{2}(y') \label{eqn:xi_def}
\end{equation}
as a function of $s$, is the limiting cumulant generating function of the sequence of measures $\tilp_{n,\theta_0+\gamma_n y'}$.  By the union bound, 
\begin{equation}
\tilp_{n,\theta_0+\gamma_n y'} \big\{ \calD_{x,\delta}^c \big\} \le  \tilp_{n,\theta_0+\gamma_n y'} \big\{ \calE_1\big\} + \tilp_{n,\theta_0+\gamma_n y'} \big\{ \calE_2 \big\} 
\end{equation}
where 
\begin{align}
\calE_1 &:= \left\{ \omega: \frac{X_n(\omega)}{\beta_n} < x-2\delta \right\},\quad\mbox{and}\\
\calE_2 &:= \left\{ \omega: \frac{X_n(\omega)}{\beta_n} > x  \right\}.
\end{align}
We analyze the $\tilp_{n,\theta_0+\gamma_n y'} $-probability of $\calE_2$ first. 
By  Markov's inequality, for any fixed $s\ge 0$, 
\begin{align}
\tilp_{n,\theta_0+\gamma_n y'} \big\{ \calE_2 \big\} &=\tilp_{n,\theta_0+\gamma_n y'} \big\{X_n >\beta_n x\big\} \\
&=\tilp_{n,\theta_0+\gamma_n y'} \big\{ \exp(s\gamma_n X_n )>\exp(s\beta_n \gamma_nx)\big\} \\
&\le \frac{ \bbE_{ \tilp_{n,\theta_0+\gamma_n y'}} [ \exp(s \gamma_n X_n)] }{\exp(s \beta_n \gamma_n x)}.
\end{align}
Taking logarithms, normalizing by $\beta_n\gamma_n \to\infty$ and taking the $\limsup$, we obtain
\begin{align}
\limsup_{n\to\infty}\frac{1}{\beta_n\gamma_n} \log\tilp_{n,\theta_0+\gamma_n y'} \big\{ \calE_2 \big\}  \le  - sx + \limsup_{n\to\infty} \frac{1}{\beta_n\gamma_n} \log \bbE_{\tilp_{n,\theta_0+\gamma_n y'}} [ \exp(s\gamma_nX_n) ]
\end{align}
Now using the definition of $s\mapsto\xi(s; y')$ in  \eqref{eqn:xi_def0}--\eqref{eqn:xi_def}, we conclude that
\begin{equation}
\limsup_{n\to\infty}\frac{1}{\beta_n\gamma_n} \log\tilp_{n,\theta_0+\gamma_n y'} \big\{\calE_2\big\}  \le  -sx + \xi(s;y').
\end{equation}
Since $s\ge 0$ is arbitrary, we have 
\begin{equation}
\limsup_{n\to\infty}\frac{1}{\beta_n\gamma_n} \log\tilp_{n,\theta_0+\gamma_n y'} \big\{ \calE_2 \big\} \le -  \sup_{s\ge 0} \left\{ sx-\xi(s;y')\right\} . \label{eqn:optimized_s}
\end{equation}
Define the Fenchel-Legendre transform 
\begin{equation}
\xi^*(x;y'):= \sup_{s\ge 0} \left\{ sx-\xi(s;y')\right\}. \label{eqn:fl_trans}
\end{equation}
Now, we claim that   $\xi^*(x;y')>0$. Let $s_x^*$ achieve the supremum in \eqref{eqn:fl_trans}. Consider the following optimality condition for the convex optimization problem in \eqref{eqn:fl_trans}:
\begin{equation}
x- \nu_2'(s_x^*+y') = 0  . \label{eqn:optcon}
\end{equation}
Since $\nu_2$ is  assumed to be strictly convex, so $\nu_2'$ is strictly increasing. Furthermore, it has the property (cf.\ \eqref{eqn:prop_nu2}) that $\nu_2'(y') = x'$. Since $x'<x$, by continuity of $\nu_2'$, the optimal $s_x^*$ in \eqref{eqn:optcon} is positive. Since the strict convexity of $\nu_2$ means the same for $s\mapsto\xi( s;y')$, this implies that   $\xi^*(x;y')$   is positive.   So we conclude that 
\begin{equation}
\tilp_{n,\theta_0+\gamma_n y'} \big\{ \calE_2\big\}\le \exp(-\beta_n\gamma_n \tau_2) \label{eqn:converge_E2}
\end{equation}
for some $\tau_2>0$ and for all $n$ large enough. In a completely analogous way, we can show that 
\begin{align}
\limsup_{n\to\infty}\frac{1}{\beta_n\gamma_n} \log\tilp_{n,\theta_0+\gamma_n y'} \big\{ \calE_1 \big\} \le -  \sup_{s\le 0} \left\{ s(x-2\delta) -\xi(s;y')\right\} .
\end{align}
Defining
\begin{equation}
\tilde{\xi}^*(x-2\delta;y'):= \sup_{s\le 0} \left\{ s(x-2\delta)-\xi(s;y')\right\},\label{eqn:fl_trans_minus}
\end{equation}
and examining the optimality condition for $s$ in \eqref{eqn:fl_trans_minus}, we see that $\tilde{\xi}^*(x-2\delta;y')>0$ and so 
\begin{equation}
\tilp_{n,\theta_0+\gamma_n y'} \big\{ \calE_1\big\}\le \exp(-\beta_n\gamma_n \tau_1)  \label{eqn:converge_E1}
\end{equation}
for some $\tau_1>0$ and for all $n$ large enough. 
Consequently, we have 
\begin{equation}
\tilp_{n,\theta_0+\gamma_n y'} \big\{ \calD_{x,\delta}\big\}\ge 1- 2\exp(-\beta_n\gamma_n \tau ) \label{eqn:converge_D}
\end{equation}
where $\tau:=\min\{\tau_1, \tau_2\}>0$.

 {\em Step 3 (Considering Asymptotics)}: 
Now substituting \eqref{eqn:converge_D} back into \eqref{eqn:shift_alpha}, we obtain 
\begin{align}
 &\frac{1}{\beta_n}\left[ -\log p_n\big\{ \calD_{x,\delta} \big\} + \alpha_n-\beta_n(\theta_0 x-\nu_1)\right]\nn\\*
 &\qquad\le  -2\delta (\theta_0 +\gamma_n y') + \gamma_n\big( xy'-\nu_{2,n}(y')\big)- \frac{1}{\beta_n }\log \big(   1-2\exp(-\beta_n\gamma_n \tau)\big) .  \label{eqn:generic}
\end{align} 
Since $\beta_n\gamma_n\to\infty$, the final term vanishes. Consequently,  taking limits, we have 
\begin{equation}
\limsup_{n\to\infty} \frac{1}{\beta_n }\left[ -\log p_n\big\{ \calD_{x,\delta} \big\} + \alpha_n-\beta_n(\theta_0 x-\nu_1)\right]\le  -2\delta  \theta_0  . \label{eqn:delta_rem}
\end{equation}
Since $\delta>0$ is arbitrary, we may take $\delta\downarrow 0$ (also recall that $\theta_0<0$ so the term on the right-hand-side of \eqref{eqn:delta_rem}  is non-negative) to conclude that 
\begin{equation}
\limsup_{n\to\infty} \frac{1}{\beta_n }\left[ -\log p_n\big\{ \calD_{x,\delta} \big\} + \alpha_n-\beta_n(\theta_0 x-\nu_1)\right]\le  0 .
\end{equation}
This concludes the proof of the upper bound of \eqref{ge1_upper} for the Case (ib), i.e., $\beta_n\gamma_n\to\infty$.
\end{proof}

\subsection{Proof of Upper Bound of Case (ib)  in \eqref{ge1_upper}, i.e.,  $\beta_n=\gamma_n^{-1}$ } \label{sec:upper_consta}
\begin{proof}
Recall that $\theta_0<0$ and $\beta_n=\gamma_n^{-1}$ consequently $\beta_n\gamma_n=1$.  Define the tilted probability measure
\begin{align}
\tilp_n(\omega):=
p_{n}(\omega)\exp\big( (\theta_0+\gamma_n y_0)X_n (\omega)-\mu_n(\theta_0+\gamma_n y_0) \big). \label{eqn:tiltd}
\end{align}
Fix $s\in\bbR$.  Note from the strict convexity of $\nu_2$ that $\nu_2''(y_0)>0$ for all $y_0$.  Then, using $\bbE_{\tilp_n}[\cdot]$ to denote the expectation with respect to the distribution $\tilp_n$ in~\eqref{eqn:tiltd},  we have
\begin{align}
& \log 
\bbE_{\tilp_n}\left[\exp\bigg( \sqrt{\frac{\gamma_n}{\beta_n \nu_2''(y_0)}}s (X_n - x \beta_n ) \bigg) \right]
\nn\\
&=\log\int_\bbR \exp\bigg( \sqrt{\frac{\gamma_n}{\beta_n \nu_2''(y_0)}}s (X_n(\omega) - x \beta_n ) \bigg)\,\rmd \tilp_n(\omega) \label{eqn:chg_meas0} \\
&=\log\int_\bbR \exp\bigg( \sqrt{\frac{\gamma_n}{\beta_n \nu_2''(y_0)}}s (X_n(\omega) - x \beta_n ) \bigg)\exp\big( (\theta_0+\gamma_n y_0)X_n (\omega)-\mu_n(\theta_0+\gamma_n y_0) \big) \,\rmd p_{n }(\omega) \label{eqn:chg_meas} \\
&=
\mu_n \left(\theta_0+\gamma_n \bigg(y_0 +\sqrt{\frac{1}{\gamma_n \beta_n \nu_2''(y_0)}}s\bigg) \right)
-\mu_n(\theta_0+\gamma_n y_0)
-\sqrt{\frac{\gamma_n}{\beta_n \nu_2''(y_0)}} s x \beta_n \label{eqn:use_def_mu}\\
&=\nu_{2,n} \bigg(y_0 +\sqrt{\frac{1}{ \nu_2''(y_0)}}s\bigg)
- \nu_{2,n} (y_0)
-\sqrt{\frac{ 1}{\nu_2''(y_0)}} s x  \label{eqn:use_expansion_mu}\\
&= \nu_2  \bigg(y_0 +\sqrt{\frac{1}{  \nu_2''(y_0)}}s\bigg) - \nu_{2 } (y_0)+ o(1)  
-\sqrt{\frac{1 }{\nu_2''(y_0)}} s x   \label{eqn:converge_nu2} 
\end{align}
where 
\begin{enumerate}
\item \eqref{eqn:chg_meas} follows from the change of measure per \eqref{eqn:tiltd};
\item \eqref{eqn:use_def_mu} follows from the definition of  $\mu_n(\cdot)$ in \eqref{eqn:def_mu_n};
\item \eqref{eqn:use_expansion_mu} follows from the expansion of $\mu_n(\theta_0+ \gamma_n y)$ in \eqref{eqn:mu_n} and the fact that $\beta_n\gamma_n=1$;
  \item and \eqref{eqn:converge_nu2} follows from the pointwise convergence of $\nu_{2,n}$ to $\nu_2$ in \eqref{eqn:ptwise}.
\end{enumerate}
Hence, taking limits and noting that $x= \nu_2'(y_0)$ (cf.\ \eqref{eqn:nu2_pr}),  we obtain
\begin{equation}
\lim_{n\to\infty} \log 
\bbE_{\tilp_n}\left[\exp\bigg( \sqrt{\frac{\gamma_n}{\beta_n \nu_2''(y_0)}}s (X_n - x \beta_n ) \bigg) \right]= \nu_2  \bigg(y_0 +\sqrt{\frac{1}{  \nu_2''(y_0)}}s\bigg) - \nu_{2 } (y_0) 
- s \nu_2'(y_0)\sqrt{\frac{1 }{\nu_2''(y_0)}}.  
\end{equation}
Since we assumed that $\nu_2$ is  a quadratic function, the second-order Taylor approximation of $\nu_2$ at $y_0$ is exactly a quadratic so
\begin{equation} \label{eqn:conv_gauss}
\lim_{n\to\infty} \log 
\bbE_{\tilp_n}\left[\exp\bigg( \sqrt{\frac{\gamma_n}{\beta_n \nu_2''(y_0)}}s (X_n - x \beta_n ) \bigg) \right]= \frac{1}{2}s^2.
\end{equation}
The relation in \eqref{eqn:conv_gauss} says that the sequence of random variables $\sqrt{\frac{\gamma_n}{\beta_n \nu_2''(y_0)}} (X_n - x \beta_n )$ with corresponding distribution $\tilp_n$ has a sequence of  cumulant generating functions that converges pointwise  to the   quadratic function $\frac{1}{2}s^2$. Thus,  we can conclude that it converges in distribution to the standard Gaussian by L\'evi's continuity theorem \cite[Thm.~18.21]{Fristedt}.

Furthermore,  from~\eqref{eqn:tiltd}, we have 
\begin{align}
& p_n \left\{ X_n \le x {\beta_n}\right\}
\exp\big( -\mu_n(\theta_0+\gamma_n y_0) \big)
\exp\big(  (\theta_0 + \gamma_n y_0 )x \beta_n \big) \nn\\
&= 
\int_{\Omega}
\exp\big( - (\theta_0 + \gamma_n y_0 ) (X_n (\omega) - x \beta_n )\big)\bone\left\{ X_n (\omega) \le x {\beta_n} \right\} \, \rmd \tilp_n(\omega). \label{eqn:p_n_int}
\end{align}
Now, for notational brevity, define 
\begin{equation}
a_n  :=-(\theta_0 +\gamma_n y_0)\sqrt{ \frac{\beta_n\nu_2''(y_0) }{\gamma_n}}  .\label{eqn:defa}
\end{equation}
Because $\theta_0<0$, $\gamma_n\to 0$ and $\beta_n\to\infty$, we conclude that $a_n\ge 0$ for $n$ large enough and also $a_n\to \infty$. 

Let the probability measure corresponding to the random variable $\sqrt{\frac{\gamma_n }{\beta_n \nu_2''(y_0)}} (X_n - x \beta_n )$ with respect to the distribution $\tilp_n$  in \eqref{eqn:tiltd} be denoted as  $\bbP_n$. More precisely, for every Borel measurable set $\calE$, we have  the relation
\begin{equation}
\bbP_n(\calE) :=  \int_\Omega\, \bone\left\{ \sqrt{\frac{\gamma_n}{\beta_n \nu_2''(y_0)}} (X_n (\omega) - x \beta_n ) \in\calE \right\}  \, \rmd \tilp_n( \omega)   .
\end{equation}
This is the relation between the measures $\tilp_n$ and $\bbP_n$, and it is this change-of-measure step (from $\tilp_n$ to $\bbP_n$) that is crucial in this proof. 
 Note from the calculation leading to \eqref{eqn:conv_gauss} that $\bbP_n $ converges weakly to the standard Gaussian measure $\bbP(\calA) :=\int_{\calA}\varphi(w)\,\rmd w$. Thus, through a change of variables
\begin{equation}
  \omega\mapsto z = \sqrt{ \frac{\gamma_n}{\beta_n\nu_2''(y_0 )}}  (X_n(\omega)-x\beta_n ),
 \end{equation} 
 the quantity in \eqref{eqn:p_n_int} can be expressed as  the   integral
\begin{align}
\Psi_n := \int_{-\infty}^0 \exp(a_n z) \, \rmd\bbP_n( z).
\end{align}
We now provide upper and lower bounds for this integral to understand its asymptotic behavior. 
We have 
\begin{align}
\Psi_n  \le \int_{-\infty}^{0}\, \rmd\bbP_n( z) \le \frac{3}{4}\label{eqn:3quarter}  
\end{align}
where the last bound follows for all $n$ sufficiently large due to L\'evi's continuity theorem \cite[Thm.~18.21]{Fristedt}, i.e., $\bbP_n\to\bbP$ in distribution where $\bbP$ is a standard Gaussian distribution. Obviously, $\bbP( (\infty, 0]) =\frac{1}{2}$. 

To lower bound $\Psi_n$, we fix $\eps>0$ consider
\begin{align}
\Psi_n  &\ge  \int_{-\infty}^{- {\eps} }  \exp(a_n z) \, \rmd\bbP_n( z) \\
&\ge   \exp(-\eps  a_n) \int_{-\infty}^{- {\eps} }  \, \rmd\bbP_n( z) \label{eqn:lower_limt}  \\
&\ge   \frac{1}{4} \exp(-\eps  a_n)  \label{eqn:quarter}  
\end{align}
where  in \eqref{eqn:lower_limt} we substituted the upper limit into the integrand and in \eqref{eqn:quarter}  we assumed $\eps$ is chosen small enough so that $\bbP_n( (-\infty,-\eps])\ge\frac{1}{  4}$ for all $n$ large enough. 
In sum, \eqref{eqn:3quarter} and \eqref{eqn:quarter} yield
\begin{equation}
\Psi_n = O(1) \cap  \Omega\big( \exp(- \eps a_n) \big). \label{eqn:asympI}
\end{equation}
Recall from \eqref{eqn:p_n_int} that
\begin{equation}
 p_n \left\{ X_n \le x {\beta_n}\right\}
\exp\big( -\mu_n(\theta_0+\gamma_n y_0) \big)
\exp\big(  (\theta_0 + \gamma_n y_0 )x \beta_n \big) = \Psi_n.
\end{equation}
Therefore,
\begin{align}
 -\log  p_n \left\{ X_n \le x {\beta_n}\right\}  
&= 
-\mu_n(\theta_0+\gamma_n y_0)
+(\theta_0 + \gamma_n y_0 )x \beta_n 
 - \log \Psi_n\\
 &=
-\alpha_n -\beta_n \nu_1 - \beta_n \gamma_n \nu_{2,n}(y_0)
+(\theta_0 + \gamma_n y_0 )x \beta_n 
 - \log \Psi_n\\
 &=
-\alpha_n 
+\beta_n (\theta_0 x -\nu_1) 
+\beta_n \gamma_n (y_0 x -\nu_{2}(y_0) +o(1))
 - \log \Psi_n, \label{eqn:nu_conv}\\
  &=
-\alpha_n 
+\beta_n (\theta_0 x -\nu_1) 
+O(1) 
 - \log \Psi_n, \label{eqn:nu_conv2}
 \end{align}
 where  \eqref{eqn:nu_conv} holds from the pointwise convergence of $\nu_{2,n}$ to $\nu_2$ (cf.\ \eqref{eqn:ptwise}), and \eqref{eqn:nu_conv2} holds because $\beta_n\gamma_n=1$ and $y_0 x -\nu_{2}(y_0) = O(1)$.    Now using the   the asymptotic behavior of $\Psi_n$ in  \eqref{eqn:asympI}, we obtain
 \begin{equation}
-\kappa \le\left[ -\log  p_n \left\{ X_n \le x {\beta_n}\right\}   +\alpha_n-\beta_n (\theta_0 x -\nu_1) \right]   +O(1) \le \kappa\eps a_n
 \end{equation}
 for some  finite constant $\kappa> 0$.   Since  $\theta_0<0$, from the definition of $a_n$ in \eqref{eqn:defa}, we know that 
  $a_n$ is of the order  $O(\sqrt{\beta_n\gamma_n^{-1}})$. Additionally,  since $\eps>0$ is arbitrarily small,
 \begin{equation}
-\kappa\le\left[ -\log  p_n \left\{ X_n \le x {\beta_n}\right\}   +\alpha_n-\beta_n (\theta_0 x -\nu_1) \right]  +O(1)\le o\Big(\sqrt{\beta_n\gamma_n^{-1}} \Big). \label{eqn:order_a}
 \end{equation} 
 Now, recall that $\beta_n=\gamma_n^{-1}$. So $o\big(\sqrt{\beta_n\gamma_n^{-1}}\big)=o(\beta_n)$ and we have finished the proof of   the upper bound in \eqref{ge1_upper} for the case $\beta_n=\gamma_n^{-1}$  and $\nu_2$ being a quadratic. 
 \end{proof}
 
 \subsection{Proof of Upper Bound of Case (ii)  in \eqref{ge2} } \label{sec:prf_upper_caseii}
 
\begin{proof}
The proof of this case proceeds similarly to that in Appendix \ref{sec:upper_inf}. We only highlight the main differences here. The measure tilting step proceeds similarly with the exception that  $\theta=\gamma_n y'$ (cf.\ \eqref{eqn:def_the}). Because $\delta$ can be chosen arbitrarily small, $y'<0$ by continuity since $y_0<0$. Thus similarly to the proof in Appendix \ref{sec:upper_inf}, $\theta$ is a negative sequence.

Since $\theta_0=0$, \eqref{eqn:generic} reduces to 
\begin{equation}
 \frac{1}{\beta_n\gamma_n}\left[ -\log p_n\big\{ \calD_{x,\delta} \big\} + \alpha_n-\beta_n(\theta_0 x-\nu_1)\right]\le -2\delta y'  +  \big( xy'-\nu_{2,n}(y')\big) - \frac{1}{\beta_n \gamma_n}\log \big(   1-2\exp(-\beta_n\gamma_n b)\big) .  \label{eqn:generic2}
\end{equation}  
Since $\beta_n\gamma_n\to\infty$, the final term vanishes when we take limits, yielding
\begin{equation}
\limsup_{n\to\infty} \frac{1}{\beta_n\gamma_n}\left[ -\log p_n\big\{ \calD_{x,\delta} \big\} + \alpha_n-\beta_n(\theta_0 x-\nu_1)\right]\le -2\delta y'  +  \big( xy'-\nu_{2}(y')\big) .
\end{equation}
Now take $\delta\downarrow 0$, we obtain by the continuity of $\nu_2'$ that $y'\to y_0$. Thus,  we have
\begin{equation}
\limsup_{n\to\infty} \frac{1}{\beta_n\gamma_n}\left[ -\log p_n\big\{ \calD_{x,\delta} \big\} + \alpha_n-\beta_n(\theta_0 x-\nu_1)\right]\le  xy_0-\nu_{2}(y_0)
\end{equation}
as desired.
\end{proof}
\begin{remark} 
Observe from the above proof that we used two different techniques for the cases $\beta_n\gamma_n\to\infty$ (Appendices \ref{sec:upper_inf} and \ref{sec:prf_upper_caseii}) and $\beta_n=\gamma_n^{-1}$ (Appendix \ref{sec:upper_consta}). The former case follows essentially from the same steps as in the standard proof of the G\"artner-Ellis theorem in \cite[Theorem 2.3.6]{Dembo} (however, see Remark \ref{rmk:tilting}). The latter case cannot be handled using the technique for $\beta_n\gamma_n\to\infty$ because the final term  in \eqref{eqn:generic} fails to vanish with $\beta_n\gamma_n=\mathrm{const}$. Hence, we develop a novel technique based on the weak convergence of $\sqrt{\frac{\gamma_n}{\beta_n \nu_2''(y_0)}} (X_n - x \beta_n )$ (under the tilted measure $\tilp_n$) to handle the case where $\beta_n\gamma_n=\mathrm{const}$  (with the added assumption that $\nu_2$ is a quadratic). This technique may be of independent interest to other problems in probability theory. We note, though, that this technique based on weak convergence cannot be used to handle the case in which $\beta_n\gamma_n\to\infty$. This is because in this case, a careful examination of the steps from \eqref{eqn:chg_meas0} to \eqref{eqn:conv_gauss}  would show that this technique leads to an approximation of  the cumulant generating function $\log 
\bbE_{\tilp_n}\big[\exp\big( \sqrt{\frac{\gamma_n}{\beta_n \nu_2''(y_0)}}s (X_n - x \beta_n ) \big) \big]$ that is too coarse for our needs.
\end{remark}
\begin{remark}\label{rmk:tilting}
Readers familiar with the standard proof of the G\"artner-Ellis theorem in \cite[Theorem 2.3.6]{Dembo} would notice the subtle difference of the   proof  in Appendix \ref{sec:upper_inf} vis-\`a-vis the standard one. The titled measure is $\tilp_{n, \theta_0+\gamma_n y'}$ and $y'$ is defined in terms of $x'$ in \eqref{eqn:def_xp}. In contrast, in \cite[Theorem 2.3.6]{Dembo}, the tilting parameter is chosen to be a fixed exposed hyperplane \cite[Definition 2.3.3]{Dembo}  of the analogue of $x'$. Our tilting parameter, and hence also the tilting distribution, is allowed to vary with $n$.
\end{remark}

\section{Convergence in Distribution based on Convergence of  Cumulant Generating Functions} \label{app:conv}
\begin{lemma}\label{l625}
Let $\mu_n$ be a sequence of probability measures on $\mathbb{R}$.
Suppose that  for some $0<a<b$,
\begin{equation}
 \log\int_{\mathbb{R}}
\exp(sx)\, \mu_n(\rmd x) \to f(s):=\frac{s^2}{2},\qquad \forall\,  s\in (a,b).\label{eqn:sgre}
\end{equation}
Then,  $\mu_n$ converges (weakly) to the standard Gaussian  $\mu(\calA) =\int_\calA\varphi(w)\,\rmd w$.
\end{lemma}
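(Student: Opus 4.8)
The plan is to combine an exponential‑tilting reduction with the classical Curtiss / Mukherjea–Rao–Suen continuity theorem for moment generating functions, the point being that the hypothesis \eqref{eqn:sgre} only controls $\mu_n$ on a one‑sided interval $(a,b)\subset(0,\infty)$ that does \emph{not} contain the origin. Fix $s_0\in(a,b)$. Since $\int e^{sx}\,\mu_n(\rmd x)\to e^{s^2/2}$, in particular $\int e^{s_0 x}\,\mu_n(\rmd x)\to e^{s_0^2/2}>0$, so for all large $n$ I can form the tilted probability measures $\tilde\mu_n(\rmd x):=e^{s_0 x}\mu_n(\rmd x)\big/\int e^{s_0 y}\mu_n(\rmd y)$. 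A direct computation gives, for every $s$ with $s+s_0\in(a,b)$ — i.e.\ for $s$ in the open interval $(a-s_0,\,b-s_0)$, which now contains $0$ — that $\int e^{sx}\,\tilde\mu_n(\rmd x)=\dfrac{\int e^{(s+s_0)x}\mu_n(\rmd x)}{\int e^{s_0 x}\mu_n(\rmd x)}\to e^{(s+s_0)^2/2-s_0^2/2}=e^{s_0 s+s^2/2}$, which is the m.g.f.\ of the Gaussian law $\calN(s_0,1)$. By Curtiss' theorem \cite{curtiss} (equivalently \cite[Thm.~2]{muk06}), pointwise convergence of the m.g.f.'s on a neighbourhood of $0$ to the m.g.f.\ of a bona fide distribution yields $\tilde\mu_n\Rightarrow\calN(s_0,1)$.

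The hard part will be the final \emph{un‑tilting} step, because weak convergence of $\tilde\mu_n$ does not by itself recover weak convergence of $\mu_n=\big(\int e^{s_0 y}\mu_n(\rmd y)\big)e^{-s_0 x}\,\tilde\mu_n(\rmd x)$: the factor $e^{-s_0 x}$ blows up as $x\to-\infty$ and the left tail of $\mu_n$ is essentially invisible to positive‑parameter m.g.f.'s. I would therefore first prove that $\{\mu_n\}$ is tight. The right tail is immediate from Markov's inequality, $\mu_n(\{x\ge K\})\le e^{-s_0 K}\int e^{s_0 x}\mu_n(\rmd x)$, which is uniformly small. For the left tail I argue by contradiction: if a fraction $\eta>0$ of the mass of $\mu_n$ escaped to $-\infty$ along a subsequence, then (along that subsequence) the sub‑probability measures $\nu_n$ obtained by restricting $\mu_n$ to $[-r_n,\infty)$, with $r_n\to\infty$, would satisfy $\nu_n(\bbR)\le 1-\eta$ while still $\int e^{sx}\,\nu_n(\rmd x)\to e^{s^2/2}$ for $s\in(a,b)$ (the discarded part contributes at most $e^{-s r_n}\to 0$). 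Tilting $\nu_n$ exactly as above and invoking Curtiss' theorem gives probability measures $\hat\nu_n\Rightarrow\calN(s_0,1)$, and then the portmanteau/Fatou inequality for the nonnegative continuous function $e^{-s_0 x}$ yields
\[
\liminf_n \nu_n(\bbR)=\liminf_n\Big(\textstyle\int e^{s_0 y}\nu_n(\rmd y)\Big)\Big(\textstyle\int e^{-s_0 x}\hat\nu_n(\rmd x)\Big)\ \ge\ e^{s_0^2/2}\cdot e^{-s_0^2/2}=1,
\]
contradicting $\nu_n(\bbR)\le 1-\eta$. Hence $\{\mu_n\}$ is tight.

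Given tightness I would finish with a subsequence argument. By Prokhorov's theorem, every subsequence of $\{\mu_n\}$ has a further subsequence $\mu_{n_k}$ converging weakly to some probability measure $\rho$. For $s\in(a,b)$, choosing $s'\in(s,b)$ the bound $\sup_k\int e^{s'x}\mu_{n_k}(\rmd x)<\infty$ gives uniform integrability of $\{e^{sX_k}\}$, so the m.g.f.\ passes to the limit and $\int e^{sx}\,\rho(\rmd x)=e^{s^2/2}$ for all $s\in(a,b)$. Tilting $\rho$ by $e^{s_0 x}/e^{s_0^2/2}$ produces a probability measure whose m.g.f.\ equals $e^{s_0 s+s^2/2}$ on a neighbourhood of $0$; since a law is uniquely determined by an m.g.f.\ that is finite near the origin, this tilted measure is $\calN(s_0,1)$, and un‑tilting (a one‑line Gaussian density computation) gives $\rho=\mu$, the standard Gaussian. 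As the subsequential limit is always $\mu$, tightness forces $\mu_n\Rightarrow\mu$. The only genuinely delicate point is the tightness of the left tail just described; everything else is routine bookkeeping around exponential tilting and the classical m.g.f.\ continuity theorem, and in fact one may instead simply verify the hypotheses of \cite[Thm.~2]{muk06} and quote it directly.
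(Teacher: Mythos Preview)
Your proposal is correct and follows essentially the same route as the paper, which simply cites \cite[Thm.~2]{muk06} (Mukherjea--Rao--Suen) for the proof; your argument is precisely a careful unwinding of that result---exponential tilting by some $s_0\in(a,b)$ to move the interval of convergence to a neighbourhood of the origin, Curtiss' theorem to obtain weak convergence of the tilted measures to $\calN(s_0,1)$, and then an un-tilting step. Your explicit treatment of left-tail tightness via the Fatou/portmanteau contradiction is exactly the delicate point that the paper's (commented-out) sketch glosses over, and your subsequence identification of the limit is sound, so there is nothing to add.
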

Notice that in \eqref{eqn:sgre}, the assumption pertains only to $s$ in the open interval $(a,b)$. In particular, it is not assumed that the convergence holds for all  $s\in\bbR$, in which case convergence of $p_n$ to the standard normal is an elementary fact (cf.~L\'evi's continuity theorem \cite[Thm.~18.21]{Fristedt}).  

See Mukherjea {\em et al.} \cite[Thm.~2]{muk06} for the proof of  Lemma \ref{l625}. 

\section{A Basic Concentration Bound} \label{app:conc}
\begin{lemma}\label{l6-21}
Let $X_1, \ldots, X_L$ be independent random variables, each distributed according to the uniform distribution
on $\{1, \ldots, M_1\}$.
We fix a subset $\calA \subset \{1, \ldots, M_1\}$ whose cardinality is $M_2$.
We denote the random number $|\{ i  \in \{1,\ldots, L\}: X_i \in \calA \}|$ by $N$.
For every $s>0$,
\begin{align}
\bbE [N^s ] \ge\left\lfloor \frac{LM_2}{M_1}(1-\epsilon)\right\rfloor^s
\left[ 1- \exp\bigg(- L\,   \frac{M_2}{2M_1}\epsilon^2\bigg) \right] \label{eqn:conc_ineq}
\end{align}
where    $0<\epsilon <1$ is also an arbitrary number. 

\end{lemma}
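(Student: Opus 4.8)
The plan is to recognize $N$ as a binomial random variable and to bound $\bbE[N^s]$ from below by discarding the portion of the expectation on which $N$ falls appreciably below its mean. Since $X_1,\dots,X_L$ are independent and uniformly distributed on $\{1,\dots,M_1\}$, the indicators $\bone\{X_i\in\calA\}$ are i.i.d.\ Bernoulli with success probability $p:=M_2/M_1$, so $N=\sum_{i=1}^{L}\bone\{X_i\in\calA\}$ is Binomial$(L,p)$ with mean $\mu:=\bbE[N]=Lp=LM_2/M_1$. Because $x\mapsto x^s$ is nondecreasing on $[0,\infty)$ for $s>0$, for any integer threshold $\tau\ge 0$ we have $\bbE[N^s]\ge\bbE\big[N^s\bone\{N\ge\tau\}\big]\ge\tau^s\,\Pr(N\ge\tau)$. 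Choosing $\tau=\floor{(1-\epsilon)\mu}$, the claim \eqref{eqn:conc_ineq} reduces to the lower-tail estimate $\Pr\big(N\ge\floor{(1-\epsilon)\mu}\big)\ge 1-\exp(-\mu\epsilon^2/2)$.

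For this estimate I would invoke the multiplicative Chernoff bound for the lower tail of a sum of independent Bernoulli variables, $\Pr\big(N\le(1-\epsilon)\mu\big)\le\exp(-\mu\epsilon^2/2)$ for $\epsilon\in(0,1)$. To keep the appendix self-contained one derives it from the exponential Markov inequality: for $\lambda>0$, $\Pr(N\le(1-\epsilon)\mu)\le e^{\lambda(1-\epsilon)\mu}\,\bbE[e^{-\lambda N}]=\exp\big(\mu[\lambda(1-\epsilon)+e^{-\lambda}-1]\big)$, where the last step uses $1-p+pe^{-\lambda}\le\exp(p(e^{-\lambda}-1))$; optimizing at $\lambda=\log\frac{1}{1-\epsilon}$ yields the exponent $-\mu[\epsilon+(1-\epsilon)\log(1-\epsilon)]$, and the elementary inequality $\epsilon+(1-\epsilon)\log(1-\epsilon)\ge\epsilon^2/2$ on $(0,1)$ closes it. Since $N$ is integer-valued and $\floor{(1-\epsilon)\mu}\le(1-\epsilon)\mu$, the event $\{N<\floor{(1-\epsilon)\mu}\}$ is contained in $\{N\le(1-\epsilon)\mu\}$, so $\Pr\big(N\ge\floor{(1-\epsilon)\mu}\big)\ge 1-\Pr\big(N\le(1-\epsilon)\mu\big)\ge 1-\exp(-\mu\epsilon^2/2)$.

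Combining the two displays and substituting $\mu=LM_2/M_1$ gives $\bbE[N^s]\ge\floor{(1-\epsilon)LM_2/M_1}^{\,s}\big(1-\exp(-L\tfrac{M_2}{2M_1}\epsilon^2)\big)$, which is exactly \eqref{eqn:conc_ineq}. I do not anticipate a genuine obstacle here: the argument is a routine pairing of the monotonicity of the power function with a textbook concentration inequality. The only point that requires a modicum of care is the floor bookkeeping --- making sure that the integer threshold $\floor{(1-\epsilon)\mu}$ appearing inside the truncated expectation coincides with the quantity controlled by the Chernoff bound, which is precisely what the event containment above arranges. If one prefers not to reproduce the Chernoff derivation, the multiplicative bound may simply be cited from a standard reference, and the proof collapses to a couple of lines.
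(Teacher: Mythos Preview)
Your proposal is correct and follows essentially the same approach as the paper: truncate the expectation at the threshold $\lfloor(1-\epsilon)\mu\rfloor$, use monotonicity of $x\mapsto x^s$, and control the lower tail via the multiplicative Chernoff bound (which the paper simply cites from a standard reference rather than deriving). Your floor bookkeeping is slightly more explicit than the paper's, but the structure is identical.
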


\begin{proof}
By straightforward calculations, we have
\begin{align}
\bbE [N^s ] &= \sum_{l=0}^{L} l^s \Pr(N = l ) \\ 
&\ge \sum_{l \ge LM_2(1-\epsilon)/M_1 } l^s \Pr(N = l ) \\ 
&\ge\left\lfloor\frac{LM_2}{M_1}(1-\epsilon) \right\rfloor^s\Pr \left(N \ge  \frac{LM_2(1-\epsilon)}{M_1}  \right)\\
& =\left\lfloor\frac{LM_2}{M_1}(1-\epsilon) \right\rfloor^s\left[1-\Pr \left( \frac{N}{L} <  (1-\epsilon) \frac{ M_2}{M_1}  \right)\right]  . \label{eqn:poisson_trials}
\end{align}
Now since the event in  probability in \eqref{eqn:poisson_trials} implies that  the relative frequency of the number of events $\{X_i \in \calA\},i = 1,\ldots, L$ is less than $1-\epsilon$ multiplied by the mean $\bbE[ \bone\{ X_i \in\calA\} ]=M_2/M_1$ of each indicator $\bone\{ X_i \in\calA\} $, we can invoke the Chernoff bound   for  independent Bernoulli trials (e.g.,~\cite[Thm.~4.5]{mitz}) to conclude that 
\begin{equation}
\Pr \left( \frac{N}{L} <  (1-\epsilon)\frac{ M_2}{M_1}  \right) = \Pr\left( \frac{1}{L} \sum_{i=1}^L \bone\{ X_i\in\calA\} < (1-\epsilon)\frac{ M_2}{M_1}  \right)\le\exp\left( -L\frac{M_2}{2M_1}\epsilon^2\right).\label{eqn:trials}
\end{equation}
Combining this with \eqref{eqn:poisson_trials} concludes the proof. 
\end{proof}

\subsubsection*{Acknowledgements}   
The authors would like to acknowledge the Associate Editor
(Prof.\ Aaron Wagner) and the anonymous reviewers for their
extensive and useful comments during the revision process.

MH is grateful for Prof.\ Nobuo Yoshida  for clarifying Lemma \ref{l625}.  VT is grateful to Prof.\ Rongfeng Sun   for clarifying the same lemma. 

\bibliographystyle{unsrt}
\bibliography{isitbib}

\begin{IEEEbiographynophoto}{Masahito Hayashi}(M'06--SM'13) was born in Japan in 1971.
He received the B.S. degree from the Faculty of Sciences in Kyoto
University, Japan, in 1994
and the M.S. and Ph.D. degrees in Mathematics from
Kyoto University, Japan, in 1996 and 1999, respectively.

He worked in Kyoto University as a Research Fellow of the Japan Society of the
Promotion of Science (JSPS) from 1998 to 2000,
and worked in the
Laboratory for Mathematical Neuroscience,
Brain Science Institute, RIKEN from 2000 to 2003,
and worked in ERATO Quantum Computation and Information Project,
Japan Science and Technology Agency (JST) as the Research Head from 2000 to 2006.
He also worked in the Superrobust Computation Project Information Science and Technology Strategic Core (21st Century COE by MEXT) Graduate School of Information Science and Technology, The University of Tokyo as Adjunct Associate Professor from 2004 to 2007.
In 2006, he published the book ``Quantum Information: An Introduction'' from Springer.
He worked in the Graduate School of Information Sciences, Tohoku University as Associate Professor from 2007 to 2012.
In 2012, he joined the Graduate School of Mathematics, Nagoya University as Professor.
He also worked in Centre for Quantum Technologies, National University of Singapore as Visiting Research Associate Professor from 2009 to 2012
and as Visiting Research Professor from 2012 to now.
In 2011, he received Information Theory Society Paper Award (2011) for ``Information-Spectrum Approach to Second-Order Coding Rate in Channel Coding''.

He is on the Editorial Board of {\it International Journal of Quantum Information}
and {\it International Journal On Advances in Security}.
His research interests include classical and quantum information theory and classical and quantum statistical inference.
\end{IEEEbiographynophoto}

\begin{IEEEbiographynophoto}{Vincent Y. F. Tan}
(S'07-M'11-SM'15)  was born in Singapore in 1981. He is currently an Assistant Professor in
the Department of Electrical and Computer Engineering (ECE) and the
Department of Mathematics at the National University of Singapore (NUS).
He received the B.A.\ and M.Eng.\ degrees in Electrical and Information
Sciences from Cambridge University in 2005 and the Ph.D.\ degree in
Electrical Engineering and Computer Science (EECS) from the Massachusetts
Institute of Technology in 2011. He was a postdoctoral researcher in the
Department of ECE at the University of Wisconsin-Madison and a research
scientist at the Institute for Infocomm (I$^2$R) Research, A*STAR, Singapore.
His research interests include information theory, machine learning and signal
processing.

Dr.\ Tan received the MIT EECS Jin-Au Kong outstanding doctoral thesis
prize in 2011 and the NUS Young Investigator Award in 2014.  He has authored a research monograph on {\em ``Asymptotic Estimates in Information Theory with Non-Vanishing Error Probabilities''} in the Foundations and Trends  in Communications and Information Theory Series (NOW Publishers).  He is currently 
an Associate Editor of the IEEE Transactions  on Communications. \end{IEEEbiographynophoto}

\end{document}